\newtheorem{lemma}{Lemma}
\newtheorem{proposition}{Proposition}
\newtheorem{corollary}{Corollary}
\begin{document}
%
\title{To Help or Disturb: Introduction of Crowdsourced WiFi to 5G Networks}
%
%
%
%

\author{Shugang~Hao,
        and~Lingjie~Duan,~\IEEEmembership{Senior Member,~IEEE}
\IEEEcompsocitemizethanks{\IEEEcompsocthanksitem S. Hao and L. Duan are with the Pillar of Engineering Systems and Design, Singapore University of Technology and Design, Singapore, 487372 Singapore.\protect\\
E-mail: shugang\_hao@mymail.sutd.edu.sg, lingjie\_duan@sutd.edu.sg.}
\thanks{Lingjie Duan was
supported by the Ministry of Education, Singapore, under its Academic Research Fund Tier
2 Grant (Project No. MOE-T2EP20121-0001).}}

\IEEEtitleabstractindextext{%
\begin{abstract}
After upgrading to 5G, a network operator still faces congestion when providing the ubiquitous wireless service to the crowd. To meet users' ever-increasing demand, some other operators (e.g., Fon) have been developing another crowdsourced WiFi network to combine many users' home WiFi access points and provide enlarged WiFi coverage to them. While the 5G network experiences negative network externality, the crowdsourced WiFi network helps offload traffic from 5G and its service coverage exhibits positive externality with its subscription number. To our best knowledge, we are the first to investigate how these two heterogeneous networks of diverse network externalities co-exist from an economic perspective. We propose a dynamic game theoretic model to analyze the hybrid interaction among the 5G operator, the crowdsourced WiFi operator, and users. Our user choice model with WiFi's complementarity for 5G allows users to choose both services, departing from the traditional economics literature where a user chooses one over another alternative. Despite of non-convexity of the operators’ pricing problems, we prove that the 5G operator facing severe congestion may purposely lower his price to encourage users to add-on WiFi to offload, and he benefits from the introduction of crowdsourced WiFi. However, 5G operator with mild congestion tends to charge users more and all the users' payoffs may decrease.
\end{abstract}

\begin{IEEEkeywords}
Network economics, dynamic game theory, crowdsourced WiFi network, negative and positive network externalities.  
\end{IEEEkeywords}}

\maketitle

\IEEEdisplaynontitleabstractindextext

%
\IEEEpeerreviewmaketitle

\IEEEraisesectionheading{\section{Introduction}\label{section: TMC-1}}

With rapid proliferation of mobile devices and wireless technology, mobile data traffic has been ever-increasing at an explosive speed. Though major network operators have upgraded their cellular networks to 5G, the network capacity still grows at a pace much slower than the data traffic demand {(e.g., \cite{cc, fk})}.  As compared to 5G, WiFi is a mature short-range wireless technology to control congestion and is easy to enable with relatively low deployment cost. Yet, an individual WiFi access point (AP) has small service coverage (even supported by the latest gigabit WiFi amendments in 802.11ac/ad/ax), and it is difficult to deploy a ubiquitous WiFi network due to the formidably high cost to cover every corner\cite{kosek2017coexistence}.

Recently, the crowdsourced WiFi community network has emerged as a promising approach to address the small coverage problem for WiFi. Non-traditional service providers such as Fon and OpenSpark have been developing another WiFi social community network to combine many users' home WiFi access points and provide enlarged WiFi coverage to these users.  In this crowdsourced WiFi network, users following equal-reciprocal principle are motivated to share their home APs with each other and enjoy accessing to the others' APs when roaming around. For example, Fon's WiFi network has included over 23 million APs and is fast expanding in the world to cover many populous and crowded places, where the 5G network faces congestion in such places to serve the crowd especially in rush hours. Note that though greatly enlarged by crowdsourcing separate home APs, the crowdsourced WiFi community network's coverage is still not comparable to the ubiquitous 5G coverage, and is thus viewed as a complement as backhaul to the 5G service instead of a substitutable alternative.

The story of co-existence between 5G and crowdsourced WiFi networks is happening in many countries and places, e.g., between British Telecom and Fon in the United Kingdom, and between Telenor and OpenSpark in Finland (e.g., \cite{ballon2009business, iosifidis2017efficient}). When the negative 5G network externality meets the crowdsourced WiFi's positive externality, we wonder \textit{whether the 5G network operator gains from the introduction of the crowdsourced WiFi network by another self-interested company, and how the 5G operator should adapt his service operation especially his service pricing strategy.} We also wonder the impact on the users' payoffs given another WiFi service option with complementarity for 5G.

To answer such questions from an economic perspective, there are two main technical challenges to address. 
\begin{itemize}
    \item \emph{User choice model with WiFi's complementarity for 5G:} We first need to model and analyze a new user choice model with complementarity given the service options of 5G and 5G+WiFi. This model departs from the traditional economics literature for competing products or services \cite{mas1995microeconomic}. There people think of differentiated substitutes and most users choose one over another from a set of alternatives. Our unique model is somewhat related to the recent literature on complementarity or vertical differentiation (e.g., \cite{armstrong2007recent, li2021impact}). Yet the choice models there are too generic to fit any wireless or network feature, and the service qualities there are fixed without considering any interplay between the incumbent and add-on services' users. Differently, in our problem a user to add on WiFi service helps reduce 5G service congestion and improve the WiFi service coverage.   
    
    \item \emph{When negative meets positive network externality for interactive service operation:} After  considering the users' choice model with complementarity, the 5G and crowdsourced network operators need to decide their economic decisions to maximize their own profits, respectively. Such interactive decision-making becomes involved to analyze when negative meets positive network externality. For example, the 5G operator may purposely lower his service price to motivate more users to add on WiFi service for offloading cellular traffic and reducing his network congestion. This should also be taken into account for the crowdsourced WiFi operator's strategic decision-making. There are few studies in the literature to consider both the positive and negative network externalities (e.g., \cite{gong2017social, xiong2018competition, wang2018mobile}). Yet \cite{gong2017social, xiong2018competition, wang2018mobile} only focused on a single network's central decision-making to balance these two effects.
    
\end{itemize}

 Similarly, existing works on WiFi offloading  largely assume that a cellular network operator deploys his own WiFi network to offload data traffic (e.g., \cite{li2018mobile, iosifidis2014double, lee2014economics}) and focuses on central operation. Yet in practice, we observe the emerging trend of crowdsourced WiFi networks (e.g., Fon) and their operations such as pricing and deployment are not controlled by the 5G operator. Further, unlike traditional WiFi, this crowdsourced WiFi network exhibits positive network externality and its network coverage increases as more users join and share their APs. Note that \cite{gao2014bargaining} only studied how a 5G operator bargains with a traditional WiFi network to decide the offloaded traffic amount and associated payment, by ignoring the crowdsouced WiFi's positive network externality and the users' choices on both services. \cite{manshaei2008wireless, li2020optimal} ignored the negative network externality of the 5G network due to congestion.

Our key novelty and main contributions are summarized as follows:
\begin{itemize}
    \item \emph{To help or disturb: introduction of crowdsourced WiFi to 5G networks:} To our best knowledge, we are the first to investigate how the 5G and crowdsourced networks of diverse network externalities co-exist from an economic perspective. While the incumbent 5G network experiences negative network externality, the crowdsourced WiFi network helps offload heavy traffic from 5G and exhibits positive externality: its service coverage improves as more users join and contribute. We present new analytical studies to tell whether the 5G network operator and users benefit from the introduction of the crowdsourced WiFi network by another self-interested company, and how the 5G operator should best adapt his pricing in service operation.

    \item \emph{Dynamic game-theoretical modelling and analysis:} we propose a dynamic game theoretic model to analyze the hybrid interaction among the 5G operator, the (crowdsourced) WiFi operator, and users: in the first stage the 5G and WiFi operators respectively  decide their service prices for maximizing their own profits according to the network capacity and deployment cost, and in the second stage users of different congestion sensitivities decide to subscribe to 5G, 5G+WiFi or neither. Note that our user choice model with complementarity  departs from the traditional economics literature \cite{mas1995microeconomic}, where a user chooses one over another from a set of alternatives. Here a user’s choice in a service affects all the other users in both services, and it becomes more involved to analyze in the backward induction. 
    
        \item \emph{Impacts on 5G service operation and user payoff:} The
analysis of the dynamic game theoretic model is involved due to solving high-degree polynomial equations for users' choice partition and the non-convexity of the operators' pricing problems. Despite of this, we successfully prove at the equilibrium that the 5G operator gains more profit after the introduction of the crowdsourced WiFi network. If the 5G operator has non-small capacity, we prove that he will charge users more in 5G service after the introduction of the crowdsourced WiFi. Otherwise, he will purposely charge users less to motivate more users to add-on WiFi and help reduce 5G congestion. 
 Perhaps surprisingly, we prove that the introduction of crowdsourced WiFi can reduce all the users' payoffs, which happens when the 5G operator has non-small network capacity. Finally, we further consider the congestion effect in the WiFi network and show the main results still hold. 
   
    \end{itemize} 
    
The rest of this paper is organized as follows. In Section~\ref{section:TMC-2}, we introduce system models and dynamic game formulation before and after the introduction of the crowdsourced WiFi. In Section~\ref{section:TMC-3}, we analyze equilibrium of the dynamic game before the introduction of the crowdsourced WiFi. In Sections~\ref{section:TMC-4} and \ref{section:TMC-5}, we analyze the user subscription equilibrium in Stage II and then the 5G and WiFi operators' pricing after the introduction of the crowdsourced WiFi. Section~\ref{section:TMC-6} presents numerical results to further analyze the impacts of the crowdsouced WiFi. Section~\ref{section:TMC-7} extends our model and analysis to the case with congestion effect in the crowdsouced WiFi. Section~\ref{section:TMC-8} finally concludes the paper.

\begin{figure}[!t]
\centering
\includegraphics[scale = 0.38]{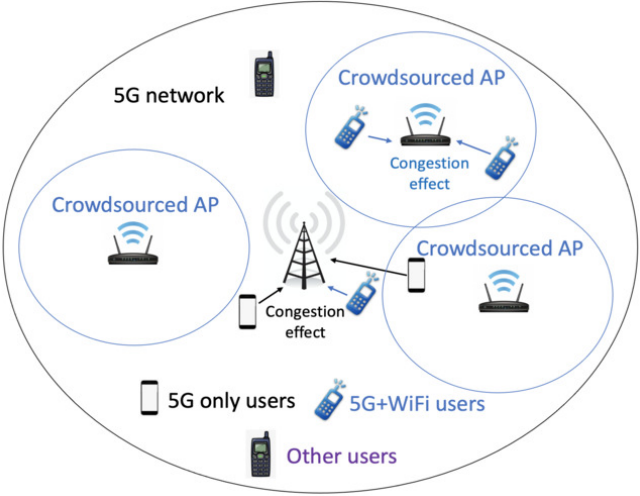}
\caption{System model for the co-existence of the 5G and crowdsourced WiFi networks, where users of different congestion sensitivities decide to subscribe to 5G, 5G+WiFi, or neither. The crowdsourced WiFi network's coverage enlarges as more users choose to add on this service to their 5G contracts. }
\label{fig}
\end{figure}
\section{System Model and Problem Formulation}\label{section:TMC-2}

When the 5G co-exists with the crowdsourced WiFi network, Fig. 1 shows that users are generally partitioned into three groups: 5G only users, 5G+WiFi users, and others with reservation payoff $\bar{u}\geq 0$. The reservation payoff is obtained by choosing nothing or 
 under another benchmark network not comparable to the WiFi service. In this illustrative example, the three users marked in blue (e.g., with high sensitivity to 5G network congestion) choose to add on the WiFi option and share their home APs to each other in the Fon's crowdsourced network.  Each subscriber to either 5G or 5G+WiFi service should have greater payoff than $\bar{u}$. Motivated by the practice, a user will not choose WiFi service only as its coverage (even under all APs' contributions) is small as compared to the ubiquitous coverage by 5G\footnote{City of London WIFI Access Point Locations, https://www.cityoflondon.gov.uk/assets/Business/wifi-access-point-locations-2019.pdf.}. 

In this section, for ease of presentation and comparison,  we respectively present the system models before and after the introduction of  the crowdsourced WiFi to the wireless market. The former case is just used as a benchmark to compare with the latter.

\subsection{Benchmark Case Before The Introduction Of The Crowdsourced WiFi }\label{section:TMC-2.1}
We consider $N$ users in total as potential subscribers to any wireless service. Before the introduction of the crowdsourced WiFi, only the 5G network operator provides wireless service to users at price $\bar{p}_1$ (e.g. monthly), and he has limited network capacity $Q$ to serve the users. We denote the users' proportion or fraction of 5G subscription as $\bar{x}_1\in[0,1]$ and the total subscriber number is $N\bar{x}_1$, depending on price $\bar{p}_1$ and the user choice model as detailed later. Thus, we can rewrite the subscription fraction as a function of price, i.e., $\bar{x}_1=\bar{x}_1(\bar{p}_1)$, and the 5G operator's profit is:   
\begin{equation}
\bar{\pi}_1(\bar{p}_1) = N\bar{x}_1(\bar{p}_1) \bar{p}_1, \label{e2}
\end{equation}
which is a product of the 
number of subscribers and the price per subscriber.

We model the practice that users have different sensitivities to the 5G congestion, and the 5G operator only knows that  each user's sensitivity degree $\theta$, as in the related literature (e.g., \cite{gibbens2000internet, tran2013optimal, ren2012entry}). Generally, $\theta$ in the normalized range [0,1] follows continuous cumulative distribution function (CDF) $F(\cdot)$, where $F(0) = 0$ and $F(1) = 1$. {A user with higher $\theta$ travels more in the populous areas (e.g., shopping malls in downtown) to incur greater congestion cost, and is more reluctant to join.} After paying price $\bar{p}_1$ to join the network, the user obtains the normalized positive value $V_1$ (e.g., for mobile Internet access) and experiences a congestion cost. Following \cite{gibbens2000internet}, we model the user's congestion cost as $\frac{N\bar{x}_1}{Q}\theta$, which increases with his sensitivity $\theta$ and the total subscriber number $N\bar{x}_1$ and decreases with the limited 5G capacity $Q$ to share {(due to collisions and conflicts between signals in using the limited capacity)}. Thus, the user's payoff of 5G subscription is given by:   
\begin{equation}
    \bar{u}_1(\theta) = V_1 - \frac{N\bar{x}_1}{Q}\theta - \bar{p}_1. \label{e1}
\end{equation}
From \eqref{e1}, we can tell that only users with not large congestion sensitivity $\theta$ have $\bar{u}_1(\theta) \geq \bar{u}$ and choose to join the 5G network. 

{ In practice, the network operators have more power to lead as compared to the users as followers \cite{mas1995microeconomic}. Following the classic economic model of service pricing (e.g., \cite{gibbens2000internet, manshaei2008wireless, tran2013optimal, ren2012entry})}, we then formulate a Stackelberg game between the 5G operator and the $N$ users as follows:
\begin{itemize}
    \item Stage I: The 5G network operator strategically  decides and announces subscription fee $\bar{p}_1$ to $N$ potential users.
    \item Stage II: After observing the price, each user  with personalized  congestion sensitivity $\theta$ chooses to join the 5G network or not, depending on  whether his payoff in \eqref{e1} is no less than the fixed reservation payoff (i.e.,  $\bar{u}_1(\theta) \geq \bar{u}$) or not.  
\end{itemize}

We will analyze this benchmark case using backward induction  in Section~\ref{section:TMC-3}, and the equilibrium results will be compared to those based on the next subsection's model.

\subsection{System Model After The Introduction Of The Crowdsourced WiFi}\label{section:TMC-2.2}
As shown in Fig. 1, with existence of the crowdsourced WiFi, users with access to 5G can further choose to add on the crowdsourced WiFi service or not. 

We denote new $x_1 \in [0, 1]$ as the fraction of $N$ users joining the 5G network only,  and $x_2 \in [0, 1]$ as the fraction of joining both 5G+WiFi. For an individual user choosing to join the crowdsourced WiFi network, he contributes a normalized positive addition $\alpha \in (0,1)$ to the network coverage using his home AP and thus the overall coverage of the crowdsourced WiFi network is $\alpha x_2$ to tell the positive externality,  which is also validated in \cite{wang2018mobile} and \cite{manshaei2008wireless}. In the current service practice of Fon company, any user can easily check the latest network coverage to access\footnote{https://fon.com/maps/.}.

Different from the benchmark case before the introduction of the crowdsourced WiFi, now the 5G congestion is also affected by the fraction $x_2$ in the crowdsourced WiFi network to possibly offload the 5G traffic. Each 5G+WiFi user finds himself in the WiFi coverage with probability $\alpha x_2$ to use the congestion-free WiFi, and will only choose the 5G service outside with the rest probability $1-\alpha x_2$. Unlike \eqref{e1}, now only $x_1 + x_2 (1 - \alpha x_2)$ fraction of users will share the limited capacity $Q$ and contribute to the negative congestion effect in the 5G network. Next, we introduce the user choice model with complementarity.  

On one hand, by choosing 5G only, a user with congestion sensitivity $\theta$ pays price $p_1$ to obtain the mobile access benefit $V_1$. No matter inside or outside the WiFi network coverage, he faces a congestion cost   $\frac{N(x_1 + x_2 (1 - \alpha x_2)  )}{Q}\theta$ all the time. This cost term is increasing in $x_1$ but may not be monotonic in $x_2$, as an increase of $x_2$ enlarges the WiFi network coverage but adds more 5G traffic outside the WiFi coverage. Then we can write down the user's payoff of 5G subscription only as 
\begin{equation}
    u_1(\theta) = V_1 - \frac{N(x_1+x_2(1 - \alpha x_2))}{Q}\theta - p_1. \label{e3}
\end{equation}

On the other hand, { given the add-on option of WiFi to his 5G service, the user can churn to WiFi service when facing severe 5G congestion.} He  {has benefit $V_1$ outside the WiFi coverage with probability $1-\alpha x_2$, and gains $V_2$ inside the WiFi coverage with probability $\alpha x_2$, where $V_1 \geq V_2$ implies 5G provides more benefits.} Besides, he needs to pay extra $p_2$ to the WiFi operator, but does not face 5G congestion in the WiFi coverage with probability $\alpha x_2$.  Thus, his payoff becomes: 
\begin{align}
    u_2(\theta) = \ &(1-\alpha x_2)V_1 + \alpha x_2 V_2 - p_1 - p_2 \nonumber \\
                  & - (1 -\alpha x_2)\frac{N(x_1+x_2(1 - \alpha x_2))}{Q}\theta, \label{e4}
\end{align}

Here we do not consider the congestion in the WiFi network, since the ISM band is enough to support local users with tolerable congestion and many APs remain under-utilized most of the time \cite{duan2014pricing}. 
We will still extend our model to consider WiFi congestion in Section~\ref{section:TMC-7}, where our main results hold as long as the WiFi congestion is not as large as the 5G congestion. 

{ The user compares his payoffs (in \eqref{e3}, \eqref{e4} and $\bar{u}$) under the three service options (5G only, 5G+WiFi, and neither), and optimally chooses the one that yields the maximum payoff. This is similar to the literature of discrete service choices (e.g., \cite{gibbens2000internet, manshaei2008wireless, tran2013optimal, ren2012entry}).} Unlike \eqref{e1}, now it is less straightforward to  compare \eqref{e3} and \eqref{e4} for users' choices under different $\theta$ realizations. One can imagine that those with trivial congestion sensitivity do not want to add on the WiFi option for saving expense, and those sensitive users may want to add on to enjoy a better service quality. Mathematically, the user will choose 5G only if $u_1(\theta) \geq u_2(\theta)$ and $u_1(\theta) \geq \bar{u}$, and 5G+WiFi if $u_2(\theta) > u_1(\theta)$ and $u_2(\theta) \geq \bar{u}$.

Both $x_1$ and $x_2$ fractions of users pay the 5G operator with price $p_1$, and thus the 5G profit changes from \eqref{e2} to:
\begin{equation}
    \pi_1 = N (x_1+x_2) p_1, \label{e5}
\end{equation}
where not only $x_1$ and $x_2$ are affected by $p_1$ and $p_2$ but they also affect each other. For example, an increase of $x_2$ in the crowdsourced WiFi network helps offload more 5G traffic and may motivate a larger $x_1+x_2$ to stay in the 5G network.

For the crowdsourced WiFi operator, he selfishly decides price $p_2$ and only collects payments from the fraction $x_2$ of users, and his profit is: 
\begin{equation}
    \pi_2 = N x_2 (p_2 - c), \label{e6}
\end{equation}
where we also model the WiFi deployment cost $c$ per user/AP to install and add to the crowdsourced network as in \cite{li2020optimal}.

 {Following the classic economic model of service pricing (e.g., \cite{gibbens2000internet, manshaei2008wireless, tran2013optimal, ren2012entry})}, 
we are now ready to formulate a dynamic game as follows:
\begin{itemize}
    \item Stage I: The 5G and the crowdsourced WiFi network operators simultaneously decide and announce subscription fee $p_1 \geq 0$ and $p_2 \geq c$, respectively, to $N$ potential users.
    \item Stage II: After observing the two prices, each user with personalized congestion sensitivity $\theta$ chooses to join the 5G network only if his payoff in \eqref{e3} is no less than the fixed reservation payoff and no less than that in \eqref{e4},  (i.e.,  $u_1(\theta) \geq \bar{u}$ and $u_1(\theta) \geq u_2(\theta)$), or joins 5G+WiFi if his payoff in \eqref{e4} is no less than the fixed reservation payoff and greater than that in \eqref{e3}, (i.e.,  $u_2(\theta) \geq \bar{u}$ and $u_2(\theta) > u_1(\theta)$), or joins neither with the fixed reservation payoff $\bar{u}$.
\end{itemize}

{ The dynamic game is under complete information. The operators' service prices, the network capacity and related technological specifications are specified in the service contracts with users to expect the service benefit and congestion. Note that for crowdsourced WiFi service, the major operator Fon announces its roadmap and WiFi coverage online for its users to check and access. Operators also provide flexible contracts or trial sessions for users to explore and learn before long-term commitment. }

In Sections~\ref{section:TMC-4} and \ref{section:TMC-5}, we will apply backward induction to analyze this involved dynamic game due to the complex user choice model when positive meets negative network externality. Though it is difficult to obtain the closed-form solutions, we can still prove structural results there to tell if the 5G operator and users benefit from the introduction of the crowdsourced WiFi at the equilibrium. 

\begin{figure*}
\centering
\subfloat[5G price equilibrium $\bar{p}_1^*$ in \eqref{p_5} versus the 5G network capacity $Q$]{\includegraphics[scale = 0.25]{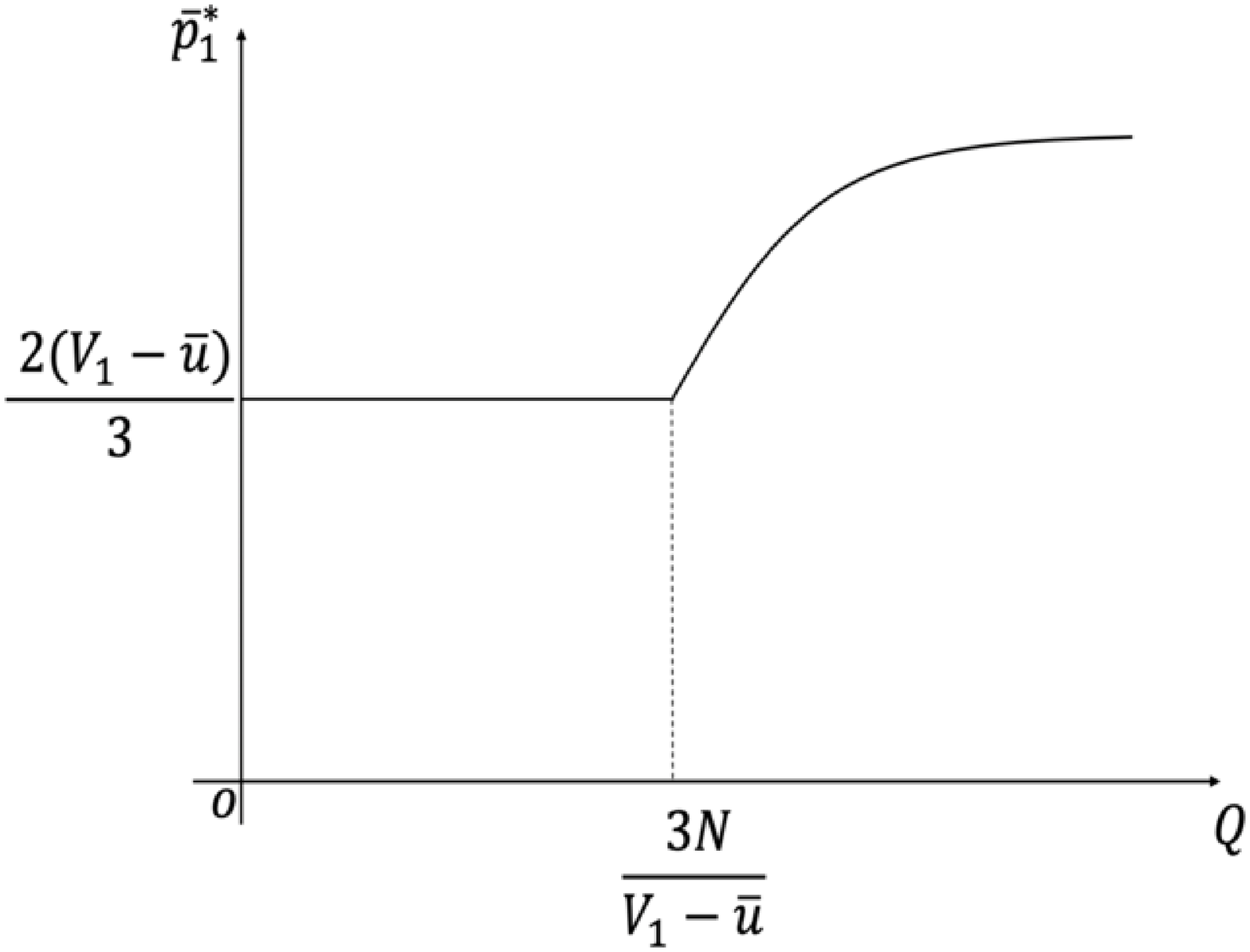}%
\label{fig_first}}
\hfil
\subfloat[Users' subscription $\bar{x}_1^*$ in \eqref{x_5} versus capacity $Q$]{\includegraphics[scale = 0.25]{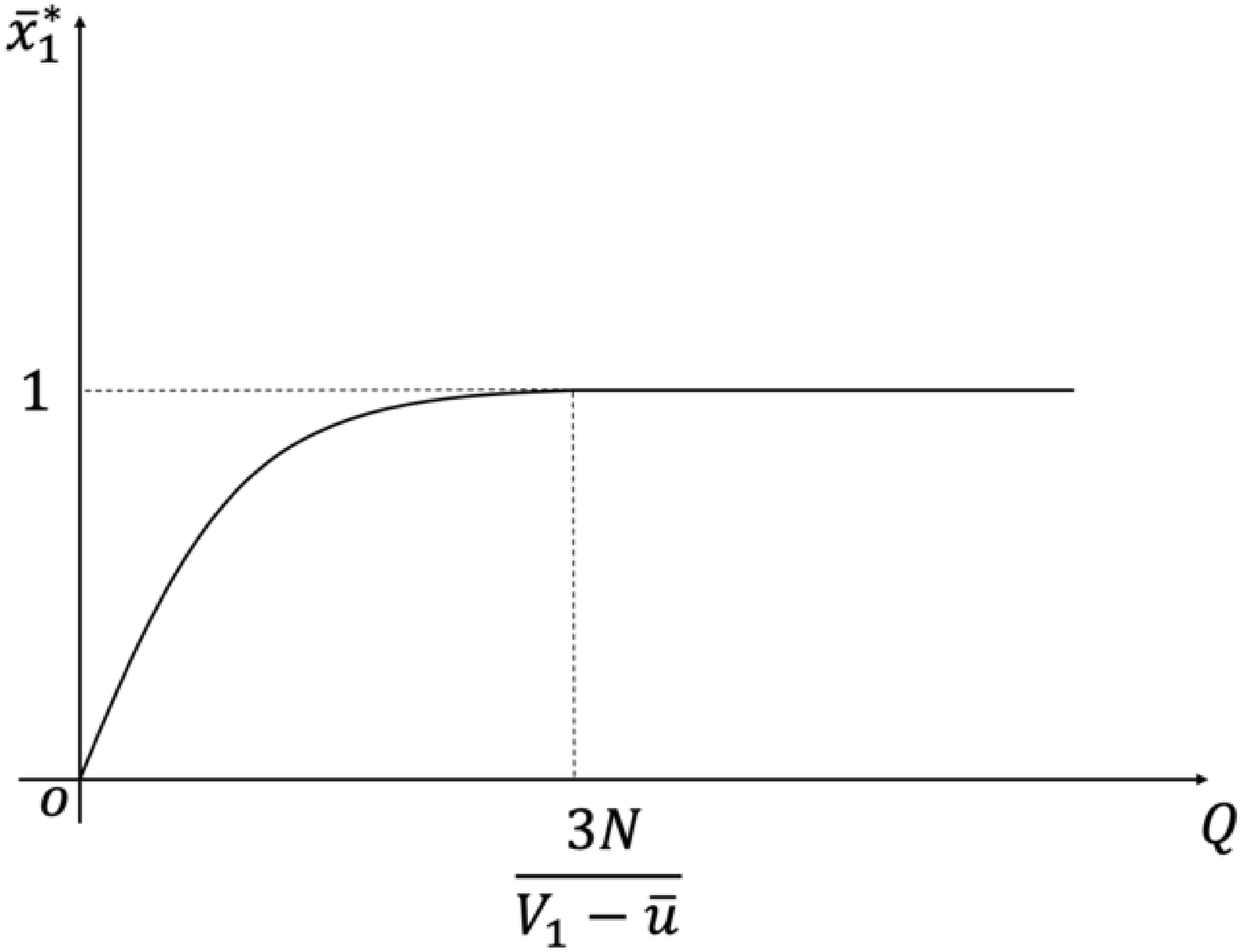}%
\label{fig_second}}
\caption{Equilibrium price and subscription fraction  versus the 5G network capacity $Q$ before the introduction of the crowdsourced WiFi. Here we plot given users' congestion sensitivity $\theta$ following uniform distribution.}
\label{fig333}
\end{figure*}

\section{Equilibrium Analysis Of The Benchmark Case}\label{section:TMC-3}
The Stackelberg game model in Section~\ref{section:TMC-2.1} can be analyzed by backward induction from Stage II, as there is only 5G network with negative externality.  Denote $F^{-1}(\cdot)$ as the inverse function of CDF $F(\cdot)$. Since only users of low 5G congestion sensitivity will join the 5G network,  we expect  $\bar{u}_1(\theta = F^{-1}(\bar{x}_1^*))  = \bar{u}$ in \eqref{e1} for solving equilibrium fraction $\bar{x}_1^*$ as a function of 5G price $\bar{p}_1$ in \eqref{e2}. 

\begin{lemma}\label{lemma:TMC-1}
Given any 5G price $\bar{p}_1$ in Stage I, only users with low congestion sensitivity $\theta\leq F^{-1}(\bar{x}_1^*)$ decide to join the 5G network, where 
\begin{equation}
  \bar{x}_1^* =  
  \begin{cases}
   1, &\text{if} \; \bar{p}_1 \leq V_1 - \bar{u} - N/Q,  \\
   F(\bar{\theta}^*), &\text{if} \; V_1 -\bar{u} - N/Q < \bar{p}_1 < V_1 - \bar{u},  \\
   0, &\text{if} \; \bar{p}_1 \geq V_1 - \bar{u}, \label{e333}
\end{cases}
\end{equation}
and cutoff $\bar{\theta}^* \in (0, 1)$ is the unique solution to
\begin{equation*}
    V_1 - \frac{N}{Q} F(\bar{\theta}^*)\bar{\theta}^* -  \bar{p}_1 - \bar{u} = 0. \label{t_star}
\end{equation*}
The final 5G subscription fraction $\bar{x}_1^*$ in \eqref{e333} increases with the 5G network capacity $Q$ and decreases with price $\bar{p}_1$.
\end{lemma}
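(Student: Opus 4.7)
The plan is to exploit the monotonicity of $\bar{u}_1(\theta)$ in $\theta$ (given any fixed aggregate $\bar{x}_1$) to reduce the users' subscription decisions to a threshold rule, and then solve a fixed-point equation for the self-consistent cutoff. First I would fix the subscription fraction $\bar{x}_1$ as a hypothetical aggregate and note that $\bar{u}_1(\theta)$ in \eqref{e1} is strictly decreasing in $\theta$, so a user joins iff $\theta \leq \bar{\theta}(\bar{x}_1) := Q(V_1-\bar{p}_1-\bar{u})/(N\bar{x}_1)$. Since $\theta$ has CDF $F$, consistency of beliefs requires $\bar{x}_1 = F(\bar{\theta}(\bar{x}_1))$, which, after substituting $\bar{\theta}^* = F^{-1}(\bar{x}_1^*)$ and rearranging, becomes the single-variable equation $V_1 - (N/Q)F(\bar{\theta}^*)\bar{\theta}^* - \bar{p}_1 - \bar{u} = 0$ stated in the lemma.

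Next I would do the case split on $\bar{p}_1$. For $\bar{p}_1 \geq V_1-\bar{u}$ even the most tolerant user $(\theta=0)$ obtains payoff at most $V_1-\bar{p}_1 \leq \bar{u}$ regardless of $\bar{x}_1$, so $\bar{x}_1^*=0$. For $\bar{p}_1 \leq V_1-\bar{u}-N/Q$ even the most sensitive user $(\theta=1)$ under full participation $\bar{x}_1=1$ has $\bar{u}_1(1) = V_1 - N/Q - \bar{p}_1 \geq \bar{u}$, so $\bar{x}_1^*=1$. In the intermediate regime $V_1-\bar{u}-N/Q < \bar{p}_1 < V_1-\bar{u}$, evaluating the map $g(\bar{\theta}) := V_1 - (N/Q)F(\bar{\theta})\bar{\theta} - \bar{p}_1 - \bar{u}$ at the endpoints gives $g(0)>0$ and $g(1)<0$. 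Because $F$ is a CDF (non-decreasing) and $\bar{\theta} \mapsto F(\bar{\theta})\bar{\theta}$ is therefore strictly increasing on $(0,1]$, continuity plus strict monotonicity of $g$ yields a unique interior root $\bar{\theta}^*\in(0,1)$, hence $\bar{x}_1^*=F(\bar{\theta}^*)$.

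Finally I would read off the comparative statics directly from $g(\bar{\theta}^*)=0$. Increasing $\bar{p}_1$ shifts $g$ downward pointwise, so its root $\bar{\theta}^*$ must fall (since $g$ is decreasing in $\bar{\theta}$), and therefore $\bar{x}_1^*=F(\bar{\theta}^*)$ is non-increasing in $\bar{p}_1$; the two boundary regimes are also clearly monotone in $\bar{p}_1$. Similarly, increasing $Q$ strictly decreases the subtracted term $(N/Q)F(\bar{\theta})\bar{\theta}$ whenever $\bar{\theta}>0$, shifting $g$ upward and forcing $\bar{\theta}^*$ to rise; the threshold $V_1-\bar{u}-N/Q$ separating the full-participation regime also increases in $Q$, consistent with the claim.

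The only delicate point is the strict monotonicity of $\bar{\theta} \mapsto F(\bar{\theta})\bar{\theta}$, which is needed both for uniqueness of $\bar{\theta}^*$ and for the comparative statics; this holds because $F$ is a CDF with $F(0)=0$ and $F(1)=1$, so on any subinterval of $(0,1]$ the product is strictly increasing (the factor $\bar{\theta}$ is strictly increasing and $F$ is non-decreasing with $F(\bar{\theta})>0$). I would highlight this as the key structural observation since the rest of the argument is direct algebra and a standard intermediate-value/implicit-function step.
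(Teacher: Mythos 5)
Your proposal is correct and follows the same route the paper (implicitly) takes: the paper omits a formal proof of Lemma~\ref{lemma:TMC-1} and simply invokes the indifference condition $\bar{u}_1(\theta=F^{-1}(\bar{x}_1^*))=\bar{u}$, which is exactly the fixed-point equation you derive from the threshold rule, with the boundary cases and comparative statics following as you describe. Your observation that uniqueness hinges on strict monotonicity of $\bar{\theta}\mapsto F(\bar{\theta})\bar{\theta}$ is the right key step (and is unproblematic wherever $F(\bar{\theta})>0$, which is the only region where the root can lie).
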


Now we turn to Stage I to find the best price $\bar{p}_1$ for maximizing the 5G operator's profit in \eqref{e2}. By substituting  $\bar{x}_1^*$ in \eqref{e333} into $\bar{\pi}_1$ in \eqref{e2}, the profit objective becomes: 
\begin{equation}
   \bar{\pi}_1 = 
  \begin{cases}
   N \bar{p}_1, \;\; &\text{if} \; \bar{p}_1 \leq V_1 - \bar{u} - N/Q,  \\
   N \bar{p}_1 F(\bar{\theta}^*), \;\; &\text{if} \; V_1 -\bar{u} - N/Q < \bar{p}_1 < V_1 - \bar{u},  \\
   0, \;\; &\text{if} \; \bar{p}_1 \geq V_1 - \bar{u}. \label{e444}
\end{cases}
\end{equation}

One may rewrite the price $\bar{p}_1$ as a function of cutoff $\bar{\theta}^*$, and simplify the profit-maximization problem \eqref{e444}. Yet the problem is still non-convex and is difficult to solve analytically given general $F(\theta)$ distribution. Under certain assumption on the users' congestion sensitivity distribution, we solve the problem analytically. If the assumption does not hold, we can still solve it numerically. 

\begin{proposition}\label{prop:TMC-1}
Assume PDF $f(\theta)$ of users' congestion sensitivity is continuous and non-increasing, which applies to uniform, exponential and Pareto distributions. At the equilibrium of the Stackelberg game, the 5G price equilibrium in Stage I is:
\begin{equation*}
  \bar{p}_1^* = 
  \begin{cases}
   V_1 - \bar{u} - \frac{N}{Q}\bar{\theta}^*F(\bar{\theta}^*),  \;\; &\text{if} \; Q < \bigg(2+\frac{1}{f(1)}\bigg)\frac{N}{V_1-\bar{u}},   \\
   V_1 - \bar{u} - \frac{N}{Q}, \;\; &\text{if} \; Q \geq \bigg(2+\frac{1}{f(1)}\bigg)\frac{N}{V_1-\bar{u}}, 
\end{cases}
\end{equation*}
and the fraction of 5G subscribers $\bar{x}_1^*$ is:
\begin{equation*}
  \bar{x}_1^*  = 
  \begin{cases}
    F(\bar{\theta}^*), \;\; &\text{if} \;Q < \bigg(2+\frac{1}{f(1)}\bigg)\frac{N}{V_1-\bar{u}},   \\
    1, \;\; &\text{if} \; Q \geq \bigg(2+\frac{1}{f(1)}\bigg)\frac{N}{V_1-\bar{u}},  
\end{cases}
\end{equation*}
where $\bar{\theta}^* \in (0, 1)$ is the solution to 
\begin{equation*}
    f(\bar{\theta}^*)\bigg(V_1 - \bar{u} - \frac{2N}{Q}F(\bar{\theta}^*)\bar{\theta}^*\bigg) - \frac{N}{Q}F^2(\bar{\theta}^*) = 0.
\end{equation*}

Specifically, if $\theta$ is uniformly distributed in $[0, 1]$, the 5G price in Stage I reduces to:
\begin{equation}
  \bar{p}_1^* = 
  \begin{cases}
   \frac{2}{3}(V_1-\bar{u}),  \;\; &\text{if} \; Q < \frac{3N}{V_1-\bar{u}},   \\
   V_1 - \bar{u} - \frac{N}{Q}, \;\; &\text{if} \; Q \geq \frac{3N}{V_1-\bar{u}},  \label{p_5}
\end{cases} 
\end{equation}
which increases with network capacity $Q$ (as illustrated in Fig. \ref{fig_first}).
The fraction of 5G subscribers $\bar{x}_1^*$ is:
\begin{equation}
  \bar{x}_1^*  = 
  \begin{cases}
    \sqrt{\frac{V_1-\bar{u}}{3N/Q}}, \;\; &\text{if} \; Q < \frac{3N}{V_1-\bar{u}},   \\
    1, \;\; &\text{if} \; Q \geq \frac{3N}{V_1-\bar{u}},  \label{x_5}
\end{cases}
\end{equation}
which increases in network capacity $Q$ (see Fig. \ref{fig_second}). The 5G network operator's profit $\bar{\pi}_1^*$ is
\begin{equation}
  \bar{\pi}_1^*  = 
  \begin{cases}
    \frac{2}{3}(V_1-\bar{u}) \sqrt{\frac{(V_1-\bar{u})QN}{3}}, \;\; &\text{if} \; Q < \frac{3N}{V_1-\bar{u}},   \\
    N\left(V_1 -\bar{u}- \frac{N}{Q}\right), \;\; &\text{if} \; Q \geq \frac{3N}{V_1-\bar{u}},  
\end{cases} \label{pi_5}
\end{equation}
which increases in network capacity $Q$.
\end{proposition}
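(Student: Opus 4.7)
The plan is to apply backward induction on top of Lemma~\ref{lemma:TMC-1}. First I would dispose of the two flanking regimes in \eqref{e444}: for $\bar{p}_1\le V_1-\bar{u}-N/Q$ the profit $N\bar{p}_1$ is strictly increasing and is maximized at the right endpoint with value $N(V_1-\bar{u}-N/Q)$, while for $\bar{p}_1\ge V_1-\bar{u}$ the profit is identically zero. So only the middle regime, in which $\bar{x}_1^*=F(\bar\theta^*)$ with $\bar\theta^*\in(0,1)$, needs serious analysis, and by continuity the boundary $\bar\theta^*=1$ recovers the all-subscribe profit $N(V_1-\bar{u}-N/Q)$, so the two pieces glue smoothly.

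Next I would change variables from $\bar{p}_1$ to $\bar\theta^*\in(0,1)$ through the market-clearing identity
\begin{equation*}
\bar{p}_1 \;=\; V_1-\bar{u}-\tfrac{N}{Q}F(\bar\theta^*)\bar\theta^*,
\end{equation*}
which is a bijection onto the middle price interval since $F(\theta)\theta$ is strictly increasing. The profit then becomes
\begin{equation*}
h(\bar\theta^*)\;=\;N\,F(\bar\theta^*)\Bigl[\,V_1-\bar{u}-\tfrac{N}{Q}F(\bar\theta^*)\bar\theta^*\,\Bigr],
\end{equation*}
and a direct differentiation and cancellation produces exactly the first-order condition stated in the proposition. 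Writing $g(\theta)$ for the left-hand side of that FOC, we have $Ng=h'$.

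The crux is uniqueness of the interior maximizer and locating the switch in $Q$. The endpoint values $g(0)=f(0)(V_1-\bar{u})>0$ and $g(1)=f(1)(V_1-\bar{u}-2N/Q)-N/Q$ satisfy, by a one-line rearrangement, $g(1)<0$ iff $Q<(2+1/f(1))N/(V_1-\bar{u})$. Under this inequality the intermediate value theorem supplies at least one root $\bar\theta^*\in(0,1)$; otherwise $h$ is nondecreasing on $(0,1)$ (by the local analysis below) so the maximum in the middle regime is attained at $\bar\theta^*=1$, yielding $\bar{p}_1^*=V_1-\bar{u}-N/Q$ and $\bar{x}_1^*=1$. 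For uniqueness and the second-order check I would exploit the non-increasing PDF assumption as follows. At any zero of $g$, the FOC itself gives $V_1-\bar{u}-\tfrac{2N}{Q}F(\theta)\theta=\tfrac{N F^2(\theta)}{Q f(\theta)}>0$, so
\begin{equation*}
g'(\theta)\;=\;f'(\theta)\bigl[V_1-\bar{u}-\tfrac{2N}{Q}F(\theta)\theta\bigr]-\tfrac{2N}{Q}f(\theta)^2\theta-\tfrac{4N}{Q}f(\theta)F(\theta)\;<\;0
\end{equation*}
strictly: the first summand is nonpositive by $f'\le 0$ and the other two are strictly negative. This rules out any upward zero crossing, so $g$ has at most one zero on $(0,1)$, and the zero, if it exists, is a local (hence global) maximum of $h$.

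Finally I would specialize to the uniform case with $F(\theta)=\theta$ and $f\equiv 1$. The FOC collapses to $V_1-\bar{u}=3(N/Q)\bar\theta^{*2}$, giving $\bar\theta^*=\sqrt{Q(V_1-\bar{u})/(3N)}$; substitution recovers $\bar{p}_1^*=\tfrac{2}{3}(V_1-\bar{u})$ and $\bar{x}_1^*$ as in \eqref{x_5}, and the threshold reduces to $Q<3N/(V_1-\bar{u})$ since $f(1)=1$ makes $2+1/f(1)=3$. Monotonicity in $Q$ of each of $\bar{p}_1^*$, $\bar{x}_1^*$ and $\bar{\pi}_1^*$ then follows by inspecting \eqref{p_5}--\eqref{pi_5} piecewise, with continuity at the threshold tying the pieces together. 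I expect the main obstacle to be the sign bookkeeping in the uniqueness step: $f'\le 0$ only helps when the bracket $[V_1-\bar{u}-(2N/Q)F\theta]$ is nonnegative, and the observation that the FOC itself forces that bracket to be strictly positive at any critical point is the key trick that closes the argument.
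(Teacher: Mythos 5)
Your proof is correct and follows exactly the route the paper itself sketches but does not write out (reparametrize the profit by the cutoff $\bar\theta^*$, derive the first-order condition $h'=Ng$, read the threshold on $Q$ off the sign of $g(1)$, and use the non-increasing density to show $g$ can only cross zero downward, with the key observation that the FOC forces the bracket $V_1-\bar u-\tfrac{2N}{Q}F(\theta)\theta$ to be strictly positive at any critical point); the specialization to the uniform case and the monotonicity checks all come out right. The only cosmetic gap is that you differentiate $f$, which is merely assumed continuous and non-increasing, but the downward-crossing argument survives by replacing the $f'$ term with a one-sided difference-quotient bound, so nothing essential is missing.
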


As illustrated in Fig. \ref{fig333}, if the capacity is small (i.e., $Q < 3N/(V_1-\bar{u})$), the 5G network is severely congested and the operator can only charge a low price $\bar{p}_1^*=2(V_1-\bar{u})/3$ from users. Having non-small $Q$, its price and profit increase with $Q$.

\section{Stage II Analysis of Dynamic Game After the introduction of the crowdsourced WiFi}\label{section:TMC-4}
For our focused model in Section~\ref{section:TMC-2.2}, we begin with equilibrium analysis of user choices in Stage II, given the prices $p_1$ and $p_2$ decided in Stage I.  Denote the service choice of a user with congestion sensitivity $\theta \in [0, 1]$ as $\phi(\theta) \in \{0, 1, 2\}$, where 
\begin{align*}
    \phi(\theta) = \begin{cases}
    1, &\text{if the user subscribes to 5G only}, \\
    2, &\text{if the user subscribes to 5G+WiFi}, \\ 
    0, &\text{if the user subscribes to neither.}
    \end{cases}
\end{align*}


The analysis of decision $\phi(\theta)$ is involved, as a user's choice in a service affects all the other users in both services. For example, a user to add on WiFi helps reduce the 5G users' congestion and improve the WiFi network coverage. Thus, a user's decision needs to proactively consider any other user's decision in any service.  

To simplify the user choice analysis, we first suppose the equilibrium fractions of users in 5G and 5G+WiFi (i.e., $x_1^*$ and $x_2^*$) are known and stable, and derive each user's choice according to his personalized cost congestion $\theta$, by comparing \eqref{e3} and \eqref{e4}.

\begin{table*}
\begin{equation}
    \hat{p}_2 = \alpha\hat{x}_2\frac{N}{Q}(1-\alpha\hat{x}_2^2)(1-\hat{x}_2), \;\;
    \hat{p}_2' = \alpha \sqrt{\frac{1-\sqrt{\frac{V_1-\bar{u}-\hat{p}_1}{N/Q}}}{\alpha}}\sqrt{\frac{V_1-\bar{u}-\hat{p}_1}{N/Q}}\frac{N}{Q} 
    \bigg(1-\sqrt{\frac{1-\sqrt{\frac{V_1-\bar{u}-\hat{p}_1}{N/Q}}}{\alpha}}\bigg). \label{pro4_}
\end{equation}
\end{table*}
\begin{table*}
\begin{equation}
    \frac{1-\sqrt{\frac{V_1-\bar{u}-\hat{p}_1}{N/Q}}}{\alpha} + \frac{1 - \sqrt{1 - 3\alpha\sqrt{\frac{V_1-\bar{u}-\hat{p}_1}{N/Q}} }}{3\alpha}
    \bigg(\sqrt{\frac{V_1-\bar{u}-\hat{p}_1}{N/Q}} -
     \frac{1+3\alpha\sqrt{\frac{V_1-\bar{u}-\hat{p}_1}{N/Q}}-\sqrt{1-3\alpha\sqrt{\frac{V_1-\bar{u}-\hat{p}_1}{N/Q}}}}{9\alpha}\bigg) - \sqrt{\frac{1-\sqrt{\frac{V_1-\bar{u}-\hat{p}_1}{N/Q}}}{\alpha}} = 0. \label{pro4}
\end{equation}
\hrulefill
\end{table*}

\begin{figure}[!t]
\centerline{\includegraphics[scale = 0.45]{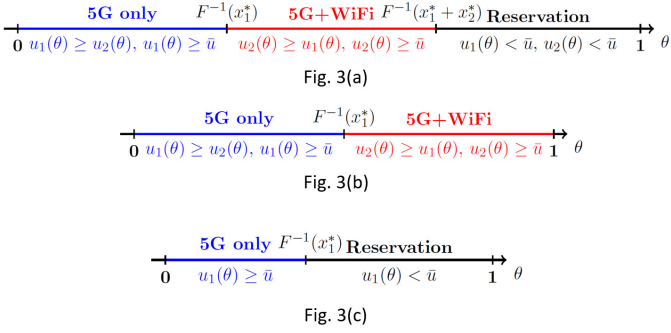}}
\caption{Users' subscription in the heterogeneous networks of diverse network externalities in Stage II of the  dynamic game. Note that it is possible to have $x_1^*+x_2^*<1$ in Fig. 3(a), or  $x_1^*+x_2^*=1$ where all users either choose 5G or 5G+WiFi service in Fig. 3(b).}
\label{fig4}

\end{figure}

\begin{lemma}\label{lemma:TMC-2}
At the equilibrium of Stage II, a user decides his service choice by comparing his personalized congestion cost $\theta\in[0,1]$ to two cutoff points $F^{-1}(x_1^*)$ and $F^{-1}(x_1^*+x_2^*)$ as in the following two cases.  
\begin{itemize}
    \item If there are a positive number of subscribers to choose 5G+WiFi (i.e.,  $x_2^* > 0$), a user's equilibrium choice is (see Fig. \ref{fig4}(a) or Fig. \ref{fig4}(b)):  
   \begin{equation}
   \phi(\theta)\! = \!\!\left\{ 
  \begin{aligned}\label{e11}
   &1, \; \text{if} \;\theta \leq F^{-1}(x_1^*) \;\text{and}\; u_1(\theta) \geq \bar{u},   \\
   &2, \; \text{if} \; \theta \in (F^{-1}(x_1^*), F^{-1}(x_1^* + x_2^*)] \\
        &\;\;\;\;\;\;\; \text{and}\; u_2(\theta) \geq \bar{u}, \\ 
   &0, \; \text{if} \;\theta\! \in\! (F^{-1}(x_1^* + x_2^*), 1],\! u_1(\theta)\! < \! \bar{u}, \\
    &\;\;\;\;\;\;\; \text{and} \;\! u_2(\theta) \!< \!\bar{u}.  \\
\end{aligned}\right. 
\end{equation}
\item If there is no subscriber to 5G+WiFi (i.e.,  $x_2^* = 0$), a user's equilibrium choice is (see Fig. \ref{fig4}(c)):   
   \begin{equation}
   \phi(\theta) = \left\{ 
  \begin{aligned}
   &1, \; \text{if} \;\theta \leq F^{-1}(x_1^*) \;\text{and}\; u_1(\theta) \geq \bar{u},   \\
   &0, \; \text{if} \; \theta \in (F^{-1}(x_1^*), 1], u_1(\theta) < \bar{u}, \\
   &\;\;\;\;\;\;\;\text{and}\; u_2(\theta) < \bar{u}. 
\end{aligned}\right.  \label{e12}
\end{equation}
\end{itemize}
\end{lemma}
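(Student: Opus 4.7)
The plan is to prove Lemma~\ref{lemma:TMC-2} in two conceptual steps: first, establish that each pairwise utility comparison among $u_1(\theta)$, $u_2(\theta)$, and $\bar u$ is monotone in the congestion-sensitivity $\theta$, which forces the equilibrium partition of $[0,1]$ to consist of intervals ordered by $\theta$; second, use consistency between the induced interval masses and the equilibrium fractions $x_1^*, x_2^*$ to identify the two cutoffs as $F^{-1}(x_1^*)$ and $F^{-1}(x_1^*+x_2^*)$.

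For the first step with $x_2^* > 0$, I would subtract \eqref{e4} from \eqref{e3} at the equilibrium $(x_1^*, x_2^*)$ to obtain
\begin{equation*}
u_1(\theta) - u_2(\theta) = \alpha x_2^*(V_1 - V_2) + p_2 - \alpha x_2^* \frac{N(x_1^* + x_2^*(1-\alpha x_2^*))}{Q}\,\theta,
\end{equation*}
which is affine and strictly decreasing in $\theta$. Hence there is at most one crossing at which $u_1 = u_2$: below it, 5G-only dominates 5G+WiFi; above it, 5G+WiFi dominates. Moreover, both $u_1(\theta)$ and $u_2(\theta)$ are themselves strictly decreasing in $\theta$ (with $u_2$ decreasing at a rate $(1-\alpha x_2^*)$ times that of $u_1$, reflecting partial WiFi offloading), so each level set $\{\theta : u_i(\theta) \geq \bar u\}$ is a left-closed interval anchored at $0$.

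Combining these monotonicities, the equilibrium best-response map $\phi(\theta)$ must have the shape: a low-$\theta$ block choosing 5G-only (where $u_1 \geq \max\{u_2,\bar u\}$), a middle-$\theta$ block choosing 5G+WiFi (where $u_2 > u_1$ and $u_2 \geq \bar u$), and a high-$\theta$ block choosing neither. For this partition to be self-consistent with the assumed fractions $x_1^*, x_2^*$, the $F$-mass of the first block must equal $x_1^*$ and of the first two combined must equal $x_1^* + x_2^*$. Since $F$ is continuous and strictly increasing on its support, inverting yields the cutoffs $F^{-1}(x_1^*)$ and $F^{-1}(x_1^* + x_2^*)$ stated in \eqref{e11}. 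For the degenerate case $x_2^* = 0$, the same identity collapses to $u_1(\theta) - u_2(\theta) = p_2 \geq c > 0$ for every $\theta$, so no user strictly prefers 5G+WiFi and the partition reduces to the two-block form of \eqref{e12}.

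The main obstacle is not the monotonicity itself but the consistency closure: one must argue that the ordering induced by $\theta$ on best responses is actually compatible with any fixed-point $(x_1^*, x_2^*)$ rather than picking out a particular equilibrium. The key is that monotonicity in $\theta$ is derived \emph{for the given} $(x_1^*, x_2^*)$, so it holds at every candidate equilibrium; the indifferent users at each cutoff sit on a measure-zero set because $F$ is continuous, so tie-breaking does not perturb $x_1^*$ or $x_2^*$. This resolves the subtlety and shows the two branches of the lemma cover all equilibrium configurations.
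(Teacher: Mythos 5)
Your proposal is correct and follows essentially the same route as the paper: both hinge on showing that $u_1(\theta)-u_2(\theta)=\alpha x_2^*(V_1-V_2)+p_2-\alpha x_2^*\frac{N(x_1^*+x_2^*(1-\alpha x_2^*))}{Q}\theta$ is affine, nonnegative at $\theta=0$, and decreasing in $\theta$, which forces a single crossing and hence the ordered three-block partition with cutoffs $F^{-1}(x_1^*)$ and $F^{-1}(x_1^*+x_2^*)$. Your added remarks on the monotonicity of $u_1,u_2$ individually and on the fixed-point consistency of the interval masses merely make explicit what the paper's proof leaves implicit.
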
  

\begin{proof} 
Given $p_1$ and $p_2$ in Stage I, a user of congestion sensitivity $\theta$ compares $u_1(\theta)$ in \eqref{e3} and $u_2(\theta)$ in \eqref{e4} to decide service.
Define their difference as a function of $\theta$: 
\begin{align*}
    f(\theta) &:= u_1(\theta) - u_2(\theta)  \\
              &= p_2 - \alpha x_2^* \bigg( \frac{N(x_1^*+x_2^*(1 - \alpha x_2^*))}{Q}\theta + V_2 - V_1 \bigg).
\end{align*}
For the user with $\theta = 0$, $f(0) = p_2 + \alpha x_2^* (V_1 - V_2) \geq 0$, which means $u_1(0) \geq u_2(0)$ and the user with $\theta = 0$ prefers to join the 5G network than 5G+WiFi. 

Since $f(\theta)$ above linearly decreases with $\theta$ and $f(0) \geq 0$, we have either $f(\theta) \geq 0$ for any $\theta \in [0, 1]$, or $f(\theta) \geq 0$ for $\theta \in [0, F^{-1}(x_1^*)]$ and $f(\theta) < 0$ for $\theta \in (F^{-1}(x_1^*), F^{-1}(x_1^* + x_2^*)]$, where we have $x_1^* \in (0, 1]$, $x_2^* \in (0, 1)$ and $x_1^* + x_2^* \in (0, 1]$. The former case tells that users only consider 5G only and join if $\theta\leq F^{-1}(x_1^*)$ and $u_1(\theta)\geq \bar{u}$, which results in \eqref{e12} with $x_2^*=0$ in Lemma~\ref{lemma:TMC-2}. The latter case results in \eqref{e11} with $x_2^*>0$. 
\end{proof}

Given the users' choices in \eqref{e11} and \eqref{e12} according to cutoff points $F^{-1}(x_1^*)$ and $F^{-1}(x_1^*+x_2^*)$, we are ready to derive $x_1^*$ and $x_2^*$ in the closed-loop to summarize all the users' choices based on their $\theta$'s.

\begin{proposition}\label{prop:TMC-2}
At the equilibrium of Stage II, user fraction of the 5G and WiFi networks (i.e., $x_1^*$ and $x_2^*$) are solutions to 
\begin{equation}
u_1(F^{-1}(x_1^*))=u_2(F^{-1}(x_1^*)), \;\; u_2(F^{-1}(x_1^*+x_2^*))=\bar{u}, \label{e7}
\end{equation}
if $x_1^*+x_2^* < 1$ (see Fig. \ref{fig4}(a)), and otherwise 
\begin{equation}
u_1(F^{-1}(x_1^*))=u_2(F^{-1}(x_1^*)), \;\; x_1^*+x_2^*=1,\label{e8}
\end{equation}
with $u_2(F^{-1}(x_1^*+x_2^*))\geq \bar{u}$ (see Fig. \ref{fig4}(b)), where $x_2^* \in (0, 1)$ is the largest among all the possible solutions.
\end{proposition}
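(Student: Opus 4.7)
The plan is to derive the equilibrium equations in Proposition 2 by imposing consistency between the individual user choice rule established in Lemma~\ref{lemma:TMC-2} and the aggregate fractions $x_1^*$ and $x_2^*$. The key observation is that Lemma~\ref{lemma:TMC-2} describes each user's optimal choice as a function of the two cutoff points $F^{-1}(x_1^*)$ and $F^{-1}(x_1^*+x_2^*)$, which are themselves determined by the aggregate fractions. Therefore, a fixed-point argument is needed: the induced cutoffs must be such that the resulting subscription masses reproduce exactly $x_1^*$ and $x_2^*$.

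First, I would examine the marginal user at $\theta=F^{-1}(x_1^*)$, who lies at the boundary between the ``5G only'' and ``5G+WiFi'' regions in Fig.~\ref{fig4}(a) or (b). From the proof of Lemma~\ref{lemma:TMC-2}, this boundary is precisely where $f(\theta):=u_1(\theta)-u_2(\theta)$ changes sign; since $f$ is linear in $\theta$ and $f(0)\geq 0$, indifference at this upper cutoff gives $u_1(F^{-1}(x_1^*))=u_2(F^{-1}(x_1^*))$, which is the first equation in both \eqref{e7} and \eqref{e8}. I would then turn to the second cutoff. If the market is not fully covered (Fig.~\ref{fig4}(a), $x_1^*+x_2^*<1$), then the user at $\theta=F^{-1}(x_1^*+x_2^*)$ is marginal between 5G+WiFi and opting out, so setting his payoff equal to the reservation value yields $u_2(F^{-1}(x_1^*+x_2^*))=\bar{u}$, completing \eqref{e7}. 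In the fully covered regime (Fig.~\ref{fig4}(b)), by definition $x_1^*+x_2^*=1$; the side condition $u_2(F^{-1}(x_1^*+x_2^*))\geq \bar{u}$ then certifies that even the most congestion-sensitive user weakly prefers 5G+WiFi to neither, which gives \eqref{e8}.

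Finally, for the selection clause that picks the largest $x_2^*$, I would appeal to the positive network externality $\alpha x_2$ in WiFi coverage: the second stage is coordination-like, since beliefs about $x_2^*$ feed back into each user's payoff in \eqref{e3}--\eqref{e4} through both the offloading term and the coverage term. The fixed-point system can therefore admit several roots, and the largest one corresponds to the Pareto-dominant and dynamically stable equilibrium under best-response updates, because at any smaller candidate a marginal upward perturbation of $x_2$ strictly raises each WiFi subscriber's payoff (enlarged coverage, more 5G offloading), pushing the tat\^onnement toward the larger fixed point.

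The principal analytical hurdle, and where I would spend most of the effort, is not the selection argument itself but verifying that the largest root actually lies in $(0,1)$ and is consistent with the sign pattern of $f(\theta)$ in Lemma~\ref{lemma:TMC-2} together with the feasibility conditions $x_1^*,x_2^*\geq 0$ and $x_1^*+x_2^*\leq 1$. Because the $\alpha x_2^2$ term inside $u_1$ and $u_2$ makes the system cubic in $x_2$, closed-form root selection is unavailable; instead I would rely on intermediate-value arguments on each branch (interior versus boundary) together with monotonicity of the residual of \eqref{e7}/\eqref{e8} around the largest root to establish existence and well-posedness of the selected equilibrium.
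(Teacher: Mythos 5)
Your derivation of the two indifference equations matches what the paper does: the paper gives no separate proof of Proposition~\ref{prop:TMC-2}, treating \eqref{e7} and \eqref{e8} as immediate consequences of Lemma~\ref{lemma:TMC-2} --- the marginal user at $\theta=F^{-1}(x_1^*)$ sits exactly where $f(\theta)=u_1(\theta)-u_2(\theta)$ crosses zero (linearity of $f$ plus $f(0)\geq 0$), and the marginal user at $\theta=F^{-1}(x_1^*+x_2^*)$ is either indifferent to opting out or the market is fully covered. That part of your argument is correct and is essentially the paper's (implicit) route, including the verification that participation constraints for inframarginal users follow from monotonicity of $u_1$ and $u_2$ in $\theta$.

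The weak point is your justification of the largest-root selection. You claim that at any smaller candidate fixed point, ``a marginal upward perturbation of $x_2$ strictly raises each WiFi subscriber's payoff (enlarged coverage, more 5G offloading).'' This is not generally true: the 5G load term $x_1+x_2(1-\alpha x_2)$ appearing in both \eqref{e3} and \eqref{e4} is, as the paper itself emphasizes, \emph{not monotonic} in $x_2$ --- raising $x_2$ enlarges WiFi coverage but also adds 5G traffic outside that coverage, and with $V_1=V_2$ the derivative of the congestion cost in $u_2$ with respect to $x_2$ can take either sign. So the Pareto-dominance/t\^atonnement argument does not go through as stated. To be fair, the paper itself never proves this selection clause either; it is asserted here and carried forward as a refinement in Proposition~\ref{prop:TMC-3}. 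But if you want to defend it, you need a stability or refinement argument that does not rely on global monotonicity of payoffs in $x_2$, e.g., a sign analysis of the residual of \eqref{e7}/\eqref{e8} at each root, which you gesture at but do not carry out. Likewise, the existence/feasibility step (largest root lying in $(0,1)$) is left as a plan rather than executed.
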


Proposition~\ref{prop:TMC-2}'s result is consistent with Fig. \ref{fig4}. It is interesting to notice that if $x_1^*+x_2^*=1$, it is possible to have $u_2(F^{-1}(x_1^*+x_2^*))> \bar{u}$, telling that the WiFi operator generously leaves positive payoff to the user at the boundary $\theta=F^{-1}(x_1^*+x_2^*)=1$. The purpose for doing so is to use a low WiFi price to reduce $x_1^*$ in 5G-only service (to the left-handed side in Fig. 3(b)) and increase the WiFi's own market share $x_2^*$.

 To deliver clean engineering insights, in all the following we consider { users obtain the same mobile access profits as $V_1 = V_2$ and} users' congestion sensitivity to follow the uniform distribution (i.e., $\theta \sim U[0,1]$). Though more involved, our analysis can be extended to other continuous $\theta$ distributions such as truncated normal distribution. We will show more results numerically in Section~\ref{section:TMC-6}.

Note that even with $\theta\sim U[0, 1]$, it is still difficult to solve closed-form $x_1^*$ and $x_2^*$, as combining \eqref{e7} and \eqref{e8} returns high degree polynomial equations and there are many cases depending on the price combinations. For example, the following proposition shows a case of the 4th degree polynomial equation, which is already the simplest among all the 5 case analysis in Fig. \ref{bg}. 
  
\begin{figure}
\centerline{\includegraphics[scale=0.38]{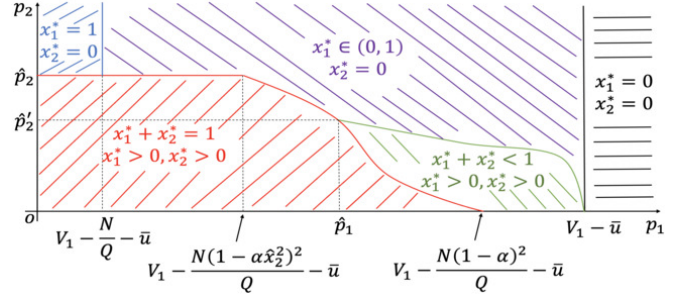}}
\caption{All 5 cases of users' subscription equilibrium ($x_1^*$, $x_2^*$) of the 5G and 5G+WiFi services versus the two operators' prices $p_1$ and $p_2$ in Stage II. } 
\label{bg} 
\end{figure}

\begin{proposition}\label{prop:TMC-3}
Figure \ref{bg} summarizes the structural results of the Stage II's equilibrium ($x_1^*$, $x_2^*$) in five cases, as functions of any given prices $p_1$ and $p_2$, where $\hat{p}_2$ and $\hat{p}_2'$ are given in \eqref{pro4_},
and $\hat{p}_1 \in [V_1 -\bar{u}- \frac{N}{Q}(1-\alpha\hat{x}_2^2)^2, V_1 - \bar{u}- \frac{N(1-\alpha)^2}{Q})$ is the unique solution to \eqref{pro4}.

For example, if both prices of the 5G and the add-on WiFi services are low (see the lower left region of Fig. \ref{bg}), i.e.,    
\begin{equation}
    p_1 \leq V_1 - \bar{u}- \frac{N}{Q}, \;\; p_2 \leq \alpha \hat{x}_2 \frac{N}{Q}\left(1-\alpha \hat{x}_2^2\right)(1-\hat{x}_2), \nonumber
\end{equation}
with $\hat{x}_2 \in [0, 1]$ as the unique solution to
\begin{equation}
    4\alpha \hat{x}_2^3 - 3\alpha \hat{x}_2^2 - 2\hat{x}_2 + 1 = 0, \label{hatx}
\end{equation}
then all the users either choose 5G or 5G+WiFi service without leaving the wireless market (i.e., $x_1^*+x_2^*=1$), and the equilibrium user fraction $x_2^*$ of 5G+WiFi is the greatest root in the range (0,1) to  
\begin{equation}
   p_2 - \alpha x_2^* \frac{N}{Q} \left(1-\alpha (x_2^*)^2\right) (1 - x_2^* ) = 0.\label{eq1}
\end{equation}
In another region (in the lower mid region of Fig. \ref{bg}), where the 5G capacity is small (i.e, $Q < \frac{N}{V_1-\bar{u}}$) and the WiFi coverage addition per AP is small (i.e., $\alpha < 1 - \sqrt{\frac{V_1-\bar{u}}{N/Q}}$), at the equilibrium not all the users choose 5G or 5G+WiFi service (i.e., $x_1^* + x_2^* < 1$).
\end{proposition}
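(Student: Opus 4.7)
The plan is to apply Proposition~\ref{prop:TMC-2} after specializing to $V_1=V_2$ and $\theta\sim U[0,1]$, so that $F^{-1}(x)=x$, and then unpack the resulting polynomial system. First, substituting into $u_1(x_1^*)=u_2(x_1^*)$ cancels the $V_1$ and $p_1$ terms and leaves the single indifference relation
\begin{equation*}
p_2 \;=\; \alpha\,x_2^*\,\frac{N\bigl(x_1^*+x_2^*(1-\alpha x_2^*)\bigr)}{Q}\,x_1^*,
\end{equation*}
which is common to both the interior regime ($x_1^*+x_2^*<1$, governed by \eqref{e7}) and the boundary regime ($x_1^*+x_2^*=1$, governed by \eqref{e8}).

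For the first displayed case I focus on the boundary regime. Setting $x_1^*=1-x_2^*$ collapses the indifference relation to \eqref{eq1}, namely $p_2=g(x_2^*)$ with $g(x):=\alpha x\,\tfrac{N}{Q}(1-\alpha x^2)(1-x)$ on $[0,1]$. A direct differentiation shows that $g'(x)=0$ is equivalent to \eqref{hatx}, and a sign analysis of that cubic (using its single inflection point and negative values of the derivative at both endpoints of a shifted interval) gives a unique root $\hat{x}_2$ in $(0,1)$; together with $g(0)=g(1)=0$ this identifies $\hat{x}_2$ as the unique interior maximizer of $g$ and $\hat{p}_2=g(\hat{x}_2)$ as its maximum value. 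Hence $p_2\le\hat{p}_2$ is exactly the range for which $p_2=g(x_2^*)$ admits a root in $(0,1)$, and selecting the largest such root (per Proposition~\ref{prop:TMC-2}) yields the claimed $x_2^*$. To verify the boundary regime is actually realized, I would check the side condition $u_2(1)\ge\bar{u}$ from \eqref{e8}: after substituting the indifference relation and $x_1^*=1-x_2^*$, it reduces to $p_1 \le V_1-\bar{u}-\bigl(1-\alpha(x_2^*)^2\bigr)^2\tfrac{N}{Q}$, for which the stated hypothesis $p_1\le V_1-\bar{u}-N/Q$ is sufficient because $\bigl(1-\alpha(x_2^*)^2\bigr)^2\le 1$.

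For the second displayed case (the lower-mid region of Fig.~\ref{bg}) I instead work with the full system \eqref{e7}. From $u_2(x_1^*+x_2^*)=\bar{u}$ I obtain
\begin{equation*}
(1-\alpha x_2^*)\,\tfrac{N(x_1^*+x_2^*(1-\alpha x_2^*))}{Q}\,(x_1^*+x_2^*) \;=\; V_1-\bar{u}-p_1-p_2.
\end{equation*}
Were the boundary regime to hold, the left-hand side would simplify to $h(x_2^*)\,\tfrac{N}{Q}$ with $h(x):=(1-\alpha x)(1-\alpha x^2)$. A derivative check places the only positive critical point of $h$ strictly above $1$ for every $\alpha\in(0,1)$, so $h$ is monotone decreasing on $[0,1]$ with $\min_{x\in[0,1]}h(x)=(1-\alpha)^2$, and the hypothesis $\alpha<1-\sqrt{(V_1-\bar{u})/(N/Q)}$ rearranges exactly to $(1-\alpha)^2>(V_1-\bar{u})Q/N$. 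Thus the left-hand side strictly exceeds $V_1-\bar{u}$, whereas the right-hand side is at most $V_1-\bar{u}$ for $p_1,p_2\ge 0$; this contradiction rules out the boundary regime, and Proposition~\ref{prop:TMC-2} then forces $x_1^*+x_2^*<1$.

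The main obstacle, and the reason five sub-cases appear in Fig.~\ref{bg} rather than a clean dichotomy, is that combining \eqref{e7} or \eqref{e8} with the indifference relation produces quartic (or higher) polynomial equations in $x_2^*$, so existence, uniqueness, and the selection rule (largest root in $(0,1)$) must be checked region by region. To dispatch the remaining three cases I would organize the $(p_1,p_2)$ partition via three natural thresholds: $\hat{p}_2$ (the maximum of $g$ governing the boundary regime), $\hat{p}_2'$ in \eqref{pro4_} (the analogous extremum in the interior regime, obtained by eliminating $x_1^*$ with the second equation of \eqref{e7}), and $\hat{p}_1$ implicitly defined by \eqref{pro4} (the value of $p_1$ at which the interior and boundary regimes meet). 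The two displayed cases are the extreme corners of this partition; the other three are handled by the same recipe of indifference relation, participation constraint, and monotonicity of $g$ along the appropriate slice.
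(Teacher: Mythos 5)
Your proposal is correct and follows essentially the route the paper itself implies (the paper gives no standalone proof of this proposition; it is obtained by specializing Proposition~\ref{prop:TMC-2} to $V_1=V_2$ and $\theta\sim U[0,1]$, and your indifference relation and the cubic \eqref{hatx} are exactly the objects the paper uses in \eqref{a25} and in Appendix E). One small imprecision: your parenthetical justification for uniqueness of $\hat{x}_2$ is off as stated, since the derivative $12\alpha x^2-6\alpha x-2$ of the cubic is positive at $x=1$ when $\alpha>1/3$; the correct quick argument is that the cubic equals $1$ at $x=0$, equals $\alpha-1<0$ at $x=1$, and is first decreasing then increasing on $[0,1]$ with a negative value at $x=1$, so it crosses zero exactly once.
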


We can observe from Fig. \ref{bg} that if $p_1$ or $p_2$ is too high, each network will lose subscribers, and Proposition~\ref{prop:TMC-3} tells that if both the 5G and WiFi networks charge low prices, they together occupy the whole market. 

Furthermore, the decreasing boundary curve (in red or green) in Fig. \ref{bg} implies that even if the WiFi operator wants to charge a higher price $p_2$, the 5G operator can still keep some users to add on the WiFi service to help offload by lowering his own price $p_1$.

\begin{corollary}\label{coro:TMC-1}
Assume the 5G operator keeps the same servcie price $p_1 = \bar{p}_1^*$  with $\bar{p}_1^*$ in \eqref{p_5} after the introduction of the crowdsourced WiFi network, there are a positive number of users to add on WiFi (i.e.,  $x_2^* > 0$) as long as $Q < \frac{3N}{V_1 -\bar{u}}$ and the WiFi deployment cost is small as 
\begin{align}
    &c \leq \frac{1 - \sqrt{1 - 3\alpha \sqrt{\frac{V_1-\bar{u}}{3N/Q}}}}{3} \frac{N}{Q} \sqrt{\frac{V_1-\bar{u}}{3N/Q}} \bigg(  \sqrt{\frac{V_1-\bar{u}}{3N/Q}} - \nonumber \\
    & \! \frac{1 - \sqrt{1 - 3\alpha \sqrt{\frac{V_1-\bar{u}}{3N/Q}}}}{3\alpha} \!\bigg(1 - \frac{1 - \sqrt{1 - 3 \alpha \sqrt{\frac{V_1-\bar{u}}{3N/Q}}}}{3}\bigg)\!\bigg).\!\! \!\label{c_eq}
\end{align}
\end{corollary}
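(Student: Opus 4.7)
The plan is to reduce the claim to a one-variable optimization: with $p_1$ frozen at $\bar{p}_1^*$, I will derive the WiFi operator's inverse demand $p_2(x_2^*)$ in closed form, compute its peak $p_2^{\max}$, and argue that $x_2^*>0$ at equilibrium exactly when $c\le p_2^{\max}$, which I will then identify with the right-hand side of \eqref{c_eq}.

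First, I apply Proposition~\ref{prop:TMC-1} under $Q<3N/(V_1-\bar{u})$ to obtain $\bar{p}_1^*=\tfrac{2}{3}(V_1-\bar{u})$, and I set $s:=\sqrt{(V_1-\bar{u})/(3N/Q)}=\bar{x}_1^*\in(0,1)$. Working in the interior regime $x_1^*+x_2^*<1$ of Fig.~\ref{bg}, the Stage~II indifference conditions \eqref{e7} from Proposition~\ref{prop:TMC-2} specialize, under uniform $\theta$ and $V_1=V_2$, to
\begin{equation*}
p_2=\alpha x_2^*\,\tfrac{N}{Q}\,\bigl(x_1^*+x_2^*(1-\alpha x_2^*)\bigr)\,x_1^*,
\end{equation*}
\begin{equation*}
V_1-\bar{u}-\bar{p}_1^*=\tfrac{N}{Q}\,\bigl(x_1^*+x_2^*-\alpha (x_2^*)^2\bigr)^{\!2}.
\end{equation*}
The second equation pins the effective 5G load to $s$, giving $x_1^*=s-x_2^*+\alpha(x_2^*)^2$; substituting into the first produces the cubic inverse demand $p_2(x_2^*)=\alpha s\tfrac{N}{Q}\,x_2^*\bigl(s-x_2^*+\alpha(x_2^*)^2\bigr)$.

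I would then maximize $p_2(\cdot)$ over $x_2^*\in(0,s)$. The first-order condition $3\alpha(x_2^*)^2-2x_2^*+s=0$ admits a real smaller root $x_2^{\dagger}=\tfrac{1-\sqrt{1-3\alpha s}}{3\alpha}$, which is the peak of the cubic (the larger root being the trough), under the implicit regularity $3\alpha s\le 1$ that is also what makes the square root in \eqref{c_eq} real. Substituting $x_2^{\dagger}$ back in and using the identity $\alpha x_2^{\dagger}=\tfrac{1-\sqrt{1-3\alpha s}}{3}$ reproduces the right-hand side of \eqref{c_eq} verbatim as $p_2^{\max}$.

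Finally, since $p_2(x_2^*)$ rises from $0$ at $x_2^*=0$ to the peak $p_2^{\max}$ and then falls, the WiFi operator's profit $Nx_2^*(p_2-c)$ is strictly positive for some feasible $p_2\ge c$ if and only if $c\le p_2^{\max}$; hence $x_2^*>0$ at the Stage~I equilibrium precisely under \eqref{c_eq}. The main obstacle will be the cubic step: picking the peak-side root instead of the trough-side one, verifying $x_1^{\dagger}=s-x_2^{\dagger}+\alpha(x_2^{\dagger})^2\in(0,1)$ so that the interior case of Fig.~\ref{bg} actually applies, and—when $s$ is near one—handling the potential transition into the $x_1^*+x_2^*=1$ regime by re-deriving $p_2^{\max}$ from the boundary condition in \eqref{e8}.
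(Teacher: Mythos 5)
Your derivation is correct and is essentially the route the paper intends: the paper's Appendix A simply defers the existence of the threshold to Proposition~\ref{prop:TMC-3} and Fig.~\ref{bg}, and your reconstruction---using \eqref{e7} to pin the effective 5G load at $\sqrt{(V_1-\bar{u}-\bar{p}_1^*)/(N/Q)}=\bar{x}_1^*$, obtaining the cubic inverse demand $p_2(x_2^*)$, and maximizing it at the smaller root $x_2^\dagger=\frac{1-\sqrt{1-3\alpha \bar{x}_1^*}}{3\alpha}$---reproduces the right-hand side of \eqref{c_eq} exactly. The loose ends you flag (selecting the peak-side root, checking $x_1^\dagger>0$, and the possible transition into the $x_1^*+x_2^*=1$ regime when $\bar{x}_1^*$ is near one) are real but minor, and the paper itself leaves them implicit in the omitted details of Proposition~\ref{prop:TMC-3}; note also that the paper's appendix additionally proves the right-hand side of \eqref{c_eq} is increasing in $\alpha$, which is used only for the ensuing discussion and not for the corollary itself.
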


\begin{proof}
See Appendix A. 
 \end{proof}

Intuitively, as the crowdsourced WiFi network's deployment cost $c$ is small or  the WiFi coverage addition per AP $\alpha$ increases (e.g., supported by the advanced amendments in 802.11ac/ad/ax) in \eqref{c_eq}, the WiFi operator's pricing becomes less-constrained or his network coverage improves, helping to attract a positive number of subscribers.   

{Note that we can run one-dimensional searches in the range of [0, 1] to solve polynomial equations (e.g., \eqref{eq1}) of Proposition~\ref{prop:TMC-3} for users' equilibrium choices in Stage II. The complexity order is $\Theta(\frac{1}{\epsilon_0})$ with precision error $\epsilon_0$.}

\section{Equilibrium Analysis of Operators' Pricing After the introduction of the Crowdsourced WiFi}\label{section:TMC-5}
In Stage I, by predicting users' equilibrium subscription in Stage II (as analyzed in Section~\ref{section:TMC-4}), the 5G and the crowdsourced WiFi operators simultaneously decide prices $p_1 \geq 0$ and $p_2 \geq c$ to maximize their own profits in \eqref{e5} and \eqref{e6}, respectively.

However, as shown in \eqref{eq1} and Proposition~\ref{prop:TMC-3}, we do not have closed-form solutions of $x_1^*$ and $x_2^*$ in Stage II, and it is difficult to derive the closed-form profit objectives $\pi_1$ and $\pi_2$. We are even not sure if these profits are concave or not with respect to $p_1$ and $p_2$.   Despite of this, similar to the related literature (e.g., \cite{gao2013economics, joe2015sponsoring}), we successfully derive analytical results to answer the key questions in this section: whether the 5G operator benefits from the introduction of the crowdsourced WiFi and how he should adapt his pricing. Our tractable  analysis approach is to first keep the 5G operator's price unchanged after the introduction of the crowdsourced WiFi (i.e., $p_1=\bar{p}_1^*$ with $\bar{p}_1^*$ in \eqref{p_5}), and we will prove that in this special case of the 5G operation, the 5G operator can still gain from the introduction of the crowdsourced WiFi. Note that the special case of $p_1=\bar{p}_1^*$ makes users' equilibrium subscription easier to predict.   
Recent studies (e.g., \cite{wang2018cross, han2016backhaul}) have proposed methods to allow WiFi links to communicate with heterogeneous protocols to avoid interference.
As this type of technology helps avoid interference, the WiFi offloading efficiency will be enhanced and we expect greater 5G profit after the introduction of the crowdsourced WiFi.

\begin{lemma}\label{lemma:TMC-3}
At the equilibrium of the whole dynamic game, the 5G operator obtains at least the same profit after the introduction of the crowdsourced WiFi, i.e., $\pi_1^* \geq \bar{\pi}_1^*$ with $\bar{\pi}_1^*$ in \eqref{pi_5}. 
\end{lemma}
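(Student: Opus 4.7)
The plan is to reduce Lemma \ref{lemma:TMC-3} to a single-price comparison. At the Stage-I equilibrium of the full game, the 5G operator chooses $p_1^*$ to maximize his profit \eqref{e5} given the WiFi operator's best response and the Stage-II equilibrium from Section~\ref{section:TMC-4}; hence his equilibrium profit is at least what he would obtain by deviating to any fixed price and letting the WiFi operator best-respond. It therefore suffices to exhibit one post-introduction price at which the 5G profit already weakly exceeds $\bar{\pi}_1^*$. Following the remark preceding the lemma, I will take this candidate price to be the pre-introduction optimum $\bar{p}_1^*$ from \eqref{p_5}, and compare the total 5G-paying mass $N(x_1^*+x_2^*)$ under $p_1=\bar{p}_1^*$ against $N\bar{x}_1^*$.

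The heart of the argument is to show that, when $p_1=\bar{p}_1^*$ is held fixed and the WiFi operator plays any best response $p_2$, the induced Stage-II equilibrium satisfies $x_1^*+x_2^*\geq \bar{x}_1^*$. I will prove this by contradiction. Suppose $x_1^*+x_2^*<\bar{x}_1^*$, and consider the benchmark boundary type $\theta_0:=F^{-1}(\bar{x}_1^*)$ (or $\theta_0=1$ in the saturated case $\bar{x}_1^*=1$). By Proposition~\ref{prop:TMC-1}, this type satisfies $\bar{u}_1(\theta_0)=V_1-\frac{N\bar{x}_1^*}{Q}\theta_0-\bar{p}_1^*=\bar{u}$. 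In the post-introduction setting with $p_1=\bar{p}_1^*$, the same user's 5G-only payoff from \eqref{e3} is
\begin{equation*}
u_1(\theta_0) = V_1 - \frac{N\bigl(x_1^* + x_2^*(1-\alpha x_2^*)\bigr)}{Q}\theta_0 - \bar{p}_1^*.
\end{equation*}
Because $\alpha\in(0,1)$ and $x_2^*\in[0,1]$, the effective 5G load satisfies $x_1^*+x_2^*(1-\alpha x_2^*)\leq x_1^*+x_2^*<\bar{x}_1^*$, so $u_1(\theta_0)>\bar{u}_1(\theta_0)=\bar{u}$. Lemma~\ref{lemma:TMC-2} then forces $\theta_0\leq F^{-1}(x_1^*+x_2^*)$, contradicting $\theta_0=F^{-1}(\bar{x}_1^*)>F^{-1}(x_1^*+x_2^*)$. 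Hence $x_1^*+x_2^*\geq \bar{x}_1^*$, and multiplying by $N\bar{p}_1^*$ gives $\pi_1(\bar{p}_1^*,p_2)\geq \bar{\pi}_1^*$, from which $\pi_1^*\geq \bar{\pi}_1^*$ follows.

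The main obstacle I anticipate is handling two edge regimes cleanly so that the contradiction step is not vacuous. First, in the saturated benchmark $\bar{x}_1^*=1$ (large $Q$), the boundary type is $\theta_0=1$ and $\bar{p}_1^*=V_1-\bar{u}-N/Q$; I need to verify that the strict inequality $u_1(1)>\bar{u}$ still closes the contradiction, which it does whenever $x_1^*+x_2^*(1-\alpha x_2^*)<1$. Second, in the corner $x_2^*=0$ (case (c) of Fig.~\ref{fig4}), the Stage-II partition of Lemma~\ref{lemma:TMC-2} collapses to the benchmark structure \eqref{e12}, and the contradiction degenerates to $x_1^*\geq \bar{x}_1^*$, which is consistent with the benchmark. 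Because the WiFi operator's best response may land in any of the five regimes of Proposition~\ref{prop:TMC-3}, the contradiction is deliberately designed to use only the generic cutoff structure from Lemma~\ref{lemma:TMC-2}, making it regime-agnostic. A pleasant by-product is that the inequality is strict whenever $x_2^*>0$, because then $\alpha x_2^*>0$ strictly reduces 5G congestion at $\theta_0$; consequently the 5G operator strictly benefits in every non-degenerate scenario, which foreshadows the sharper structural statements later in Section~\ref{section:TMC-5}.
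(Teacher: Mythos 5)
Your proof is correct and follows essentially the same route as the paper: fix $p_1=\bar p_1^*$, show the total 5G-paying mass $x_1^*+x_2^*$ is at least $\bar x_1^*$ (you argue by contradiction via the boundary type $\theta_0=F^{-1}(\bar x_1^*)$, the paper by a direct three-case comparison of the effective load $x_1^*+x_2^*(1-\alpha x_2^*)$ against $\bar x_1^*$), then invoke optimality of $p_1^*$. One caveat on your closing ``by-product'': the fixed-price inequality is \emph{not} strict whenever $x_2^*>0$ --- in the saturated regime $\bar x_1^*=1$ one can have $x_1^*+x_2^*=\bar x_1^*=1$ with $x_2^*>0$ and hence unchanged profit at $p_1=\bar p_1^*$, which is precisely why the paper's strict-benefit statements (Propositions~\ref{prop:TMC-4} and~\ref{prop:TMC-6}) require either $Q<3N/(V_1-\bar u)$ or a separate argument showing the operator raises his price.
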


\begin{proof}
It is sufficient to prove $\pi_1^* \geq \bar{\pi}_1^*$ in the special case of the 5G operation with  $p_1 = \bar{p}_1^*$. We divide our analysis into the following three cases, depending on $(x_1^*, x_2^*)$. 

\textit{Case 1.} If $x_2^* = 0$, telling that no user adds on the WiFi option at the equilibrium, the 5G operator faces the same situation as before the introduction of the crowdsourced WiFi. Thus, his profit remains unchanged.

\textit{Case 2.} If $x_2^* > 0$, $x_1^* + x_2^*(1  - \alpha x_2^*) \leq \bar{x}_1^*$, we have $x_1^* < \bar{x}_1^*$ due to $x_2^*(1  - \alpha x_2^*) > 0$ in the second condition above. According to Lemma~\ref{lemma:TMC-2}, users with $\theta > x_1^*$ have $u_2(\theta) > u_1(\theta)$. Therefore, we have $u_2(\bar{x}_1^*) > u_1(\bar{x}_1^*)$. Since
\begin{equation}
   u_1(\bar{x}_1^*)\! = \!V_1 \! -\!\frac{N}{Q}(x_1^* + x_2^*( 1 - \alpha x_2^* ))\bar{x}_1^* - \bar{p}_1^* \!
   \geq \!V_1 \!- \!\frac{N}{Q}(\bar{x}_1^*)^2 - \bar{p}_1^*, \label{a21}
\end{equation}
where the inequality holds due to $x_1^* + x_2^*(1  - \alpha x_2^*) \leq \bar{x}_1^*$. According to Lemma~\ref{lemma:TMC-1}, before the introduction of the crowdsourced WiFi, we have users with $\theta = \bar{x}_1^*$ obtain the fixed reservation payoff from joining the 5G network, that is,  
\begin{equation}
 \bar{u}_1(\bar{x}_1^*) = V_1 - \frac{N}{Q}(\bar{x}_1^*)^2 - \bar{p}_1^* = \bar{u}.   \label{a22}
\end{equation}
Combining \eqref{a21} and \eqref{a22}, we have $u_1(\bar{x}_1^*) \geq \bar{u}_1(\bar{x}_1^*) = \bar{u}$. Since $u_2(\bar{x}_1^*) > u_1(\bar{x}_1^*)$, we have $u_2(\bar{x}_1^*) > \bar{u}$.
If $\bar{x}_1^* = 1$, we have $u_2(1) > \bar{u}$. Since $u_2(\theta)$ in \eqref{e4} is decreasing in $\theta$, we have $u_2(\theta) > \bar{u}$ for $\theta > x_1^*$ and thus have $x_1^* + x_2^* = \bar{x}_1^* = 1$. If $\bar{x}_1^* < 1$ and $x_1^* + x_2^* = 1$, we have $x_1^* + x_2^* > \bar{x}_1^*$. If $\bar{x}_1^* < 1$ and $x_1^* + x_2^* < 1$,  we further notice that $u_2(x_1^* + x_2^*) = \bar{u}$ with $x_1^* + x_2^* < 1$ according to \eqref{e7}. Since $u_2(\theta)$ in \eqref{e4} is decreasing in $\theta$ and $u_2(\bar{x}_1^*) > \bar{u}$, we have $x_1^* + x_2^* > \bar{x}_1^*$. 
Then we conclude $ x_1^* + x_2^* \geq \bar{x}_1^*$. This tells that the 5G operator's subscriber number non-decreases at the same price $p_1=\bar{p}_1^*$, leading to at least the same profit for the 5G operator. 
 
\textit{Case 3.} If $x_2^* > 0$ and $x_1^* + x_2^*(1  - \alpha x_2^*)  > \bar{x}_1^*$, we have $ x_1^* + x_2^* > \bar{x}_1^*$ due to $1  - \alpha x_2^* < 1$. Therefore, the 5G operator's subscriber number increases at the same price $p_1=\bar{p}_1^*$, leading to a greater profit.
 \end{proof}

Following Lemma~\ref{lemma:TMC-3}, we further analyze when the 5G operator strictly benefits from the introduction of the crowdsourced WiFi. 

\begin{proposition}\label{prop:TMC-4}
At the equilibrium of the whole dynamic game, the 5G operator obtains a strictly greater profit after the introduction of  the crowdsourced WiFi, i.e., $\pi_1^* > \bar{\pi}_1^*$ with $\bar{\pi}_1^*$ in \eqref{pi_5}, as long as the 5G capacity is small (i.e., $Q < 3N/(V_1 -\bar{u})$) and WiFi deployment cost is small in \eqref{c_eq}.
\end{proposition}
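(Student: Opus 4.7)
The plan is to strengthen Lemma~\ref{lemma:TMC-3}'s weak inequality $\pi_1^*\geq \bar\pi_1^*$ to a strict one by exploiting Corollary~\ref{coro:TMC-1} to rule out the ``no WiFi subscribers'' case and by extracting a strict slack from the remaining cases. As in Lemma~\ref{lemma:TMC-3}, I would consider the suboptimal 5G strategy of keeping $p_1=\bar p_1^*$ with $\bar p_1^*$ in \eqref{p_5}; since the equilibrium price $p_1^*$ maximizes $\pi_1$, the resulting profit is a lower bound on $\pi_1^*$, so it suffices to show that this suboptimal profit already strictly exceeds $\bar \pi_1^*$.

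Under the hypotheses $Q<3N/(V_1-\bar u)$ and $c$ small as in \eqref{c_eq}, Corollary~\ref{coro:TMC-1} guarantees $x_2^*>0$ at the Stage~II equilibrium given $p_1=\bar p_1^*$, which immediately eliminates Case~1 of Lemma~\ref{lemma:TMC-3}'s proof (where $x_2^*=0$ and the 5G operator's situation is unchanged). Only Cases~2 and~3 remain. In Case~3, where $x_1^*+x_2^*(1-\alpha x_2^*)>\bar x_1^*$, the argument in Lemma~\ref{lemma:TMC-3} already delivers $x_1^*+x_2^*>\bar x_1^*$ with strict inequality (since $1-\alpha x_2^*<1$), and hence $N(x_1^*+x_2^*)\bar p_1^*>N\bar x_1^*\bar p_1^*=\bar\pi_1^*$.

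For Case~2, I would retrace Lemma~\ref{lemma:TMC-3}'s chain more carefully to track where strictness comes from. Since $x_2^*>0$, Lemma~\ref{lemma:TMC-2} implies that every $\theta>F^{-1}(x_1^*)$ strictly prefers 5G+WiFi, so $u_2(\bar x_1^*)>u_1(\bar x_1^*)\geq \bar u$ with the last inequality inherited from Lemma~\ref{lemma:TMC-3}'s derivation via \eqref{a21}--\eqref{a22}. By Proposition~\ref{prop:TMC-1}, the assumption $Q<3N/(V_1-\bar u)$ forces $\bar x_1^*<1$ strictly. Using the monotonicity of $u_2(\theta)$ in $\theta$, the cutoff $x_1^*+x_2^*$ that solves $u_2(x_1^*+x_2^*)=\bar u$ in \eqref{e7} must then exceed $\bar x_1^*$ strictly; the boundary subcase $x_1^*+x_2^*=1$ gives $1>\bar x_1^*$ strictly as well. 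Thus in every remaining branch, $N(x_1^*+x_2^*)\bar p_1^*>\bar\pi_1^*$, and by optimality of $p_1^*$ we conclude $\pi_1^*\geq N(x_1^*+x_2^*)\bar p_1^*>\bar\pi_1^*$.

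The main obstacle is the strictness bookkeeping in Case~2, where Lemma~\ref{lemma:TMC-3} only gave a weak inequality; the delicate point is the possibility that $\bar x_1^*=1$, which would collapse the strict slack. I resolve this by invoking Proposition~\ref{prop:TMC-1} to rule that boundary out under $Q<3N/(V_1-\bar u)$, and by using that $u_2(\bar x_1^*)>u_1(\bar x_1^*)$ is strict whenever $x_2^*>0$ rather than only weakly so.
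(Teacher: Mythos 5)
Your proposal is correct, and its skeleton matches the paper's: fix the 5G price at $p_1=\bar p_1^*$, invoke Corollary~\ref{coro:TMC-1} to secure $x_2^*>0$, show the subscriber count $x_1^*+x_2^*$ strictly exceeds $\bar x_1^*$, and conclude by optimality of $p_1^*$. The difference is in how strictness is extracted in the interior case. You strengthen Lemma~\ref{lemma:TMC-3}'s Case~2 by a comparison argument: strict monotonicity of $u_2(\theta)$, the strict preference $u_2(\bar x_1^*)>u_1(\bar x_1^*)\geq\bar u$, and the observation that $Q<3N/(V_1-\bar u)$ forces $\bar x_1^*<1$ via \eqref{x_5}, which correctly closes the one branch where Lemma~\ref{lemma:TMC-3} was only weak. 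The paper instead splits on $x_1^*+x_2^*=1$ versus $x_1^*+x_2^*<1$ and, in the latter case, substitutes the two Stage-II indifference conditions \eqref{e7} into each other to obtain the exact identity $x_1^*+x_2^*(1-\alpha x_2^*)=\bar x_1^*$, from which $x_1^*+x_2^*>\bar x_1^*$ follows immediately since $\alpha x_2^*>0$. Your route buys a softer argument that reuses Lemma~\ref{lemma:TMC-3} wholesale; the paper's route buys a sharper structural fact (the effective 5G load at the unchanged price is exactly the benchmark load $\bar x_1^*$, so the entire gain comes from WiFi offloading), which is also reused in later proofs. Both are valid.
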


\begin{proof}
See Appendix B.
\end{proof}

Intuitively, Proposition~\ref{prop:TMC-4} tells that a positive fraction $x_2^*>0$ of users in the crowdsourced WiFi network will help offload 5G traffic to WiFi, and thus the 5G operator benefits from the introduction of the crowdsourced WiFi. 

By combining Corollary~\ref{coro:TMC-1} and Proposition~\ref{prop:TMC-4}, we have the result below.
\begin{corollary}\label{coro:TMC-2}
At the equilibrium, a positive number of users will add on WiFi (i.e., $x_2^* > 0$) as long as  $Q < \frac{3N}{V_1-\bar{u}}$  and the WiFi deployment cost $c$ is small in \eqref{c_eq}. In this case, the 5G operator strictly benefits from the introduction of the crowdsourced WiFi.
\end{corollary}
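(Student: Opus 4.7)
The plan is to chain Corollary 1 and Proposition 4 together, since the two hypotheses $Q<3N/(V_1-\bar{u})$ and the cost bound \eqref{c_eq} are precisely the hypotheses of both results. Corollary 1 shows that, under these conditions, the special subgame in which the 5G operator fixes $p_1=\bar{p}_1^*$ induces $x_2^*>0$ in Stage II, and Proposition 4 upgrades this to a strict profit increase at the full dynamic-game equilibrium, $\pi_1^*>\bar{\pi}_1^*$. The corollary asks for two claims: first, $x_2^*>0$ at the whole-game equilibrium, and second, the strict profit gain. The second is Proposition 4 verbatim, so the only real work is deriving the first.

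I would obtain $x_2^*>0$ at equilibrium by contradiction. Suppose the equilibrium outcome has $x_2^*=0$. Then in \eqref{e3} the term $x_2^*(1-\alpha x_2^*)$ vanishes, so the 5G-only payoff reduces to the benchmark expression \eqref{e1}, and the 5G profit objective \eqref{e5} reduces to the benchmark objective \eqref{e2}. Consequently, the 5G operator's best response in Stage I is governed by exactly the benchmark Stackelberg problem analyzed in Proposition 1, whose maximum value is $\bar{\pi}_1^*$. This forces $\pi_1^*\leq\bar{\pi}_1^*$, directly contradicting the strict inequality $\pi_1^*>\bar{\pi}_1^*$ supplied by Proposition 4. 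Hence $x_2^*>0$ at the whole-game equilibrium, and the two claims of the corollary follow.

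The main (and mild) obstacle is making the collapse argument in the contradiction step airtight: one must verify that whenever $x_2^*=0$, both the user payoff comparisons of Lemma 2 and the 5G profit \eqref{e5} reduce cleanly to the benchmark expressions, irrespective of the WiFi operator's choice of $p_2$. This is immediate from the forms of \eqref{e1}--\eqref{e3} and \eqref{e5}, because $p_2$ only enters payoffs through users who subscribe to WiFi, of whom there are none in the hypothesized $x_2^*=0$ outcome, so the 5G operator can never do better than $\bar{\pi}_1^*$ while staying within the $x_2^*=0$ regime. With this observation in hand, the corollary falls out as a one-line consequence of Corollary 1 (which supplies the premise of Proposition 4) and Proposition 4 itself.
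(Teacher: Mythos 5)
Your proposal is correct and follows the paper's intended route --- the paper's entire justification is ``by combining Corollary~\ref{coro:TMC-1} and Proposition~\ref{prop:TMC-4}'' --- but you add a step the paper leaves implicit, and it is worth having. Corollary~\ref{coro:TMC-1} only establishes $x_2^*>0$ in the special sub-case where the 5G operator keeps $p_1=\bar{p}_1^*$, whereas Corollary~\ref{coro:TMC-2} asserts $x_2^*>0$ at the actual whole-game equilibrium price $p_1^*$, which need not equal $\bar{p}_1^*$ (indeed Propositions~\ref{prop:TMC-8} and \ref{prop:TMC-9} show it generally does not). Your contradiction argument closes exactly this gap: if $x_2^*=0$ at equilibrium, then $p_2$ is paid by nobody, \eqref{e3} collapses to \eqref{e1}, the Stage~II cutoff condition collapses to that of Lemma~\ref{lemma:TMC-1}, and \eqref{e5} collapses to \eqref{e2}, so the realized profit is $\bar{\pi}_1(p_1^*)\leq\bar{\pi}_1^*$, contradicting the strict inequality $\pi_1^*>\bar{\pi}_1^*$ of Proposition~\ref{prop:TMC-4}. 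One small wording caveat: it is cleaner to say that the \emph{realized} equilibrium profit equals the benchmark objective evaluated at $p_1^*$ (hence is bounded by the benchmark maximum) than to say the operator's ``best response is governed by the benchmark problem,'' since a deviation by the 5G operator could in principle trigger $x_2>0$; but this does not affect the validity of the bound $\pi_1^*\leq\bar{\pi}_1^*$ under the hypothesized $x_2^*=0$ outcome. In short, your write-up is a strictly more careful version of the paper's one-line proof.
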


As the crowsourced WiFi is introduced to cater to the users with high congestion costs, it has positive user demand ($x_2^*>0$) if the 5G network is severely congested with $Q < \frac{3N}{V_1-\bar{u}}$. Further, the WiFi network should be affordable to deploy, requiring \eqref{c_eq} implying small $c$ or large $\alpha$  to hold.

Next we want to analyze how the 5G operator should adapt his pricing to the crowdsourced WiFi. Recall from Section~\ref{section:TMC-4} that we cannot obtain the closed-form expressions  of $x_1^*$ and $x_2^*$ as functions of $p_1$ and $p_2$. Alternatively, under the target case of $x_2^*>0$, we simplify the profit objective of the 5G operator to
\begin{align}
\pi_1(p_1, x_2^*) = \;\;\;\;\;\;\;\;\;\;\;\;\;\;\;\;\;\;\;\;\;\;\;\;\;\;\;\;\;\;\;\;\;\;\;\;\;\;\;\;\;\;\;\;\;\;\;\;\;\;\;\;\;\;\;\;\;\;\;\; \nonumber\\
  \begin{cases}
  Np_1,  &\!\!\!\!\text{if}\; p_1\! \leq\! V_1\!\! -\bar{u}-\!\! \frac{N}{Q}(1-\alpha \bar{x}_2^2)^2 ,   \\
  N p_1 \!\big( \!\sqrt{\frac{V_1-\bar{u}-p_1}{N/Q}} + \alpha (x_2^*)^2 \!\big)\!, &\!\!\!\!\text{if}\;V_1 \!- \bar{u}-\!\frac{N}{Q}(1-\bar{x}_2^2)^2 \!< \!p_1  \\
  &\!\!\!\!< \!\! V_1 - \bar{u}-\frac{N}{Q}(\bar{x}_2')^2(1- \alpha \bar{x}_2')^2, 
\end{cases} \label{a42}  
\end{align}
where $x_2^* = \max\{\bar{x}_2, \bar{x}_2'\}$, $\bar{x}_2 \in [\hat{x}_2, 1)$ is the greatest root to
\begin{equation*}
    \alpha \bar{x}_2 \frac{N}{Q} (1-\alpha \bar{x}_2^2)(1-\bar{x}_2) - p_2 = 0,
\end{equation*}
$\hat{x}_2$ is the unique solution to \eqref{hatx}
and $\bar{x}_2' \in (0, 1)$ is the greatest root to
\begin{equation*}
    \alpha \bar{x}_2' \frac{N}{Q} \sqrt{\frac{V_1-\bar{u}-p_1}{N/Q}}\bigg(\sqrt{\frac{V_1-\bar{u}-p_1}{N/Q}}-\bar{x}_2' + \alpha (\bar{x}_2')^2\bigg) \!- \!p_2\! =\! 0.
\end{equation*}

Note that it is still difficult to derive the structural result to decide the best pricing response for the 5G operator to the WiFi network, and the profit objective or pricing response of the WiFi operator, not to mention solving the final equilibrium prices ($p_1^*$, $p_2^*$). Despite of this difficulty, we analytically derive the following structural result for guiding the 5G operation.  

\begin{table*}
   \begin{align}
    c \in \Bigg( &\frac{1 - \sqrt{1 - 3\alpha \sqrt{\frac{V_1-\bar{u}}{3N/Q}}}}{3} \frac{N}{Q} \sqrt{\frac{V_1-\bar{u}}{3N/Q}} \bigg(  \sqrt{\frac{V_1-\bar{u}}{3N/Q}} - 
    \frac{1 - \sqrt{1 - 3\alpha \sqrt{\frac{V_1-\bar{u}}{3N/Q}}}}{3\alpha} \!\bigg(1 - \frac{1 - \sqrt{1 - 3 \alpha \sqrt{\frac{V_1-\bar{u}}{3N/Q}}}}{3}\bigg)\bigg),\nonumber \\
    &
    \frac{1 - \sqrt{1 - 3\alpha \sqrt{\frac{V_1-\bar{u}}{N/Q}}}}{3} \frac{N}{Q} \sqrt{\frac{V_1-\bar{u}}{N/Q}} \bigg(  \sqrt{\frac{V_1-\bar{u}}{N/Q}} - 
   \frac{1 - \sqrt{1 - 3\alpha \sqrt{\frac{V_1-\bar{u}}{N/Q}}}}{3\alpha} \!\bigg(1 - \frac{1 - \sqrt{1 - 3 \alpha \sqrt{\frac{V_1-\bar{u}}{N/Q}}}}{3}\bigg)\bigg)\!\Bigg).
    \label{by5}
\end{align}
\hrulefill
\end{table*}

\subsection{Large 5G capacity Regime with $Q \geq \frac{3N}{V_1-\bar{u}}$}\label{section:TMC-5.1}

\begin{proposition}\label{prop:TMC-5}
At the equilibrium, the 5G network operator charges at least the same 5G price (i.e., $p_1^* \geq \bar{p}_1^*$ with $\bar{p}_1^*$ in \eqref{p_5}) if the 5G capacity is large (i.e., $Q \geq 3N/(V_1-\bar{u})$). Furthermore, the 5G operator charges a strictly greater price (i.e., $p_1^* > \bar{p}_1^*$) if the 5G capacity is large (i.e., $Q\geq 3N/(V_1-\bar{u})$) and the WiFi deployment cost is small, i.e.,
\begin{equation}
    c \leq \alpha\hat{x}_2\frac{N}{Q}(1-\alpha\hat{x}_2^2)(1-\hat{x}_2), \label{by8}
\end{equation}
with $\hat{x}_2 \in (0, 1)$ as the unique solution to
\begin{equation*}
    4\alpha \hat{x}_2^3 - 3\alpha \hat{x}_2^2 - 2\hat{x}_2 + 1 = 0.
\end{equation*}
\end{proposition}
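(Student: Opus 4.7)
My strategy is to compare the 5G operator's best responses before and after the introduction of WiFi, exploiting the fact that when $Q \geq 3N/(V_1-\bar{u})$ the benchmark in Proposition~\ref{prop:TMC-1} already has full market penetration ($\bar{x}_1^*=1$) with the $\theta=1$ user tied exactly to reservation payoff $\bar{u}$. Any positive WiFi uptake ($x_2^*>0$) then offloads traffic from 5G, relaxes that marginal user's participation constraint, and creates pricing slack that the 5G operator can convert into a higher price.

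For the weak inequality I would hold $p_1 = \bar{p}_1^* = V_1-\bar{u}-N/Q$ after the introduction of WiFi and examine the induced Stage~II equilibrium. If $x_2^*=0$ then Stage~II reduces to the benchmark, the 5G operator's best response is again $\bar{p}_1^*$, and $p_1^*\geq \bar{p}_1^*$ holds with equality. Otherwise $x_2^*>0$, and Proposition~\ref{prop:TMC-3} together with large $Q$ places the equilibrium in the full-market regime $x_1^*+x_2^*=1$ (the lower-left region of Fig.~\ref{bg}), with the $\theta=1$ user belonging to the 5G+WiFi group. Substituting $x_1^*=1-x_2^*$ into the indifference condition $u_1(x_1^*)=u_2(x_1^*)$ of Proposition~\ref{prop:TMC-2} (using $V_1=V_2$) pins down $p_2 = \alpha x_2^*\frac{N}{Q}(1-\alpha(x_2^*)^2)(1-x_2^*)$, and the boundary user's participation $u_2(1)\geq \bar{u}$ then rewrites as
\[
p_1 \;\leq\; V_1 - \bar{u} - \frac{N(1-\alpha (x_2^*)^2)^2}{Q} \;=:\; p_1^{\mathrm{full}}(x_2^*).
\]
Because $x_2^*>0$ we have $p_1^{\mathrm{full}}(x_2^*)>\bar{p}_1^*$, and since $x_2^*$ depends only on $p_2$ in this regime while the full-market profit $Np_1$ is strictly increasing in $p_1$, the 5G operator's best response satisfies $p_1^*\geq p_1^{\mathrm{full}}(x_2^*)>\bar{p}_1^*$.

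For the strict inequality I need to certify that condition~\eqref{by8} forces $x_2^*>0$ at the equilibrium. By construction $\hat{x}_2$ is the unique interior maximizer of $x_2\mapsto \alpha x_2\frac{N}{Q}(1-\alpha x_2^2)(1-x_2)$ (its first-order condition is $4\alpha\hat{x}_2^3-3\alpha\hat{x}_2^2-2\hat{x}_2+1=0$), so \eqref{by8} states that the top of the full-market $p_2$-region in Fig.~\ref{bg} strictly exceeds the cost $c$. The WiFi operator can therefore pick some $p_2\in(c,\,\alpha\hat{x}_2\frac{N}{Q}(1-\alpha\hat{x}_2^2)(1-\hat{x}_2)]$ yielding $Nx_2(p_2)(p_2-c)>0$ with $x_2(p_2)>0$, whereas $p_2=c$ yields zero profit. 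Hence any best response $p_2^*$ satisfies $x_2^*>0$, and combining with the previous step delivers $p_1^*>\bar{p}_1^*$.

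The main obstacle is ensuring that the 5G operator's equilibrium price indeed sits at $p_1^{\mathrm{full}}(x_2^*)$ or above rather than retreating into a partial-market regime of Fig.~\ref{bg} at a lower price. I plan to use the piecewise profit expression \eqref{a42} together with the monotonicity of $Np_1$ within the full-market regime to rule out downward deviations, and to confirm via a continuity/monotonicity argument on the two best-response correspondences that a simultaneous-move equilibrium exists and remains localized in the full-market regime under the hypotheses of Proposition~\ref{prop:TMC-5}.
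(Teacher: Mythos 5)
Your proposal is correct and follows essentially the same route as the paper's Appendix C: condition \eqref{by8} guarantees $x_2^*>0$, and the full-market branch of the piecewise profit \eqref{a42} equals $Np_1$ and is strictly increasing, forcing $p_1^* \geq V_1-\bar{u}-\frac{N}{Q}(1-\alpha\bar{x}_2^2)^2 > V_1-\bar{u}-\frac{N}{Q} = \bar{p}_1^*$. The only addition is your explicit incentive argument for why the WiFi operator's best response yields $x_2^*>0$ under \eqref{by8}, which the paper asserts by citing its Stage-II characterization rather than spelling out.
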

\begin{proof}
See Appendix C.
\end{proof}

Intuitively, given large 5G capacity, the 5G network is not severe in congestion and already serves all the users (see $\bar{x}_1^*=1$ in \eqref{x_5} in Proposition~\ref{prop:TMC-1}). After the introduction of  the crowdsourced WiFi with $x_2^*$ fraction of users to offload, the 5G congestion reduces and the 5G operator wants to charge a higher price due to improved service quality.

With Proposition~\ref{prop:TMC-5}, we can directly give the condition when the 5G operator strictly benefits from the introduction of the crowdsourced WiFi below.

\begin{figure*}
\centering
\subfloat[When the 5G network capacity is small, $Q = 30$]{\includegraphics[scale = 0.3]{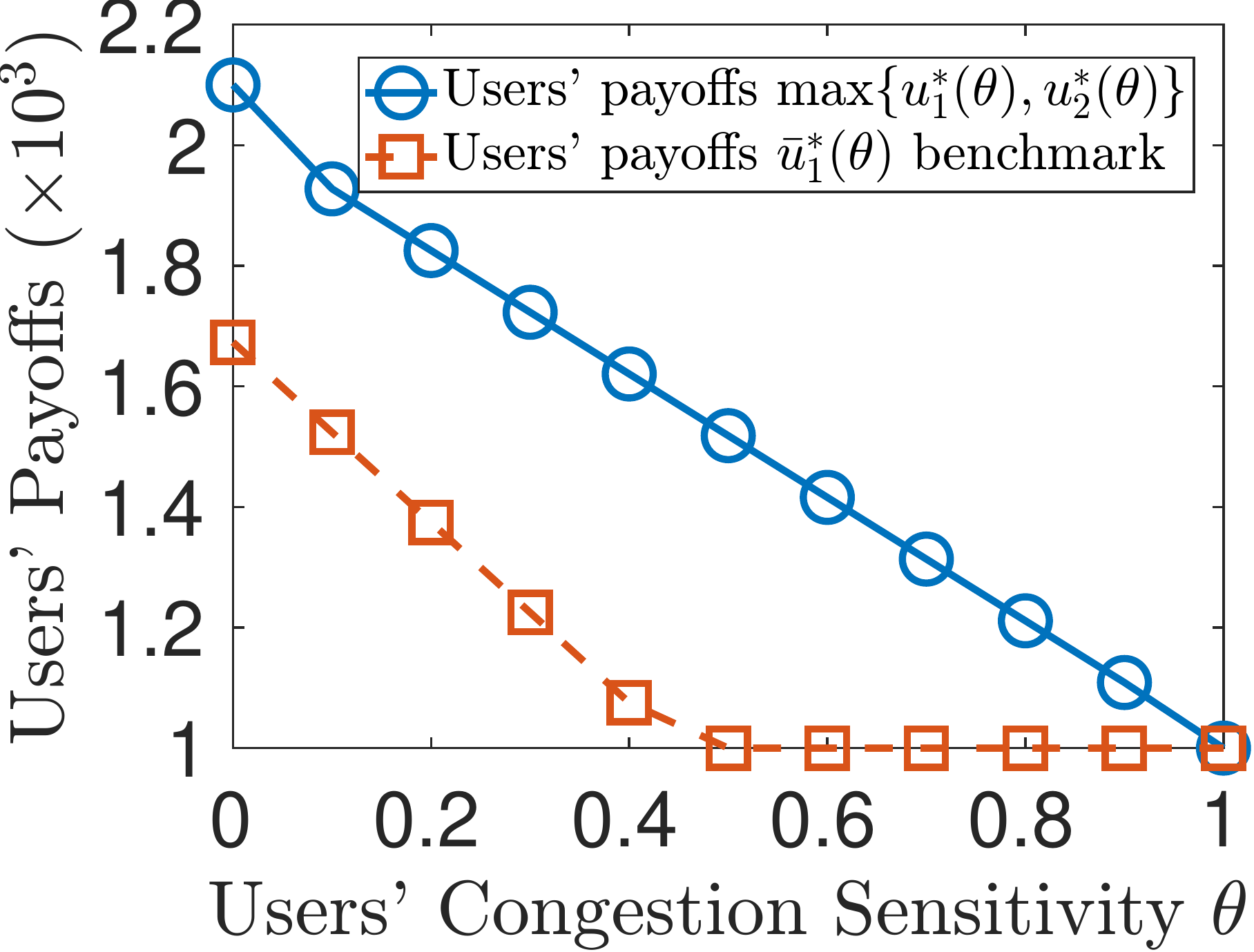}%
\label{fig_first_case}}
\hfil
\subfloat[When the 5G network capacity is small, $Q = 120$]{\includegraphics[scale = 0.3]{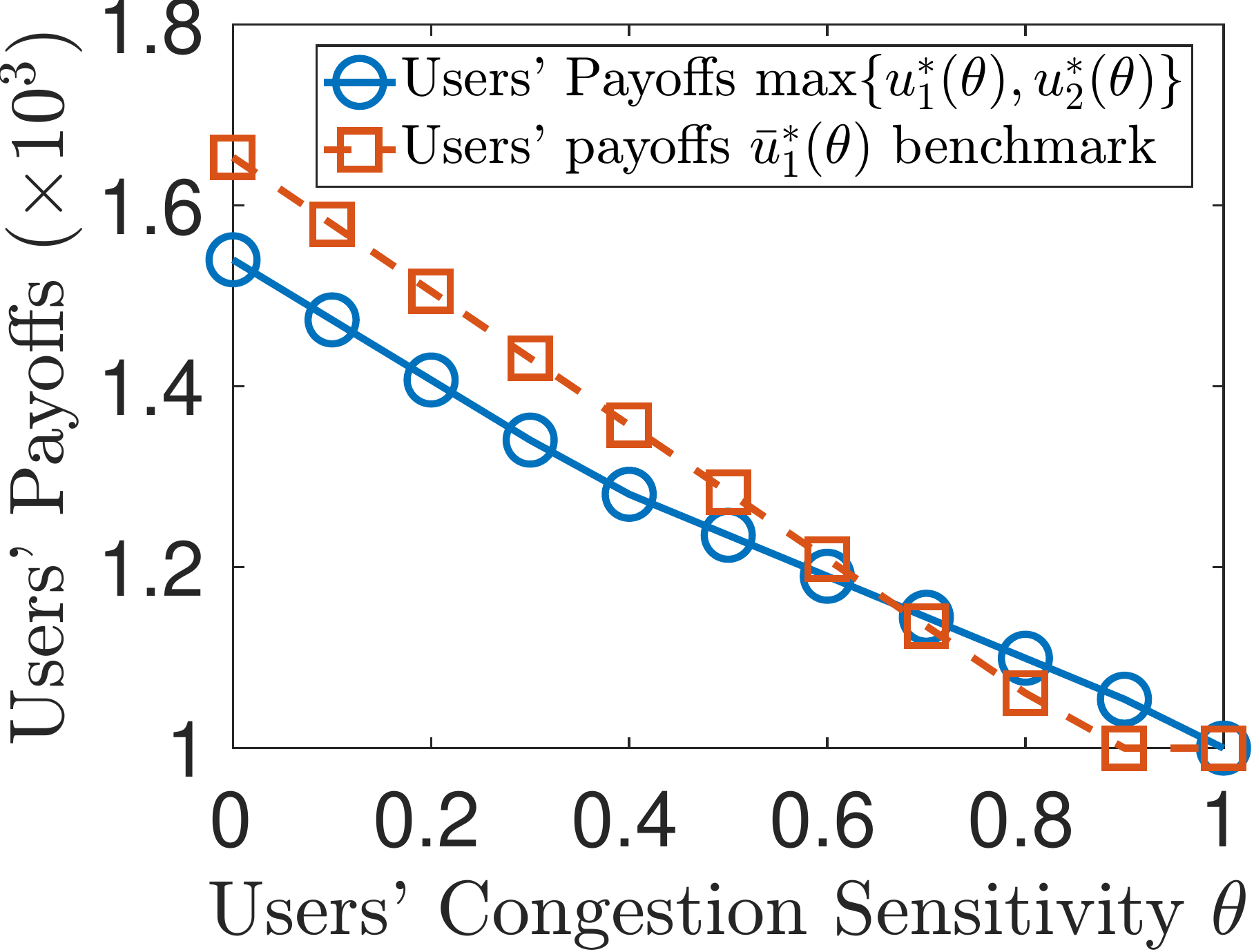}%
\label{fig_second_case}}
\hfil
\subfloat[When the 5G network capacity is non-small, $Q = 180$]{\includegraphics[scale = 0.3]{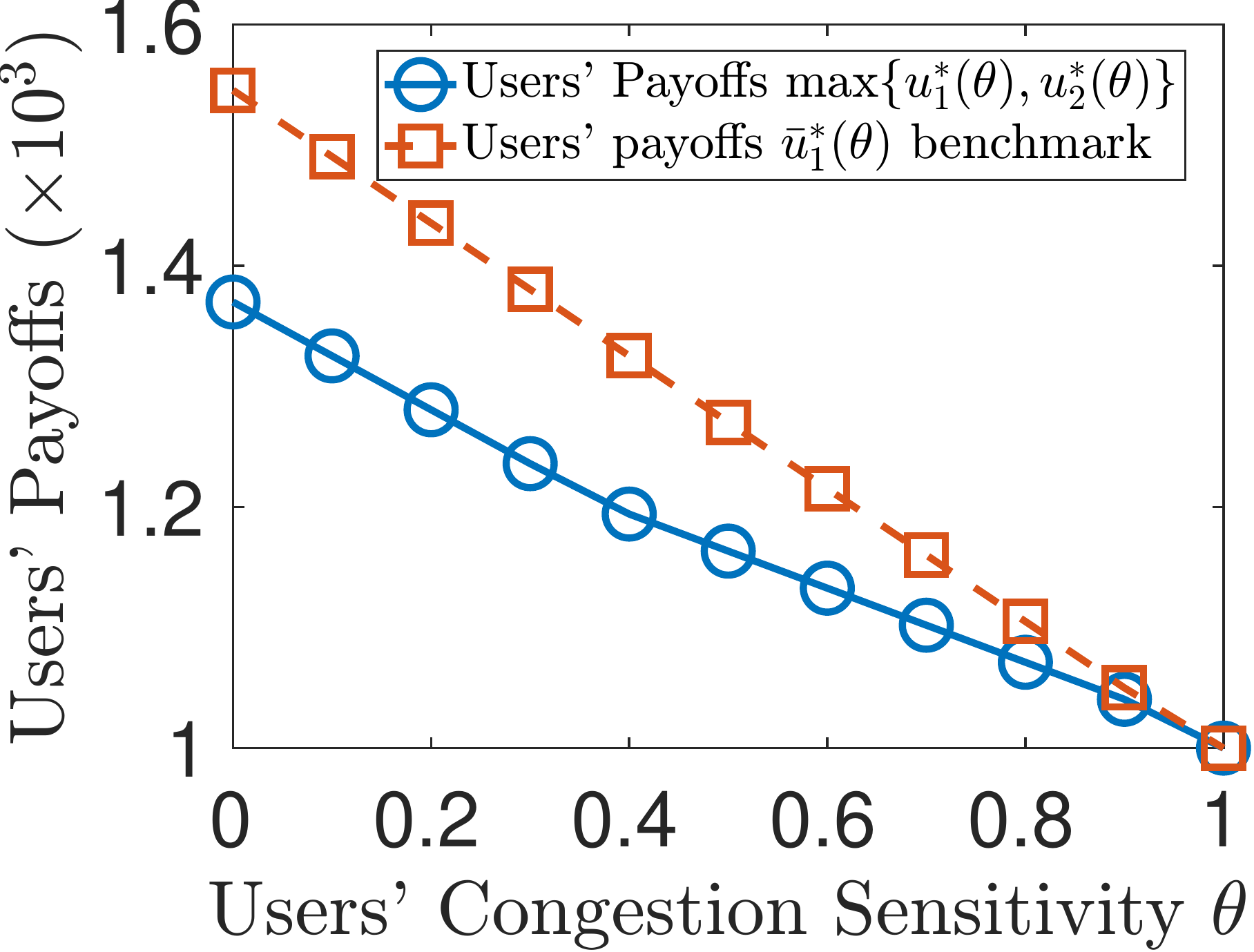}%
\label{fig_third_case}}
\caption{Comparison between all the users' equilibrium payoffs $\max\{u_1^*(\theta), u_2^*(\theta)\}$ and benchmark $\bar{u}_1^*(\theta)$ with and without the crowdsourced WiFi. Here, a user's congestion sensitivity $\theta$ follows truncated distribution $N(0.5, 1)$ in the normalized range [0, 1]. We vary  congestion sensitivities $\theta$, and fix $N = 10^5$, {$V_1 = 3000$}, $\bar{u} = 1000$, $c = 50$, and $\alpha = 0.5$.}
\label{fig-re1}
\end{figure*}

\begin{proposition}\label{prop:TMC-6}
At the equilibrium of the whole dynamic game, the 5G operator obtains a strictly greater profit after the introduction of  the crowdsourced WiFi, i.e., $\pi_1^* > \bar{\pi}_1^*$ with $\bar{\pi}_1^*$ in \eqref{pi_5}, as long as the WiFi deployment cost is small in \eqref{by8},
and the 5G network capacity is large (i.e., $Q \geq 3N/(V_1 -\bar{u})$).
\end{proposition}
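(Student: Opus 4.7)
The plan is to exhibit a feasible deviation $p_1' > \bar{p}_1^*$ at which the 5G profit already strictly exceeds $\bar{\pi}_1^*$; optimality of the equilibrium price $p_1^*$ given $p_2^*$ then forces $\pi_1^* \geq \pi_1(p_1',p_2^*) > \bar{\pi}_1^*$. In the large-$Q$ regime $Q \geq 3N/(V_1 - \bar{u})$, Proposition~\ref{prop:TMC-1} fixes the benchmark values $\bar{p}_1^* = V_1 - \bar{u} - N/Q$, $\bar{\pi}_1^* = N\bar{p}_1^*$ and $\bar{x}_1^* = 1$, so the argument reduces to finding a price strictly above $\bar{p}_1^*$ that still retains all users.

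First I would argue that condition \eqref{by8} ensures a positive mass of add-on WiFi users at the operators' equilibrium, i.e., $\bar{x}_2 > 0$ for $\bar{x}_2$ the greatest root of the cubic listed below \eqref{a42}. The right-hand side of \eqref{by8} coincides with the threshold price $\hat{p}_2$ of Proposition~\ref{prop:TMC-3}, below which the WiFi demand curve in Fig.~\ref{bg}'s lower-left region is strictly positive. Since the WiFi operator's profit $Nx_2(p_2 - c)$ is strictly positive for some $p_2 \in (c, \hat{p}_2]$ whenever $c$ lies below this threshold, his best response cannot leave him with zero demand, hence $\bar{x}_2 > 0$.

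With $\bar{x}_2 > 0$ in hand, set
\begin{equation*}
p_1' := V_1 - \bar{u} - (N/Q)(1 - \alpha \bar{x}_2^2)^2.
\end{equation*}
Because $\alpha \bar{x}_2^2 \in (0, 1)$, we have $(1 - \alpha \bar{x}_2^2)^2 < 1$, giving $p_1' > \bar{p}_1^*$ strictly. This $p_1'$ sits at the upper boundary of the first branch of \eqref{a42}, where $x_1^* + x_2^* = 1$ and the profit simplifies to $Np_1$. Hence $\pi_1(p_1', p_2^*) = Np_1' > N\bar{p}_1^* = \bar{\pi}_1^*$, and equilibrium optimality of $p_1^*$ at $p_2^*$ delivers $\pi_1^* \geq \pi_1(p_1', p_2^*) > \bar{\pi}_1^*$, which is the desired claim.

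The main obstacle I anticipate is cleanly ruling out the scenario in which the WiFi operator's equilibrium lands in the second branch of \eqref{a42} (with $\bar{x}_2 = 0$), since the identity $\bar{p}_1^* < p_1'$ I rely on is tied to the first-branch boundary. I would handle this by a parallel deviation inside the second branch: its profit $N p_1 (\sqrt{(V_1 - \bar{u} - p_1)/(N/Q)} + \alpha(x_2^*)^2)$ evaluated at $p_1 = \bar{p}_1^*$ equals $N \bar{p}_1^* (1 + \alpha(x_2^*)^2) > \bar{\pi}_1^*$ whenever $x_2^* > 0$, so the strict profit gap is robust across the WiFi operator's branch choice. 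Beyond this branch bookkeeping, everything else is a direct inspection of the first two lines of \eqref{a42} and a one-line application of equilibrium optimality.
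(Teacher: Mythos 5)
Your proposal is correct and follows essentially the same route as the paper: the paper's Appendix C likewise uses the fact that $\pi_1=Np_1$ on the first branch of \eqref{a42} up to $p_1'=V_1-\bar{u}-\frac{N}{Q}(1-\alpha\bar{x}_2^2)^2$ to conclude $p_1^*\geq p_1'>\bar{p}_1^*$ (Proposition~\ref{prop:TMC-5}), from which $\pi_1^*\geq Np_1'>N\bar{p}_1^*=\bar{\pi}_1^*$ is immediate in the large-$Q$ regime where $\bar{x}_1^*=1$. Your extra bookkeeping for the $x_1^*+x_2^*<1$ branch via the evaluation at $p_1=\bar{p}_1^*$ is a harmless robustness check that the paper leaves implicit.
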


Perhaps surprisingly, we prove that users' payoffs may worsen after the introduction of the crowdsourced WiFi, if the 5G capacity is large.
\begin{proposition}\label{prop:TMC-7}
At the equilibrium, all the users obtain strictly less payoff after the introduction of  the crowdsourced WiFi  (i.e, $\max\{u_1^*(\theta), u_2^*(\theta)\}<\bar{u}_1^*(\theta)$), if the 5G capacity is non-small (i.e., $Q\geq 3N/(V_1-\bar{u})$) and WiFi deployment cost is small in \eqref{by8}.  
\end{proposition}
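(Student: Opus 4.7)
The plan is to compare, pointwise in $\theta$, the equilibrium payoff after WiFi is introduced to the benchmark payoff before it. First, by Proposition~\ref{prop:TMC-1} in the regime $Q \geq 3N/(V_1-\bar{u})$, the benchmark is $\bar{x}_1^* = 1$ and $\bar{p}_1^* = V_1 - \bar{u} - N/Q$, so every user obtains
\[
 \bar{u}_1^*(\theta) \;=\; \bar{u} + \frac{N(1-\theta)}{Q}.
\]
By Proposition~\ref{prop:TMC-5} under the cost condition~\eqref{by8} I inherit $p_1^* > \bar{p}_1^*$; I also inherit $x_2^*>0$ from the same analysis that underlies Proposition~\ref{prop:TMC-6}, since otherwise the 5G profit would be unchanged and the introduction would be vacuous.

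The key technical step is a unified closed form for the price gap. Denote $z := x_1^* + x_2^*(1-\alpha x_2^*)$ (effective 5G load after offloading) and $y := x_1^* + x_2^*$. The indifference $u_1^*(F^{-1}(x_1^*))=u_2^*(F^{-1}(x_1^*))$ from Proposition~\ref{prop:TMC-2} gives $p_2^* = \alpha x_2^*\, N z x_1^*/Q$. When $y<1$, the reservation boundary $u_2^*(F^{-1}(y))=\bar{u}$ yields $p_1^* + p_2^* = V_1 - \bar{u} - (1-\alpha x_2^*) N z y/Q$; subtracting $p_2^*$ and using the algebraic identity $(1-\alpha x_2^*)y + \alpha x_2^* x_1^* = z$ produces
\[
 p_1^* - \bar{p}_1^* \;=\; \frac{N(1-z^2)}{Q}, \qquad z<1.
\]
When $y=1$, the 5G profit in the first branch of \eqref{a42} is linear $Np_1$ up to $p_1 = V_1 - \bar{u} - N(1-\alpha\bar{x}_2^2)^2/Q$, so $p_1^*$ is pinned at that boundary and with $z=1-\alpha\bar{x}_2^2$ the same identity is recovered.

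With this identity I would carry out three case comparisons. For 5G-only users ($\theta \leq F^{-1}(x_1^*)$),
\[
 \bar{u}_1^*(\theta) - u_1^*(\theta) \;=\; (p_1^*-\bar{p}_1^*) - \frac{N\theta(1-z)}{Q} \;=\; \frac{N(1-z)}{Q}(1+z-\theta) \;>\; 0,
\]
because $z<1$ and $\theta\leq 1<1+z$. For 5G+WiFi users ($F^{-1}(x_1^*)<\theta\leq F^{-1}(y)$), collecting terms gives
\[
 \bar{u}_1^*(\theta) - u_2^*(\theta) \;=\; \frac{N}{Q}\bigl[(1-y) + (y-\theta)\bigl(1-(1-\alpha x_2^*)z\bigr)\bigr] \;>\; 0
\]
since $(1-\alpha x_2^*)z<1$. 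For non-subscribers ($\theta > F^{-1}(y)$), the payoff collapses to $\bar{u}$, strictly less than $\bar{u}_1^*(\theta)$ for every $\theta<1$.

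The hardest part will be establishing the unified identity $p_1^*-\bar{p}_1^* = N(1-z^2)/Q$ across the five configurations of Fig.~\ref{bg}, in particular reconciling the $y<1$ regime (pinned by the reservation indifference $u_2^*(F^{-1}(y))=\bar{u}$) with the $y=1$ regime (pinned by the first-branch boundary of~\eqref{a42}). The edge $\theta=1$ in the $y=1$ configuration yields $\bar{u}_1^*(1)=u_2^*(1)=\bar{u}$, which I would address by reading ``all users'' as the strict inequality on $\theta\in[0,1)$, consistent with the strict inequalities obtained everywhere else.
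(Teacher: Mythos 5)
Your proposal is correct and follows essentially the same route as the paper's Appendix D: a pointwise payoff comparison built on the equilibrium relations $p_1^* = V_1-\bar{u}-\frac{N}{Q}z^2$ (with $z=x_1^*+x_2^*(1-\alpha x_2^*)$) and $p_2^*=\alpha x_2^*\frac{N}{Q}z x_1^*$, split according to whether $x_1^*+x_2^*=1$ or $x_1^*+x_2^*<1$, so your unified identity $p_1^*-\bar{p}_1^*=N(1-z^2)/Q$ is just a compact restatement of what the paper substitutes case by case. Your explicit treatment of the $\theta=1$ edge in the full-participation regime (where $u_2^*(1)=\bar{u}_1^*(1)=\bar{u}$) is in fact slightly more careful than the paper, which asserts strict inequality on the closed interval.
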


\begin{proof}
See Appendix D.
\end{proof}

\subsection{Medium and Small 5G capacity Regimes with $Q < \frac{3N}{V_1-\bar{u}}$}\label{section:TMC-5.2}

 The following proposition further extends Proposition~\ref{prop:TMC-5}'s result to hold if $Q$ if slightly smaller than $3N/(V_1-\bar{u})$. 
\begin{proposition}\label{prop:TMC-8}
At the equilibrium, the 5G operator charges users more (i.e., $p_1^* > \bar{p}_1^*$) after the introduction of the crowdsourced WiFi, if the 5G network capacity is medium, i.e.,
\begin{equation*}
    3\left(\frac{\sqrt{17}+23}{32}\right)^2\frac{N}{V_1-\bar{u}} < Q < \frac{3N}{V_1-\bar{u}},
\end{equation*}
the WiFi coverage addition per AP is non-small, i.e.,
\begin{equation}
    \alpha > \bigg(\frac{2\sqrt{1-\sqrt{\frac{V_1-\bar{u}}{3N/Q}}}\bigg(1-2\sqrt{\frac{V_1-\bar{u}}{3N/Q}}\bigg)}{2-3\sqrt{\frac{V_1-\bar{u}}{3N/Q}}}\bigg)^2, \label{alpha}
\end{equation}
and the WiFi deployment cost is small in \eqref{by8}.
\end{proposition}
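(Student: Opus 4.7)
The plan is to prove $p_1^* > \bar{p}_1^*$ by showing that the 5G operator's profit $\pi_1$ is strictly increasing in $p_1$ at the benchmark price $p_1 = \bar{p}_1^* = \tfrac{2}{3}(V_1-\bar{u})$, which rules out $\bar{p}_1^*$ as a best response to $p_2^*$ and forces the equilibrium price strictly higher. The logic mirrors Proposition~\ref{prop:TMC-5} but must be pushed below the large-capacity threshold: for $Q \geq 3N/(V_1-\bar{u})$ the 5G operator already serves the whole market at $\bar{p}_1^*$, so any offloading simply frees room to raise $p_1$; for $Q < 3N/(V_1-\bar{u})$ the market is not fully served, so the argument must compare the marginal revenue from a price increase against the marginal subscriber loss, which is precisely what the extra hypotheses on $\alpha$ and $Q$ are designed to guarantee.

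First I would verify, using Corollary~\ref{coro:TMC-1} together with the cost bound \eqref{by8}, that $x_2^* > 0$ at the candidate price pair $(\bar{p}_1^*, p_2^*)$, so that the WiFi is active and offloading actually occurs. Next I would locate this pair in Fig.~\ref{bg}: under the lower bound on $\alpha$ in \eqref{alpha} and the cost bound, it sits in the case where $x_1^* + x_2^* < 1$ and $x_2^*$ coincides with the largest root $\bar{x}_2' \in (0,1)$ of the WiFi operator's best-response quartic from Proposition~\ref{prop:TMC-3}. The 5G profit then reduces to the second branch of \eqref{a42},
\[
\pi_1(p_1, x_2^*) = N p_1 \left(\sqrt{\tfrac{V_1 - \bar{u} - p_1}{N/Q}} + \alpha (x_2^*)^2\right),
\]
on which I will perform the derivative analysis.

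I would then compute $\partial \pi_1/\partial p_1$ at $p_1 = \bar{p}_1^*$. Two contributions appear: an explicit term from the square-root subscriber factor (capturing the standard monopoly trade-off) and an implicit term through $\partial x_2^*/\partial p_1$ obtained by implicitly differentiating the WiFi quartic best-response equation. Introducing the rescaled variable $\beta := \sqrt{(V_1-\bar{u})/(3N/Q)} \in (0,1)$, the positivity condition rearranges into a quadratic inequality in $\sqrt{\alpha}$ whose active boundary is exactly \eqref{alpha}, while its consistency in $\beta$ is governed by the discriminant of an auxiliary quadratic in $\beta$ whose roots involve $\sqrt{17}$; together they produce the stated lower bound $Q > 3\left(\tfrac{\sqrt{17}+23}{32}\right)^2 N/(V_1-\bar{u})$.

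The hardest part will be that $x_2^* = \bar{x}_2'$ is only implicitly defined as a root of a quartic with no clean closed form, which blocks any direct plug-in computation of $\partial x_2^*/\partial p_1$. I would handle this by carrying the implicit derivative symbolically through the algebra and using the WiFi equation itself to eliminate $(\bar{x}_2')^4$ and higher-order cross terms, so that every quantity in the sign analysis reduces to a rational expression in $\alpha$ and $\beta$ alone. Once strict positivity of the derivative at $\bar{p}_1^*$ is established, local monotonicity rules out $p_1^* = \bar{p}_1^*$; combined with the profit-comparison argument behind Lemma~\ref{lemma:TMC-3} (which blocks any $p_1 < \bar{p}_1^*$ from being optimal), this yields $p_1^* > \bar{p}_1^*$.
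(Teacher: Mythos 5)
There is a genuine gap, and it starts with a misidentification of the regime. The whole point of the hypotheses in this proposition is to place the benchmark price $\bar{p}_1^*=\tfrac{2}{3}(V_1-\bar{u})$ strictly \emph{inside the full-coverage branch} of \eqref{a42}, where $x_1^*+x_2^*=1$ and $\pi_1=Np_1$ — not, as you assert, in the case $x_1^*+x_2^*<1$ with $x_2^*=\bar{x}_2'$. Concretely, condition \eqref{alpha} is equivalent to $h\bigl(\sqrt{(1-\sqrt{(V_1-\bar{u})/(3N/Q)})/\alpha}\bigr)>0$ for $h(x)=4\alpha x^3-3\alpha x^2-2x+1$, which (since $h$ changes sign exactly once at $\hat{x}_2$ and $\bar{x}_2\geq\hat{x}_2$) forces $\bar{x}_2>\sqrt{(1-\sqrt{(V_1-\bar{u})/(3N/Q)})/\alpha}$, i.e., $V_1-\bar{u}-\tfrac{N}{Q}(1-\alpha\bar{x}_2^2)^2>\tfrac{2}{3}(V_1-\bar{u})$. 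On the branch $p_1\leq V_1-\bar{u}-\tfrac{N}{Q}(1-\alpha\bar{x}_2^2)^2$ the profit is literally $Np_1$, globally increasing, so $p_1^*$ is at least the branch endpoint, which strictly exceeds $\bar{p}_1^*$. No implicit differentiation of the quartic for $\bar{x}_2'$ is needed, and your plan to "carry the implicit derivative symbolically" attacks an object that does not enter the argument. Relatedly, the $\sqrt{17}$ in the lower bound on $Q$ does not come from a discriminant in a derivative computation: it comes from requiring the right-hand side of \eqref{alpha} to be below $1$ so that the condition on $\alpha\in(0,1)$ is satisfiable, which reduces to $16t^2-23t+8>0$ with $t=\sqrt{(V_1-\bar{u})/(3N/Q)}$, whose relevant root is $t=(23+\sqrt{17})/32$.

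Even setting aside the regime issue, your closing step is logically insufficient. Showing $\partial\pi_1/\partial p_1>0$ at the single point $p_1=\bar{p}_1^*$ only rules out $\bar{p}_1^*$ as a local maximizer; it does not rule out a global maximizer at some $p_1<\bar{p}_1^*$, and this is not a hypothetical worry — Proposition~\ref{prop:TMC-9} shows that in the small-$Q$ regime the equilibrium price is strictly \emph{below} $\bar{p}_1^*$ precisely because the non-convex profit attains its maximum on a lower price segment. Your appeal to Lemma~\ref{lemma:TMC-3} to "block any $p_1<\bar{p}_1^*$ from being optimal" does not work: that lemma compares \emph{profits} before and after the WiFi introduction and says nothing about where the optimal price sits. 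The paper's argument closes this hole globally because $\pi_1=Np_1$ on the entire first branch, so every price below the branch endpoint is dominated by the endpoint itself.
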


\begin{proof}
See Appendix E.
\end{proof}

Intuitively, as compared to Proposition 5 with large capacity, in Proposition 8 the 5G capacity becomes less sufficient and the 5G operator further expects non-small WiFi coverage in \eqref{alpha} to help offload and reduce 5G congestion. 

Finally, we turn to the small capacity regime for the 5G network. Unlike Propositions~\ref{prop:TMC-5} and \ref{prop:TMC-8}, the 5G operator with small capacity will no longer charge a higher 5G service price from users after the introduction of the crowdsourced WiFi. 
\begin{proposition}\label{prop:TMC-9}
At the equilibrium, the 5G operator charges users less (i.e., $p_1^* < \bar{p}_1^*$) after the introduction of the crowdsourced WiFi, if the 5G network capacity is small (i.e., $Q < N/(V_1-\bar{u})$), the WiFi coverage addition per AP is small (i.e., $\alpha < 1 - \sqrt{\frac{V_1-\bar{u}}{N/Q}}$), and
 the   WiFi deployment cost is non-small in \eqref{by5}.
\end{proposition}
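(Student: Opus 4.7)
The plan is to show $p_1^* < \bar{p}_1^* = \tfrac{2}{3}(V_1-\bar{u})$ by analyzing the 5G operator's best pricing response to the equilibrium WiFi price $p_2^*$. The underlying mechanism is that under small 5G capacity, small coverage increment $\alpha$, and non-small deployment cost $c$, the crowdsourced WiFi market is on the brink of being dormant at the benchmark price, and the 5G operator strictly benefits from subsidizing WiFi adoption by lowering $p_1$: the induced offloading relieves severe 5G congestion enough to more than compensate for the per-subscriber revenue loss.

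First I would pin down the relevant case of Proposition~\ref{prop:TMC-3}. The pair of conditions $Q < N/(V_1-\bar{u})$ and $\alpha < 1-\sqrt{(V_1-\bar{u})/(N/Q)}$ is precisely the small-$Q$, small-$\alpha$ subregion identified in Proposition~\ref{prop:TMC-3} for which $x_1^* + x_2^* < 1$. Thus, whenever $x_2^* > 0$, the 5G operator's profit lies in the second branch of \eqref{a42}:
\[
\pi_1(p_1, x_2^*) = N p_1\Bigl(\sqrt{(V_1-\bar{u} - p_1)/(N/Q)} + \alpha (x_2^*)^2\Bigr),
\]
with $x_2^*$ pinned down by the cubic for $\bar{x}_2'$ in \eqref{a42}. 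The cost range \eqref{by5} is engineered so that $c$ exceeds the Corollary~\ref{coro:TMC-1} threshold (so $x_2^* = 0$ at $p_1 = \bar{p}_1^*$, matching the upper limit $\sqrt{(V_1-\bar{u})/(3N/Q)} = \bar{x}_1^*$ there), yet $c$ does not exceed the corresponding threshold built from $\sqrt{(V_1-\bar{u})/(N/Q)}$, so there exists an activation price $p_1^c \in (0, \bar{p}_1^*)$ below which $x_2^*$ becomes strictly positive.

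Next I would trace $\pi_1$ as $p_1$ descends from $\bar{p}_1^*$. Above $p_1^c$ the WiFi market is inactive and $\pi_1$ coincides with the benchmark $N p_1\sqrt{(V_1-\bar{u}-p_1)/(N/Q)}$, whose interior maximizer is exactly $\bar{p}_1^*$ by Proposition~\ref{prop:TMC-1}. At $p_1 = p_1^c$ the WiFi market switches on, and for $p_1 < p_1^c$ implicit differentiation of the cubic for $\bar{x}_2'$ in \eqref{a42} shows $\partial x_2^*/\partial p_1 < 0$, so $x_2^*$ rises as $p_1$ falls. Substituting into $\pi_1$ and using the small-$\alpha$ condition to keep $x_2^*$ bounded safely away from $1$, I would establish that the offloading contribution $N p_1 \alpha (x_2^*)^2$ rises fast enough in magnitude to push $\pi_1$ above $\bar{\pi}_1^* = \pi_1(\bar{p}_1^*)$ for some $p_1 \in (0, p_1^c)$. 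Since $\pi_1$ is continuous and the supremum on $(0, p_1^c)$ strictly exceeds the value at $\bar{p}_1^*$, the best response must satisfy $p_1^* \leq p_1^c < \bar{p}_1^*$.

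The hard part is making the comparison between the activated $\pi_1$ and $\bar{\pi}_1^*$ quantitative. The cubic defining $\bar{x}_2'$ in \eqref{a42} depends on $p_1$ through $\sqrt{(V_1-\bar{u}-p_1)/(N/Q)}$ and on $p_2^*$ through its right-hand side, so isolating a usable lower bound on $x_2^*(p_1, p_2^*)$ as $p_1$ drops below $p_1^c$ requires a careful implicit-function argument leveraging both the small-$\alpha$ bound (which simplifies the cubic's root structure and keeps $x_2^*$ in a tractable interior interval) and the upper bound on $c$ from \eqref{by5} (which caps $p_2^*$ and prevents the WiFi market from collapsing). Once this bound is in place, plugging into \eqref{a42} and comparing with the closed-form $\bar{\pi}_1^*$ from \eqref{pi_5} closes the argument; the strict inequality $p_1^* < \bar{p}_1^*$ then follows because any best response at or above $\bar{p}_1^*$ would be strictly dominated by the activated sub-branch.
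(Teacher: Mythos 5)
You set up the problem correctly --- the relevant region of Proposition~\ref{prop:TMC-3}, the reading of \eqref{by5} as a two-sided condition on $c$, and the overall strategy of exhibiting a price below $\bar{p}_1^*$ whose profit strictly exceeds $\bar{\pi}_1^*$ --- but the proof has a genuine gap at exactly the step you yourself flag as ``the hard part'': you never establish $\sup_{p_1<p_1^c}\pi_1(p_1,x_2^*(p_1))>\bar{\pi}_1^*$, and that inequality is the entire content of the proposition. Your route also makes it harder than necessary. By placing the activation price $p_1^c$ possibly strictly below $\bar{p}_1^*$ and only then invoking the WiFi term, you must show that $Np_1\alpha(x_2^*)^2$ compensates for the revenue already lost by moving off the benchmark maximizer (so that $Np_1\sqrt{(V_1-\bar{u}-p_1)/(N/Q)}<\bar{\pi}_1^*$ before the WiFi term is even added), and you must do so under the hypothesis that $\alpha$ is \emph{small} --- a hypothesis that works against the very term you rely on. Nothing in the sketch quantifies $x_2^*$ or the size of the deficit, so the argument does not close.

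The paper's Appendix F closes it differently. It takes from \eqref{by5} and Proposition~\ref{prop:TMC-3} that $x_2^*>0$ for every $p_1<\tfrac{2}{3}(V_1-\bar{u})$ and, crucially, that $x_2^*$ is the \emph{greatest} root of the cubic, hence bounded below by the explicit quantity $\frac{1-\sqrt{1-3\alpha\sqrt{(V_1-\bar{u}-p_1)/(N/Q)}}}{3\alpha}>0$: a discrete jump at activation, not a quantity growing in from zero. Substituting this bound yields the auxiliary function $l(p_1)\leq\pi_1(p_1,x_2^*)$ with $l\big(\tfrac{2}{3}(V_1-\bar{u})\big)>\bar{\pi}_1^*$, because at that price the first term alone already equals $\bar{\pi}_1^*$ from \eqref{pi_5} and the second term is strictly positive; continuity (together with $l'(\tfrac{2}{3}(V_1-\bar{u}))<0$) then produces $p_1'<\tfrac{2}{3}(V_1-\bar{u})$ with $\pi_1(p_1',x_2^*(p_1'))>\bar{\pi}_1^*$, forcing $p_1^*<\bar{p}_1^*$. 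To repair your version you would need either to justify that activation occurs arbitrarily close to $\bar{p}_1^*$ (as the paper's case split implicitly asserts), or to supply the explicit lower bound on $x_2^*$ at $p_1^c$ and verify the resulting profit inequality against the closed form of $\bar{\pi}_1^*$; stating that the offloading contribution ``rises fast enough'' is a restatement of the claim, not a proof of it.
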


\begin{proof}See Appendix F. \end{proof}

Given small capacity, the 5G network faces severe congestion. To motivate users to add on the WiFi service for WiFi offloading, the 5G operator will purposely lower his 5G price to his users and reduce the total 5G+WiFi service payment. 

One may be aware that there is a gap between the small and medium regimes of 5G capacity in the analytical results of Propositions~\ref{prop:TMC-8} and \ref{prop:TMC-9}. We will make up this gap to show similar observations using numerical results in next section. 

{Note that we can find operators' pricing equilibrium ($p_1^*$, $p_2^*$) by combining 5G and WiFi's best-response price functions $p_1^*(p_2)$ and $p_2^*(p_1)$. We run one-dimensional searches to find $p_1^*(p_2)$ and $p_2^*(p_1)$, respectively, given users' participation equilibrium ($x_1^*(p_1, p_2)$, $x_2^*(p_1,p_2)$) in Stage II. There we choose precision error $\epsilon_1$ to search 5G's best-response price $p_1^*(p_2)$ in the range of [0, $V_1$], and $\epsilon_2$ to search WiFi's best-response price $p_2^*(p_1)$ in the range of [0, $V_2$]. For each pair of ($p_1$, $p_2$), it involves complexity $\Theta(\frac{1}{\epsilon_0})$ to find users' participation equilibrium ($x_1^*(p_1, p_2)$, $x_2^*(p_1,p_2)$) in Stage II. To obtain discrete functions $p_1^*(p_2)$ or $p_2^*(p_1)$, it requires  
complexity order $\Theta(\frac{V_1 V_2}{\epsilon_0 \epsilon_1 \epsilon_2})$. To obtain the whole equilibrium of the two-stage game, we easily combine $p_1^*(p_2)$ and $p_2^*(p_1)$ to find their interaction and the complexity is also $\Theta(\frac{V_1 V_2}{\epsilon_0 \epsilon_1 \epsilon_2})$. }

\section{Numerical Results}\label{section:TMC-6}

In this section, we look at the whole picture and present numerical results to tell how the introduction of the crowdsourced WiFi affects the 5G service pricing/profit and the users' payoffs at the equilibrium.  Recall that we focus on uniform distribution for users' congestion sensitivity in Section~\ref{section:TMC-5}, in this section we relax all the figures' setting to consider truncated normal distributions to numerically show similar insights. {The WiFi deployment cost per access point includes the special router cost in tens or hundreds of dollars, and we reasonably set $c=100$ and 50.}

Given this relaxation\footnote{As the mean of users' sensitivity distribution increases, we expect more users will add on the crowdsourced WiFi  to mitigate users' congestion sensitivities on average; as the variance of users' sensitivity distribution increases, we expect more users will add on WiFi to alleviate congestion in the 5G network.}, Figure \ref{fig-re1} shows all users' equilibrium payoffs $\max\{u_1^*(\theta)$, $ u_2^*(\theta)\}$ versus their 5G congestion sensitivity $\theta$ and compares with the benchmark $\bar{u}_1^*(\theta)$ before the introduction of the crowdsourced WiFi. Note that in Fig. \ref{fig_first_case} all the users' payoffs improve after the introduction of the crowdsourced WiFi due to 5G operator's less charging. As the 5G capacity improves in Figures \ref{fig_second_case}-\ref{fig_third_case}, users with small congestion sensitivity do not gain much from WiFi offloading to reduce congestion and their payoffs worsen due to 5G operator's over-charging after the introduction of the crowdsourced WiFi;  as the 5G capacity becomes non-small in Fig. \ref{fig_third_case}, this effect spreads to all the users and their payoffs all worsen, which is consistent with Proposition~\ref{prop:TMC-7}.

\begin{figure}
\centerline{\includegraphics[scale=0.28]{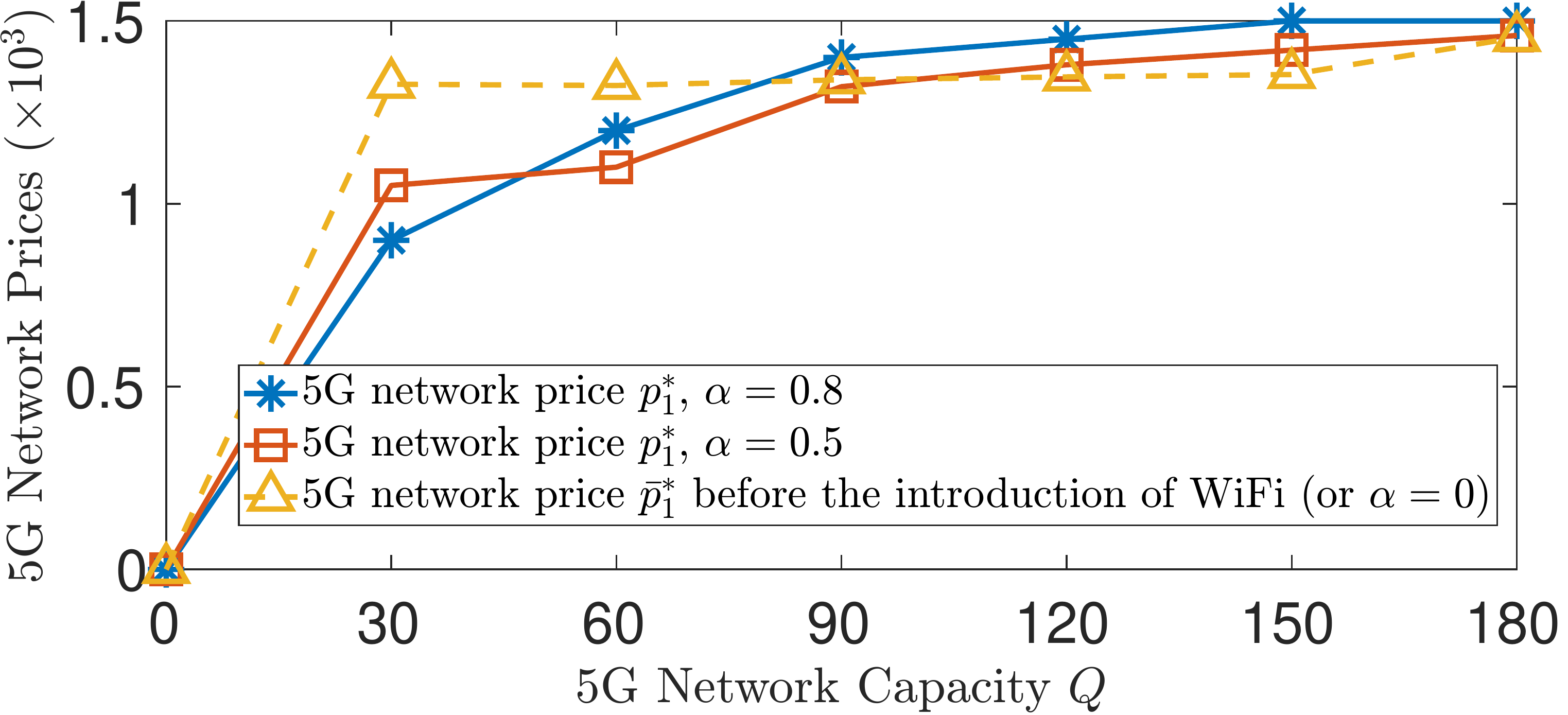}}
\caption{5G network's annual prices $\bar{p}_1^*$ and $p_1^*$ before and after the introduction of the crowdsourced WiFi, versus the 5G capacity $Q$ and the WiFi coverage addition per AP $\alpha$, respectively. Here, a user's congestion sensitivity $\theta$ follows truncated distribution $N(0.5, 1)$ in the normalized range [0, 1]. We fix $N = 10^5$, { $V_1 = 3000$}, $\bar{u} = 1000$, and {$c = 100$}. Note that the benchmark $\bar{p}_1^*$ before the introduction of the crowdsourced WiFi is equivalent to the crowdsourced WiFi case with $\alpha=0$.}
\label{fig10011}

\end{figure}

\begin{figure}
\centerline{\includegraphics[scale=0.28]{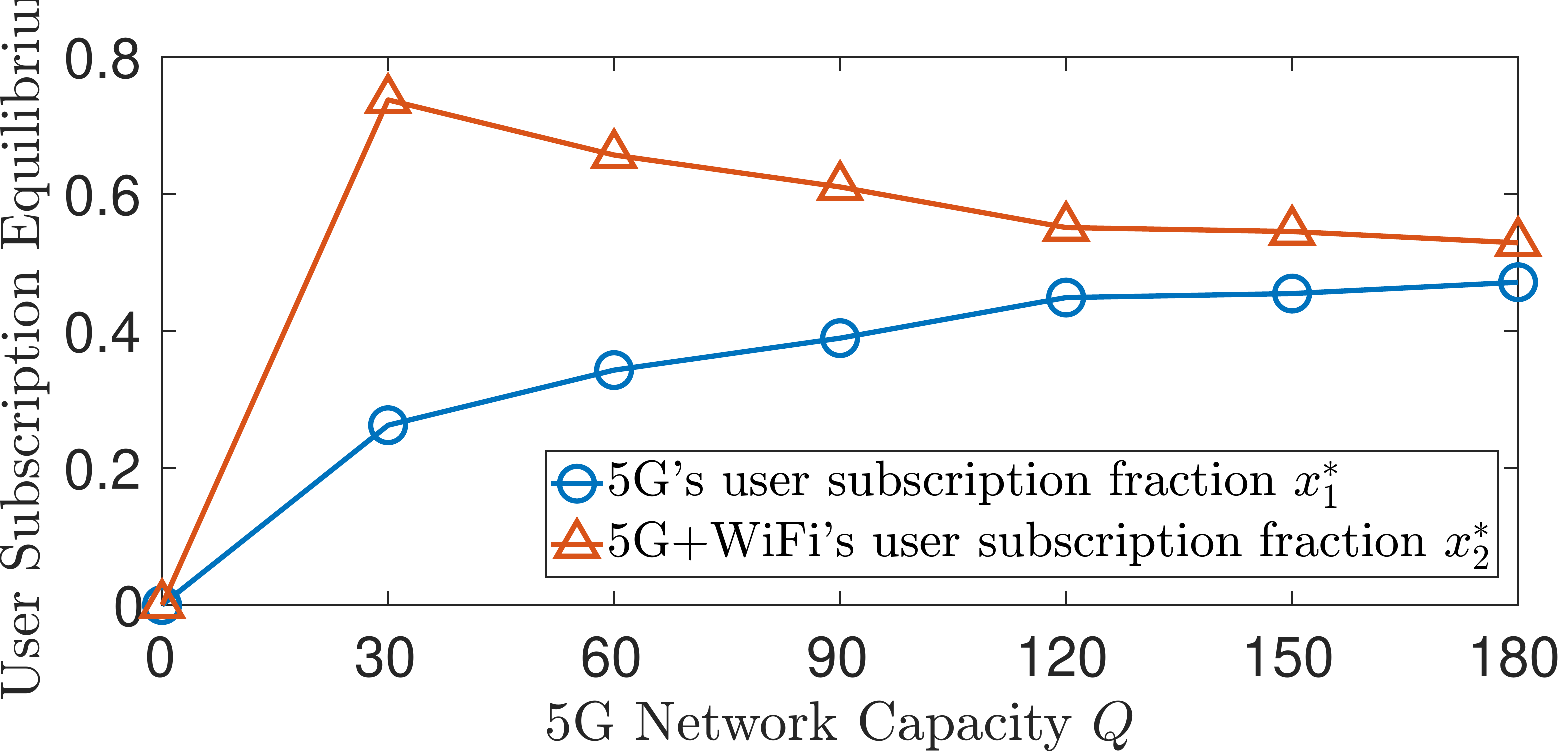}}
\caption{User subscriptions to the 5G service only $x_1^*$ and the 5G+WiFi $x_2^*$ at the equilibrium versus the 5G network capacity $Q$. Here we fix $N = 10^5$, {$V_1 = 3000$}, $\bar{u} = 1000$, {$c = 100$}, and $\alpha = 0.8$.}  
\label{fig31}
\end{figure}

Figure \ref{fig10011} shows the 5G network's equilibrium prices $\bar{p}_1^*$ and $p_1^*$, before and after the introduction of  the crowdsourced WiFi, as functions of the 5G capacity $Q$ and WiFi coverage addition $\alpha$, respectively. { Here we refer the 5G operator's prices as users' annual payment to network operators. Such payment ranges from hundreds to thousands of dollars\footnote{AT\&T's 5G plan, https://www.att.com/plans/unlimited-data-plans/.}.} We observe that $p_1^*$ becomes greater than $\bar{p}_1^*$ if the 5G network capacity $Q$ is no less than $3N/(V_1-\bar{u}) = 150$, which is consistent with Proposition~\ref{prop:TMC-5}. The over-charging pricing holds for slightly smaller $Q$, which echos with Proposition~\ref{prop:TMC-8}.

 Figure \ref{fig10011} also shows that the 5G network price $p_1^*$ increases as the network capacity $Q$ increases, and is less than benchmark $\bar{p}_1^*$ if $Q \leq 60$, which is consistent with Proposition~\ref{prop:TMC-9}. For this small 5G capacity case, many users face severe congestion. To motivate more users to add on WiFi option and help offload heavy 5G traffic, the 5G operator purposely lowers his price than $\bar{p}_1^*$ as a compensation to users.   

 Moreover, Fig. \ref{fig10011} shows that when the 5G network capacity is small (e.g., $Q = 30$), the 5G operator may decrease his price as the WiFi coverage addition per AP   $\alpha$ increases from 0 to 0.5 and then 0.8. The intuition is that as $\alpha$ increases, the 5G operator can more efficiently offload the traffic to WiFi and reduce his own network congestion. Thus, he chooses to further lower his price and motivate more users to add-on WiFi.

Figure \ref{fig31} shows user subscription equilibrium $x_1^*$ and $x_2^*$ in the 5G and 5G+WiFi services  as functions of $Q$, respectively. We observe that $x_1^*$ increases with $Q$ and will converge as $Q$ becomes large.

 \begin{figure}
\centerline{\includegraphics[scale=0.28]{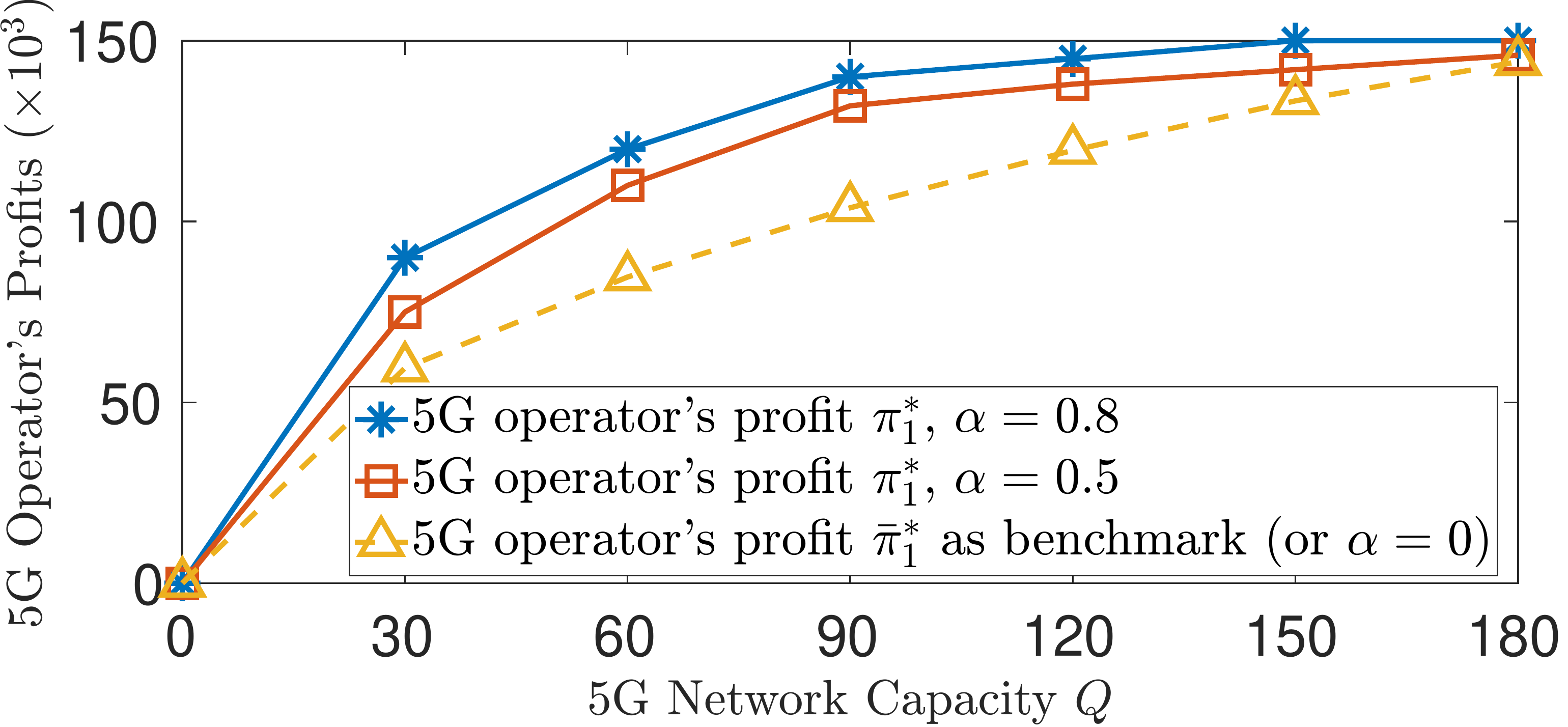}}
\caption{5G operator's profits $\bar{\pi}_1^*$ and $\pi_1^*$ before and after the introduction of the crowdsourced WiFi, versus the 5G capacity $Q$ and the WiFi coverage addition per AP $\alpha$, respectively. Here, a
user's congestion sensitivity $\theta$ follows truncated distribution $N(0.5, 1)$ in the normalized range [0, 1]. We fix $N = 10^5$, {$V_1 = 3000$}, $\bar{u} = 1000$, and {$c = 100$}. Note that the benchmark $\bar{\pi}_1^*$ before the introduction of the crowdsourced WiFi is equivalent to the crowdsourced WiFi case with $\alpha=0$.}
\label{fig1100}
\end{figure}

It is interesting to notice from Fig. \ref{fig31} that the user fraction $x_2^*$ in 5G+WiFi service first increases and then decreases with $Q$. When $Q$ is small, many users still face severe 5G congestion outside the WiFi coverage and choosing the complementary WiFi service to 5G does not lead to a positive payoff. As $Q$ increases, the 5G congestion mitigates and $x_2^*$ for the combo 5G+WiFi service increases with $Q$. When $Q$ becomes non-small, the 5G service has low congestion and users will choose downgrade from 5G+WiFi to 5G service only, for saving the payment to the WiFi service.

Figure \ref{fig1100} shows the 5G operator's equilibrium profits $\bar{\pi}_1^*$ and $\pi_1^*$ before and after the introduction of  the crowdsourced WiFi, respectively, as functions of $Q$ and the WiFi coverage addition per AP $\alpha$. All profits are increasing in $Q$ given more capacity resources. The introduction of the crowdsourced WiFi helps offload the 5G traffic (though provided by another self-interested WiFi operator with/without congestion), and the 5G profit $\pi_1^*$ increases with the WiFi coverage addition $\alpha$. Actually, the benchmark case before the introduction of the crowdsourced WiFi can be viewed as the latter case with $\alpha=0$. We also notice from Fig. \ref{fig1100} that the profit improvement after the introduction of  the crowdsourced WiFi first increases and then decreases with $Q$. This is consistent with Fig. \ref{fig31}, where $x_2^*$ to help offload first increases and then decreases with $Q$.

{ We define social welfare after the introduction of the crowdsourced WiFi as the sum of the 5G, crowdsourced WiFi's profits and $N$ users' payoffs, i.e., 
\begin{equation*}
    SW = \pi_1^* + \pi_2^* + N \int_{\theta = 0}^1 \max\{u_1^*(\theta), u_2^*(\theta), \bar{u}\} dF(\theta).
\end{equation*}
Before the introduction of the crowdsourced WiFi, social welfare $\Bar{SW}$ is
\begin{equation*}
    \Bar{SW} = \bar{\pi}_1^* + N \int_{\theta = 0}^1 \max\{\bar{u}^*_1(\theta), \bar{u}\} dF(\theta).
\end{equation*}

 \begin{figure}
\centerline{\includegraphics[scale=0.28]{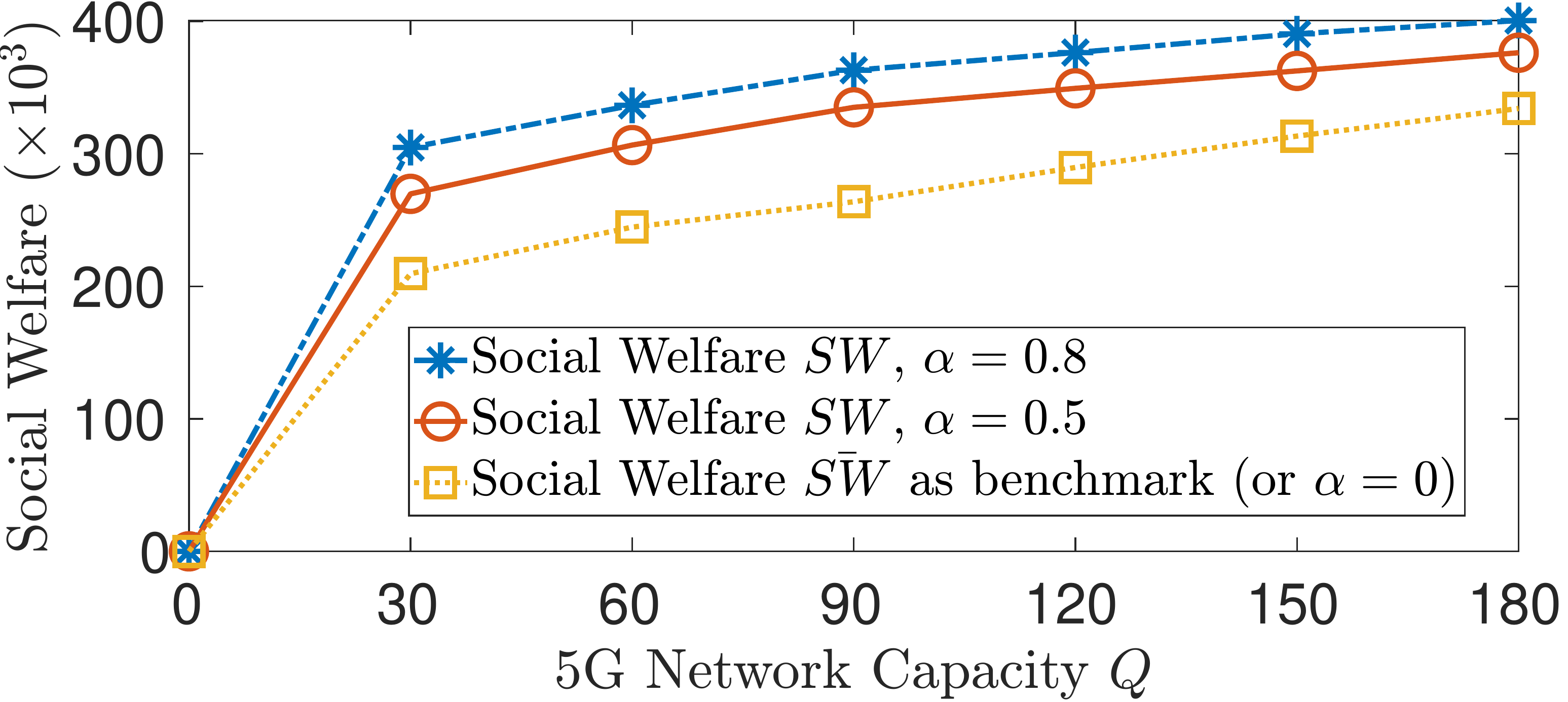}}
\caption{ Social welfare $\bar{SW}$ and $SW$ before and after the introduction of the crowdsourced WiFi, with user congestion sensitivity $\theta$ following truncated distribution $N(0.5, 1)$ in the normalized range [0, 1]. Here we vary the 5G capacity $Q$ and the WiFi coverage addition per AP $\alpha$, and fix $N = 10^5$, $V_1 = 3000$, $\bar{u} = 1000$, and {$c = 100$}. Note that the benchmark $\bar{\pi}_1^*$ before the introduction of the crowdsourced WiFi is equivalent to the crowdsourced WiFi case with $\alpha=0$.}
\label{fig101010}
\end{figure}

 Figure \ref{fig101010} shows social welfare before and after the introduction of  the crowdsourced WiFi, respectively, as functions of $Q$ and the WiFi coverage addition per AP $\alpha$, which has the similar pattern as in Fig. \ref{fig1100} to increase with $\alpha$ and $Q$, respectively.}
 
 \begin{figure}
\centerline{\includegraphics[scale=0.28]{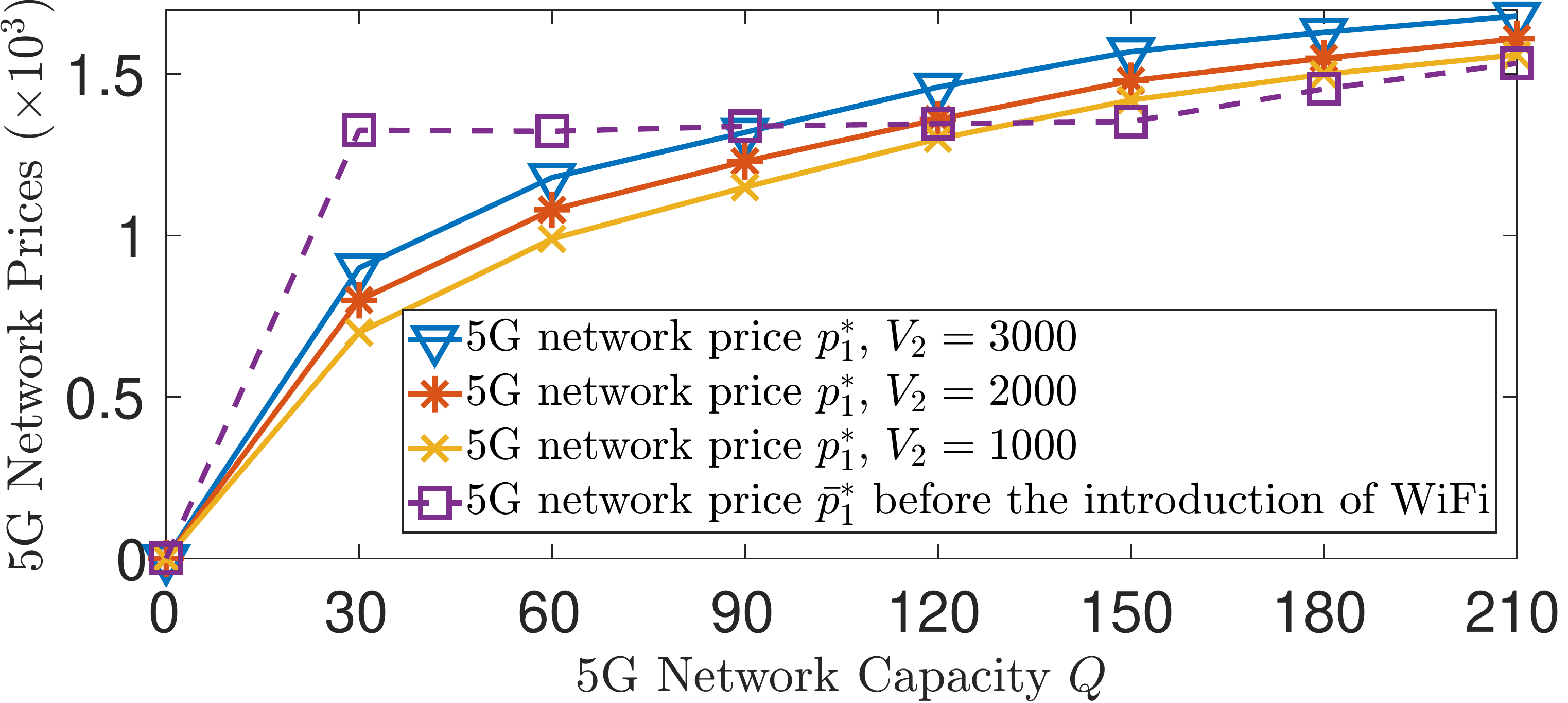}}
\caption{5G network's annual prices $\bar{p}_1^*$ and $p_1^*$ before and after the introduction of the crowdsourced WiFi versus the 5G capacity $Q$ and the WiFi mobile access benefit $V_2$, respectively.
Here, a user's congestion sensitivity $\theta$ follows truncated distribution $N(0.5, 1)$ in the normalized range [0, 1]. And fix $N = 10^5$,  $V_1 = 3000$,  $\bar{u} = 1000$, $\alpha = 0.5$, and {$c = 50$}. }
\label{figa}
\end{figure}

{Figure \ref{figa} shows the 5G network's equilibrium prices $\bar{p}_1^*$ and $p_1^*$, before and after the introduction of  the crowdsourced WiFi. We present prices as functions of the 5G capacity $Q$ and WiFi service benefit $V_2$ given the same $V_1$ for 5G, respectively. Though $V_2$ is different from $V_1$, we find that our key insights from Propositions 5 and 8-9 still hold.  That is, the 5G operator charges less than benchmark $\bar{p}_1^*$ for small capacity $Q$ as many users face severe congestion. The smaller 5G price helps motivate more users to add on WiFi option and offload heavy 5G traffic. With non-small capacity, the 5G network is less congested and the 5G operator is thus able to charge higher price for more profit.
As $V_2$ increases, to add-on WiFi gives users greater payoffs in \eqref{e4} and it becomes easier for the 5G operator to motivate WiFi offloading. Thus,  Fig. \ref{figa} shows that the 5G operator is able  to charge more in 5G service. 

Figure \ref{figb} shows the 5G operator's equilibrium profits $\bar{\pi}_1^*$ and $\pi_1^*$, before and after the introduction of the crowdsourced WiFi. We can tell from this figure that though $V_2$ is different from $V_1$, our key insights from Propositions 4 and 6 still hold. That is, the 5G operator obtains more profit than benchmark $\bar{\pi}_1^*$ in Fig. \ref{figb} because the introduction of the crowdsourced WiFi helps offload the 5G traffic. As $V_2$ increases,  more users will add-on WiFi to help WiFi offloading. Thus,  Fig. \ref{figb} shows that the 5G operator is able to make more profits.        }

\section{Extension to congested WiFi case}\label{section:TMC-7}

In this section we further consider congestion in the crowdsourced WiFi network as in the 5G network. A 5G-only user's payoff is still \eqref{e3}, yet a 5G+WiFi user's is no longer \eqref{e4}. When moving spatially, he finds the WiFi coverage with probability $\alpha x_2$ and experiences congestion cost $\beta$ inside\footnote{The congestion cost $\beta$ may not scale up with the user fraction $x_2$ to add-on WiFi. As $x_2$ increases, more users deploy APs to contribute to the crowdsourced WiFi network, making the average user demand per AP unchanged.}. His expected WiFi congestion cost is thus $\alpha x_2 \beta \theta$ with individual congestion sensitivity $\theta$. By adding this extra cost to \eqref{e4}, his payoff becomes: 
\begin{equation}
    \tilde{u}_2(\theta) \!\!=\!\! V  \!- (1 -\alpha x_2)\frac{N(x_1+x_2(1 - \alpha x_2))}{Q}\theta - \alpha x_2 \beta \theta  - p_1 - p_2. \label{new_u2}
\end{equation}

For ease of exposition, we assume $V_1=V_2=V$ here.
After considering the WiFi congestion, we still face the same two-stage dynamic game among the 5G operator, the crowdsourced WiFi and $N$ users in Section 2.2, by replacing \eqref{e4} with \eqref{new_u2}. We next use backward induction to analyze the game and show the possible impact of WiFi congestion.

 \begin{figure}
\centerline{\includegraphics[scale=0.28]{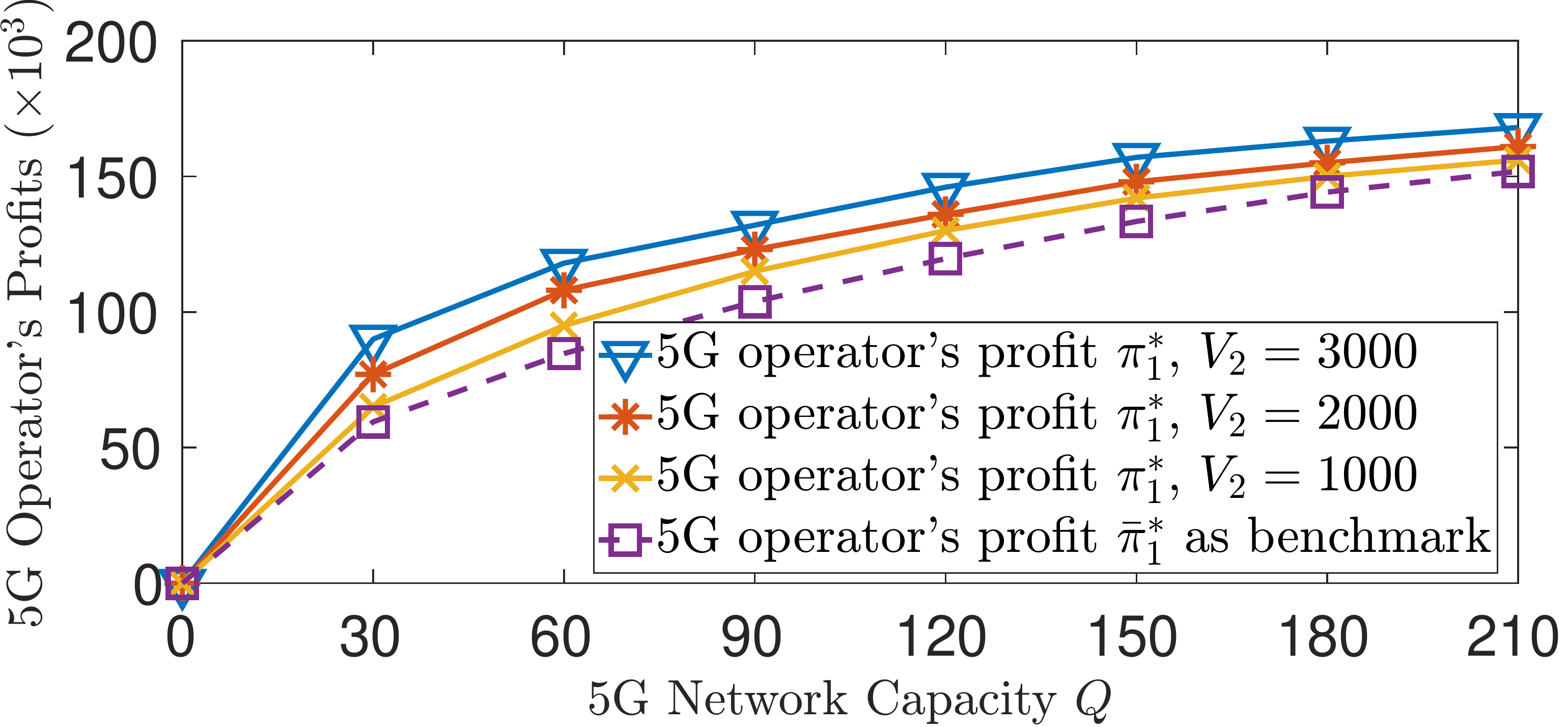}}
\caption{5G operator's profits $\bar{\pi}_1^*$ and $\pi_1^*$ before and after the introduction of the crowdsourced WiFi versus the 5G capacity $Q$ and the WiFi mobile access benefit $V_2$, respectively. Here, 
a user's congestion sensitivity $\theta$ follows truncated distribution $N(0.5, 1)$ in the normalized range [0, 1]. And we fix $N = 10^5$,  $V_1 = 3000$, $\bar{u} = 1000$, $\alpha = 0.5$, and {$c = 50$}.}
\label{figb}
\end{figure}

\subsection{Stage II Analysis on Users' Subscription}\label{section:TMC-7.1}
We begin with Stage II analysis for users' choices. We still expect congestion-insensitive users join the 5G network only, and sensitive ones choose to further add on the crowdsourced WiFi or not. However, the newly considered WiFi congestion may prevent highly sensitive users from adding on, and users may not even consider to pay to add-on WiFi if the congestion there is severe. By following the similar analysis in the proof of Lemma~\ref{lemma:TMC-2}, we have the following.

\begin{lemma}\label{lemma:TMC-4}
After considering the WiFi congestion, at the equilibrium in Stage II we have \eqref{e11} if $x_2^* > 0$ and \eqref{e12} if $x_2^*=0$, as illustrated in Fig. \ref{fig4}. As an extreme case with $\beta\geq N/Q$, no user will consider to add on the crowdsourced WiFi and the user subscription reduces to \eqref{e333}. 
\end{lemma}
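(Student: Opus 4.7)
The plan is to adapt the argument used in the proof of Lemma~\ref{lemma:TMC-2} by replacing $u_2(\theta)$ in \eqref{e4} with the new payoff $\tilde{u}_2(\theta)$ in \eqref{new_u2} and re-examining the pairwise comparison between the 5G-only and 5G+WiFi options. The key observation will be that the difference is still linear in $\theta$ with a non-negative intercept, but now the sign of its slope encodes whether WiFi congestion is strong enough to scare off all add-on subscribers.

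First, I would introduce the indifference function $\tilde{f}(\theta) := u_1(\theta) - \tilde{u}_2(\theta)$ and compute, using $V_1 = V_2 = V$,
\begin{equation*}
\tilde{f}(\theta) = p_2 + \alpha x_2^*\, \theta \left(\beta - \frac{N\bigl(x_1^* + x_2^*(1-\alpha x_2^*)\bigr)}{Q}\right),
\end{equation*}
which is linear in $\theta$ with $\tilde{f}(0) = p_2 \geq 0$, so the least-sensitive user always weakly prefers 5G-only. Next I would split into two branches by the sign of the slope. In the negative-slope branch, i.e.\ $\beta < \tfrac{N(x_1^* + x_2^*(1-\alpha x_2^*))}{Q}$, the function $\tilde{f}$ decreases from $p_2$, mirroring the behaviour of $f$ in the proof of Lemma~\ref{lemma:TMC-2}; the same cutoff analysis (comparing $u_1$ and $\tilde{u}_2$ against $\bar{u}$) then reproduces the partition \eqref{e11} when $x_2^*>0$ and \eqref{e12} when $x_2^*=0$. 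In the non-negative-slope branch, $\tilde{f}(\theta)\geq p_2 \geq 0$ for every $\theta\in[0,1]$, so no user strictly prefers 5G+WiFi, forcing $x_2^*=0$ and returning us again to \eqref{e12}.

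For the extreme case $\beta \geq N/Q$, the trivial bound $x_1^* + x_2^*(1-\alpha x_2^*) \leq 1$ gives $\beta \geq \tfrac{N(x_1^* + x_2^*(1-\alpha x_2^*))}{Q}$, placing us in the non-negative-slope branch above. Hence $x_2^* = 0$, the crowdsourced WiFi attracts no subscriber, and the whole Stage~II game collapses to the benchmark single-network problem of Section~\ref{section:TMC-2.1}, whose equilibrium subscription was solved in Lemma~\ref{lemma:TMC-1} as \eqref{e333}.

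The main obstacle I anticipate is the self-consistency check in the negative-slope branch: the inequality $\beta < \tfrac{N(x_1^*+x_2^*(1-\alpha x_2^*))}{Q}$ depends on the endogenous equilibrium $(x_1^*, x_2^*)$, so one must verify that the cutoff structure \eqref{e11} produced by the analysis is internally consistent with this inequality (and with $x_2^*>0$), rather than silently collapsing back into the $x_2^*=0$ branch. Once the slope's sign is pinned down at a candidate fixed point, the rest reuses Lemma~\ref{lemma:TMC-2}'s argument verbatim; the structural conclusion is therefore robust, while the set of prices $(p_1,p_2)$ under which $x_2^*>0$ is the main new quantity that shrinks as $\beta$ grows.
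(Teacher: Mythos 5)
Your proposal is correct and follows exactly the route the paper intends: the paper gives no separate proof of Lemma~4 but states that it follows "by the similar analysis in the proof of Lemma~\ref{lemma:TMC-2}", which is precisely your adaptation — replacing $u_2$ by $\tilde{u}_2$, noting $\tilde{f}(0)=p_2\geq 0$ with the slope now $\alpha x_2^*\bigl(\beta - \tfrac{N(x_1^*+x_2^*(1-\alpha x_2^*))}{Q}\bigr)$, and using $x_1^*+x_2^*(1-\alpha x_2^*)\leq 1$ to show $\beta\geq N/Q$ forces the non-negative-slope branch and hence $x_2^*=0$ and the benchmark outcome \eqref{e333}. Your self-consistency caveat is reasonable but not a gap, since (as in the paper's Lemma~\ref{lemma:TMC-2}) the statement characterizes choices at a given equilibrium pair $(x_1^*,x_2^*)$.
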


Given the equilibrium users' choices in Fig. \ref{fig4} based on cutoff points $F^{-1}(x_1^*)$ and $F^{-1}(x_1^*+x_2^*)$, we are ready to derive $x_1^*$ and $x_2^*$ in the closed-loop to summarize all the users' choices based on their $\theta$'s. 
\begin{lemma}\label{lemma:TMC-5}
After considering the WiFi congestion, at the equilibrium of Stage II, user fractions of the 5G and WiFi networks (i.e., $x_1^*$ and $x_2^*$) are solutions to 
\begin{equation}
u_1(F^{-1}(x_1^*))=\tilde{u}_2(F^{-1}(x_1^*)), \;\; \tilde{u}_2(F^{-1}(x_1^*+x_2^*))=\bar{u}, \label{e34}
\end{equation}
if $x_1^*+x_2^* < 1$ (see Fig. \ref{fig4}(a)), and otherwise 
\begin{equation}
u_1(F^{-1}(x_1^*))=\tilde{u}_2(F^{-1}(x_1^*)), \;\; x_1^*+x_2^*=1, \label{e35}
\end{equation}
with $\tilde{u}_2(F^{-1}(x_1^*+x_2^*))\geq \bar{u}$ (see Fig. \ref{fig4}(b)), where $x_2^*$ is the largest among all the possible solutions. Specifically, we have $x_1^* = \bar{x}_1^*$ and $x_2^* = 0$ if the crowdsourced WiFi is easier to congest than the 5G network (i.e., $\beta \geq N/Q$).
\end{lemma}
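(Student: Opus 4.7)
The plan is to parallel the backward-induction argument used for Proposition~\ref{prop:TMC-2}, with the 5G+WiFi payoff $u_2$ replaced by the congested-WiFi payoff $\tilde{u}_2$ in \eqref{new_u2}, and then handle the extremal case $\beta \geq N/Q$ by a direct sign comparison. First I would invoke Lemma~\ref{lemma:TMC-4}, which already guarantees that the user partition is determined by the two cutoffs $F^{-1}(x_1^*)$ and $F^{-1}(x_1^*+x_2^*)$ exactly as in Fig.~\ref{fig4}: 5G-only users occupy $[0, F^{-1}(x_1^*)]$, 5G+WiFi users lie in the middle interval, and opt-outs (if any) sit in the tail.

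Second, I would impose boundary indifference conditions to pin down the cutoffs. The marginal user at $\theta = F^{-1}(x_1^*)$ is indifferent between 5G-only and 5G+WiFi, yielding $u_1(F^{-1}(x_1^*)) = \tilde{u}_2(F^{-1}(x_1^*))$. If a non-empty tail of opt-outs exists, i.e., $x_1^* + x_2^* < 1$, the marginal user at $\theta = F^{-1}(x_1^*+x_2^*)$ is indifferent between 5G+WiFi and the reservation payoff, giving $\tilde{u}_2(F^{-1}(x_1^*+x_2^*)) = \bar{u}$; this produces \eqref{e34}. If instead every user subscribes, i.e., $x_1^* + x_2^* = 1$, the second equation collapses to the boundary condition together with the feasibility requirement $\tilde{u}_2(F^{-1}(1)) \geq \bar{u}$, so that the $\theta = 1$ user is not worse off than opting out; this produces \eqref{e35}. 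Selection of the largest $x_2^*$ among all possible roots follows the same coordination argument as in the uncongested case of Proposition~\ref{prop:TMC-2}: the positive WiFi externality through $\alpha x_2$ in coverage can generate multiple fixed points, and the WiFi-operator-preferred (and coordination-consistent) equilibrium is the one with maximal WiFi participation.

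The special case $\beta \geq N/Q$ would be handled by a direct sign comparison. With $V_1 = V_2 = V$, subtracting \eqref{new_u2} from \eqref{e3} gives
\begin{equation*}
u_1(\theta) - \tilde{u}_2(\theta) = p_2 + \alpha x_2 \theta \left( \beta - \frac{N\bigl(x_1 + x_2(1-\alpha x_2)\bigr)}{Q} \right).
\end{equation*}
Because $x_1 + x_2(1-\alpha x_2) \leq 1$, the bracketed factor is non-negative whenever $\beta \geq N/Q$, and together with $p_2 \geq c \geq 0$ this forces $u_1(\theta) \geq \tilde{u}_2(\theta)$ for every $\theta \in [0,1]$. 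Hence no user strictly prefers 5G+WiFi, so $x_2^* = 0$. The system then degenerates to the benchmark single-network setting of Section~\ref{section:TMC-2.1}, and Lemma~\ref{lemma:TMC-1} immediately delivers $x_1^* = \bar{x}_1^*$.

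The main obstacle is the equilibrium-selection step. The added WiFi congestion term $\alpha x_2 \beta \theta$ acts as a second-order negative externality that partially offsets the positive coverage externality in $\alpha x_2$, so the coupled system in \eqref{e34}--\eqref{e35} becomes a polynomial in $x_2^*$ of even higher degree than the quartic encountered in Proposition~\ref{prop:TMC-3}. Care is therefore needed both to establish existence of a root in $(0,1)$ and to justify that the largest one is the selected equilibrium, since monotonicity of the net externality in $x_2$ is no longer automatic and spurious intermediate roots must be ruled out via a stability/tâtonnement argument.
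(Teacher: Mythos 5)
Your proposal is correct and follows essentially the same route the paper intends: Lemma~\ref{lemma:TMC-5} is obtained by repeating the boundary-indifference derivation of Proposition~\ref{prop:TMC-2} with $u_2$ replaced by $\tilde{u}_2$, and the paper states it without a separate printed proof. Your explicit computation $u_1(\theta)-\tilde{u}_2(\theta)=p_2+\alpha x_2\theta\bigl(\beta-\tfrac{N(x_1+x_2(1-\alpha x_2))}{Q}\bigr)\geq 0$ for $\beta\geq N/Q$ correctly justifies the degenerate case $x_2^*=0$, $x_1^*=\bar{x}_1^*$ that the paper only asserts, and your caveat about root multiplicity does not affect the lemma since it merely characterizes the equilibrium as the largest admissible root rather than claiming uniqueness.
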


The structure of users' subscription equilibrium in Stage II is still similar to Fig. \ref{bg} with $\beta = 0$. But after examining \eqref{e34} and \eqref{e35}, the WiFi congestion require both 5G and WiFi prices to be smaller to have full subscription  (i.e., $x_1^* + x_2^* = 1$) and positive WiFi subscription (i.e., $x_2^* > 0$, $x_1^* + x_2^* < 1$), and $x_2^*$ reduces due the WiFi congestion.

\begin{figure}
\centerline{\includegraphics[scale=0.28]{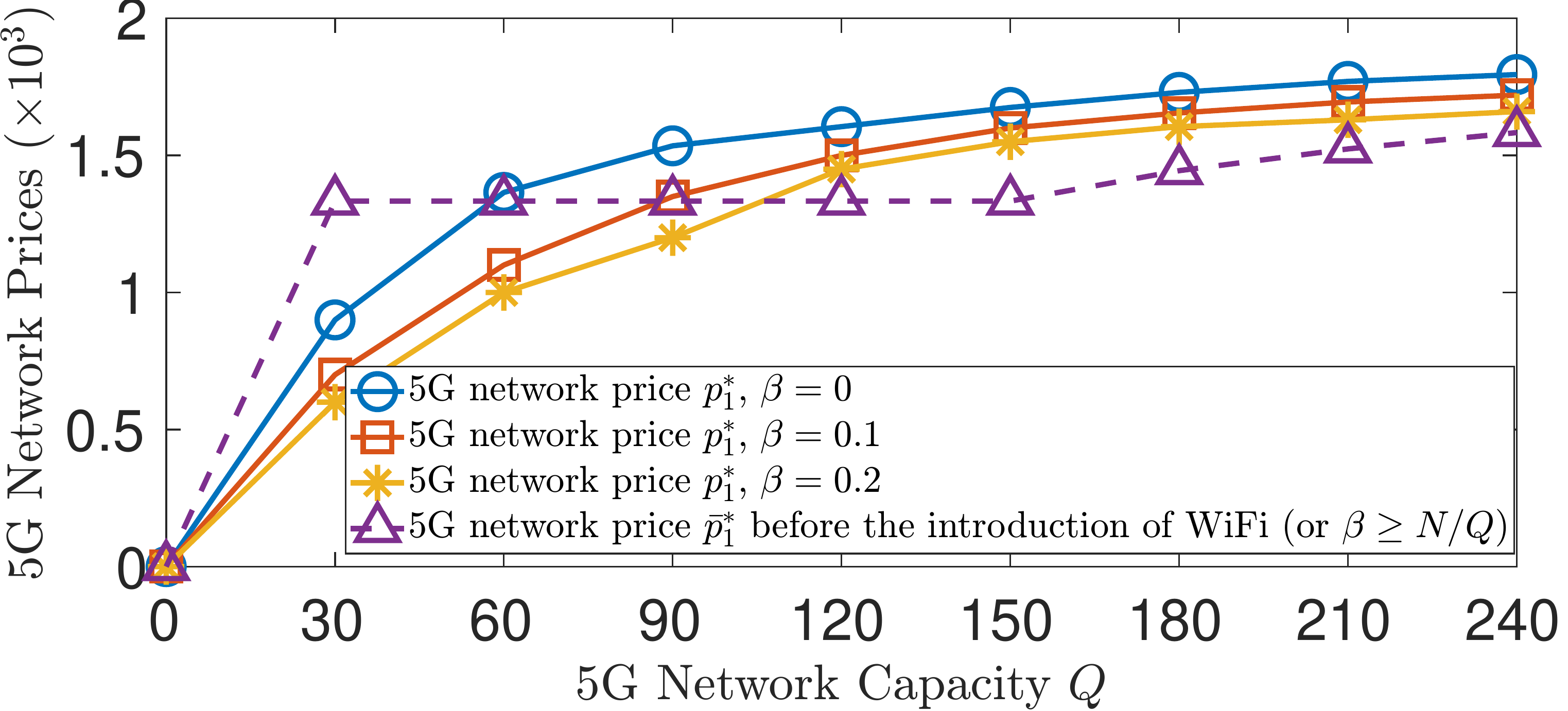}}
\caption{5G network's annual prices $\bar{p}_1^*$ in \eqref{p_5} and $p_1^*$ before and after the introduction of the crowdsourced WiFi, versus the 5G capacity $Q$. Here we fix $N = 10^5$, {$V = 3000$}, $\bar{u} = 1000$, {$c = 50$}, and $\alpha=0.5$ with $3N/(V-\bar{u}) = 150$. Note that the benchmark $\bar{p}_1^*$ before the introduction of the crowdsourced WiFi is equivalent to the crowdsourced WiFi case with $\beta\geq N/Q$.}
\label{fig100}

\end{figure}

\subsection{Stage I Analysis on 5G Price and Profit}\label{section:TMC-7.2}
Now we turn to Stage I to analyze the 5G operator's price and profit at the equilibrium. By using a similar analysis as in Lemma~\ref{lemma:TMC-3}, we have the similar result.

\begin{proposition}\label{prop:TMC-10}
After considering the WiFi congestion, at the equilibrium of the whole dynamic game, the 5G operator obtains at least the same profit after the introduction of the crowdsourced WiFi, i.e., $\pi_1^* \geq \bar{\pi}_1^*$ with $\bar{\pi}_1^*$ in \eqref{pi_5}. 
\end{proposition}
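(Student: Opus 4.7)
The plan is to mimic the three-case argument used in the proof of Lemma~\ref{lemma:TMC-3}, but with the modified 5G+WiFi payoff $\tilde{u}_2(\theta)$ from \eqref{new_u2} in place of $u_2(\theta)$. As before, it suffices to analyze the suboptimal strategy in which the 5G operator keeps his benchmark price $p_1=\bar{p}_1^*$ after the introduction of the crowdsourced WiFi; if even this suboptimal choice yields weakly more profit than $\bar{\pi}_1^*$, then the 5G operator's best-response price at the equilibrium must yield a weakly greater profit.

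Conditioned on $p_1=\bar{p}_1^*$, I would split according to the Stage~II equilibrium fractions $(x_1^*,x_2^*)$ characterized by Lemma~\ref{lemma:TMC-5}. Case~1 is when $x_2^*=0$: by Lemma~\ref{lemma:TMC-5} this collapses to the benchmark problem, so the 5G profit is unchanged. Case~3 is when $x_2^*>0$ and $x_1^*+x_2^*(1-\alpha x_2^*)>\bar{x}_1^*$; here $1-\alpha x_2^*<1$ directly forces $x_1^*+x_2^*>\bar{x}_1^*$, so the total 5G subscriber count strictly increases at the same price, yielding a strictly greater profit.

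The bulk of the work is Case~2, i.e.\ $x_2^*>0$ and $x_1^*+x_2^*(1-\alpha x_2^*)\le \bar{x}_1^*$. In analogy with Lemma~\ref{lemma:TMC-3}, I would invoke Lemma~\ref{lemma:TMC-4} to conclude that any user with $\theta>x_1^*$ who joined 5G+WiFi at equilibrium satisfies $\tilde{u}_2(\theta)>u_1(\theta)$. Applying this at $\theta=\bar{x}_1^*$ together with the benchmark indifference identity $V-\tfrac{N}{Q}(\bar{x}_1^*)^2-\bar{p}_1^*=\bar{u}$ yields
\begin{equation*}
u_1(\bar{x}_1^*)=V-\tfrac{N}{Q}\bigl(x_1^*+x_2^*(1-\alpha x_2^*)\bigr)\bar{x}_1^*-\bar{p}_1^*\ \ge\ V-\tfrac{N}{Q}(\bar{x}_1^*)^2-\bar{p}_1^*=\bar{u},
\end{equation*}
so $\tilde{u}_2(\bar{x}_1^*)>\bar{u}$ as well. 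Since $\tilde{u}_2(\theta)$ in \eqref{new_u2} remains linear and strictly decreasing in $\theta$ (the WiFi-congestion term $-\alpha x_2^*\beta\theta$ only sharpens the slope), the equilibrium boundary for 5G+WiFi participation characterized by \eqref{e34}--\eqref{e35} must satisfy $x_1^*+x_2^*\ge \bar{x}_1^*$. Therefore the 5G operator serves at least $\bar{x}_1^*$ users at price $\bar{p}_1^*$, giving $\pi_1(\bar{p}_1^*)\ge \bar{\pi}_1^*$, which proves the claim.

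The main obstacle I foresee is that the extra $-\alpha x_2^*\beta\theta$ term could in principle reverse the Stage~II ordering between $u_1$ and $\tilde{u}_2$ when $\beta$ is large, potentially breaking the Case~2 argument. This is exactly the regime captured by Lemma~\ref{lemma:TMC-5}, where $\beta\ge N/Q$ forces $x_2^*=0$ and we fall back into Case~1. For smaller $\beta$, the equilibrium condition of Lemma~\ref{lemma:TMC-4} ensures $\tilde{u}_2(\theta)>u_1(\theta)$ wherever users opt into WiFi, so the three-case bookkeeping inherited from Lemma~\ref{lemma:TMC-3} carries over without any new calculation.
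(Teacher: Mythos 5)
Your proposal is correct and follows essentially the same route as the paper's Appendix G: fix $p_1=\bar{p}_1^*$, split into the same three cases on $(x_1^*,x_2^*)$, and in Case 2 use $\tilde{u}_2(\bar{x}_1^*)>u_1(\bar{x}_1^*)\geq\bar{u}$ together with the monotonicity of $\tilde{u}_2$ in $\theta$ to conclude $x_1^*+x_2^*\geq\bar{x}_1^*$. Your observation that large $\beta$ collapses to Case 1 via Lemma~\ref{lemma:TMC-5} matches the paper's handling of the extreme-congestion regime.
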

\begin{proof}
See Appendix G.
\end{proof}

In the extreme case of severe WiFi congestion $\beta \geq N/Q$, according to Lemma~\ref{lemma:TMC-4}, $x_2^* = 0$ and the 5G network gains exactly the same profit after the introduction of the crowdsourced WiFi. While $\beta$ is small, some congestion-sensitive users will add-on WiFi to help offload 5G traffic, reducing the 5G congestion and enabling the 5G operator to charge a higher price to gain more profit. Note that as $\beta$ becomes trivial, we have the same results as in Propositions~\ref{prop:TMC-5}-\ref{prop:TMC-7}.   

Due to more involved analysis, next  we use numerical results to examine the impact of WiFi congestion.

Figure \ref{fig100} shows the 5G network's equilibrium prices $\bar{p}_1^*$ in \eqref{p_5} before introducing WiFi and $p_1^*$ after introducing WiFi, as functions of the 5G capacity $Q$ and WiFi congestion factor $\beta$, respectively. Here, $Q$ ranges from 0 to 240, and $\beta< N/Q$ holds for the three pricing curves after introducing WiFi. Then we observe that  $p_1^*$ is smaller than $\bar{p}_1^*$ for small $Q$ to motivate users' WiFi offloading, and it is greater than $\bar{p}_1^*$ for non-small $Q$.  This is similar to Propositions~\ref{prop:TMC-5}, \ref{prop:TMC-8}, \ref{prop:TMC-9} and Fig. \ref{fig10011} without WiFi congestion. 

Figure \ref{fig100} also shows that all 5G prices decrease with WiFi congestion factor $\beta$ as long as $\beta<N/Q$. As $\beta$ increases, less users will choose to add on WiFi to offload and the 5G congestion is less mitigated. Thus, the 5G price has to be reduced to attract users.

Figure \ref{fig11} shows the 5G operator's equilibrium profits $\bar{\pi}_1^*$ in \eqref{pi_5} before introducing WiFi and $\pi_1^*$ after introducing WiFi, as functions of $Q$ and the crowdsourced WiFi's congestion factor $\beta$. All 5G profits are increasing in $Q$ due to the increased 5G pricing in Fig. \ref{fig100}.

Figure \ref{fig11} also shows that the 5G profit improves from the traditional 5G-only case due to the crowdsoruced WiFi, which is consistent with Proposition~\ref{prop:TMC-10}. As $\beta$ increases, this profit gain for the 5G operator becomes less obvious. As an extreme case of $\beta\geq N/Q$, there is no difference from the benchmark case of 5G only.

 \begin{figure}
\centerline{\includegraphics[scale=0.28]{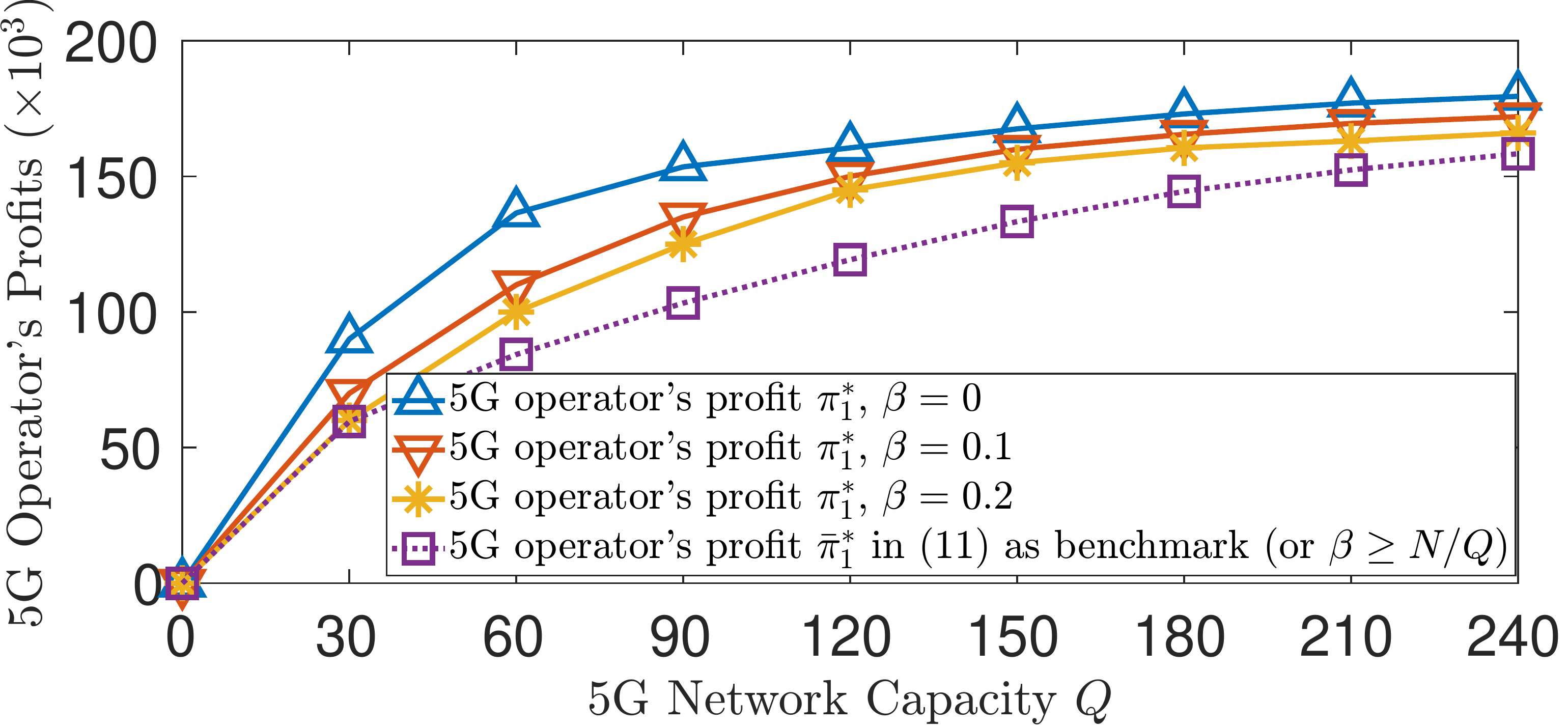}}
\caption{5G operator's profits $\bar{\pi}_1^*$ in \eqref{pi_5} and $\pi_1^*$ before and after the introduction of the crowdsourced WiFi. Here we vary the 5G capacity $Q$ and the WiFi congestion factor $\beta$, and fix $N = 10^5$, { $V = 3000$}, $\bar{u} = 1000$, {$c = 50$}, and $\alpha = 0.5$ with $3N/(V-\bar{u}) = 150$. Note that the benchmark $\bar{\pi}_1^*$ before the introduction of the crowdsourced WiFi is equivalent to the crowdsourced WiFi case with $\beta\geq N/Q$.}
\label{fig11}
\end{figure}

\section{Conclusion}\label{section:TMC-8}

In this paper, we investigate how the 5G and crowdsourced WiFi networks of positive and negative externalities, respectively, co-exist from an economic perspective. We propose a dynamic game theoretic model to analyze the interaction among the 5G operator, the crowdsourced WiFi operator, and users: in the first stage the 5G and WiFi operators decide their service prices for maximizing their own profits according to the network capacity and deployment cost, and in the second stage users of different congestion sensitivities decide to subscribe to 5G, 5G+WiFi or neither. Our user choice model with WiFi's complementarity for 5G allows users to choose both services, and departs from the traditional economics literature where a user chooses one over another alternative. By backward induction, we analytically prove at the equilibrium that the 5G operator gains more profit. To motivate users to add on the WiFi service for WiFi offloading, we prove that the 5G operator with small capacity will purposely lower his price to his users. We also prove that  the 5G operator with non-small capacity will charge users more after the introduction of crowdsourced WiFi, and all the users' payoffs can decrease. Furthermore, our main results on user subscription, and 5G price and profit still hold when considering WiFi congestion.

{In the future, there are some directions to extend this work. For example,  it may be interesting to extend the 5G operator's  flat-rate pricing to usage-based pricing. The usage-based pricing scheme better controls users' data consumption in 5G, yet the 5G operator still needs WiFi offloading to keep its service quality.   }



%




\vspace{-30pt}

\begin{IEEEbiography}[{\includegraphics[width=1in,height=1.25in,clip,keepaspectratio]{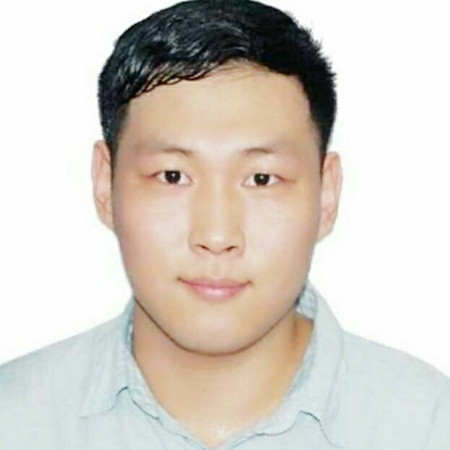}}]{Shugang Hao}
 is currently working toward the PhD degree in the Pillar of Engineering and System Design at Singapore University of Technology and Design. He received B.Eng. degree in Information Engineering from South China University of Technology in 2017. 
His research interests are network economics and mechanism design. 
His works appeared in IEEE JSAC and ACM MobiHoc symposium. 
\end{IEEEbiography}


\begin{IEEEbiography}[{\includegraphics[width=1in,height=1.25in,clip,keepaspectratio]{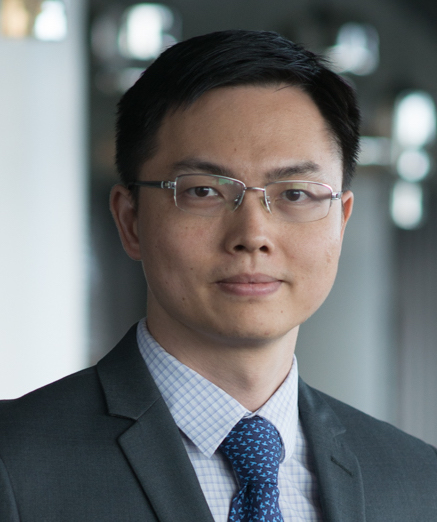}}]{Lingjie Duan}
(S'09-M'12-SM'17) received the
Ph.D. degree from The Chinese University of
Hong Kong in 2012. He is an Associate Professor
of Engineering Systems and Design with the
Singapore University of Technology and Design
(SUTD). In 2011, he was a Visiting Scholar at University
of California at Berkeley, Berkeley, CA,
USA. His research interests include network economics
and game theory, cognitive communications,
and cooperative networking. He is an
Editor of IEEE Transactions on Wireless Communications. He also served as a Guest
Editor of the IEEE Journal on Selected Areas in Communications. He received the SUTD Excellence
in Research Award in 2016 and the 10th IEEE ComSoc Asia-Pacific
Outstanding Young Researcher Award in 2015.
\end{IEEEbiography}

\vfill

\newpage


\appendices

\section{Proof of Corollary~\ref{coro:TMC-1}}\label{section: TMC-Appendix-A}
In Corollary~\ref{coro:TMC-1}, the requirement for $x_2^* > 0$ can be obtained from Proposition~\ref{prop:TMC-3} and Fig. \ref{bg}, and we skip the details here. 
To show the right-handed side (denoted as $g_1(\alpha)$) of \eqref{c_eq}  increases with $\alpha$, we first check its partial derivative with respect to $\alpha$ as follows:
\begin{equation*}
   g_1'(\alpha) = \frac{18\alpha^2\frac{V_1-\bar{u}}{3N/Q} + 6\alpha\sqrt{\frac{V_1-\bar{u}}{3N/Q}} + 4\sqrt{1-3\alpha \sqrt{\frac{V_1-\bar{u}}{3N/Q}}}-4}{54\alpha^2\sqrt{1-3\alpha\sqrt{\frac{V_1-\bar{u}}{3N/Q}}}\sqrt{\frac{3Q}{N(V_1-\bar{u})}}}. 
\end{equation*}
To prove $g_1'(\alpha) \geq 0$, it is sufficient to prove the numerator of $g_1'(\alpha)$ is non-negative. 
By denoting $y := \alpha \sqrt{\frac{V_1-\bar{u}}{3N/Q}}\in[0, \frac{1}{3}]$ (to validate $1-3\alpha \sqrt{\frac{V_1-\bar{u}}{3N/Q}} \geq 0$ in the numerator), we rewrite the numerator of $g_1'(\alpha)$ as $g_2(y)$:
\begin{align*}
    g_2(y) = 18y^2 + 6y + 4(\sqrt{1-3y} - 1).
\end{align*}
To prove $g_2(y) \geq 0$ for $y \in [0, \frac{1}{3}]$, we further check the first and second derivatives of $g_2(y)$ over $y$ as:
\begin{equation*}
     g_2'(y) = 36y + 6 - \frac{6}{\sqrt{1-3y}}, \; g_2''(y) = 36  - \frac{9}{(1-3y)^\frac{3}{2}}.
\end{equation*}
We then have $g_2''(y) \geq 0$ for $y \leq \frac{1 - (\frac{1}{16})^\frac{1}{3}}{3}$ and have $g_2''(y) < 0$ for $y \in (\frac{1 - (\frac{1}{16})^\frac{1}{3}}{3}, \frac{1}{3}]$, which means $g_2'(y)$ first increases with $y \leq \frac{1 - (\frac{1}{16})^\frac{1}{3}}{3}$  and then decreases with $y \in (\frac{1 - (\frac{1}{16})^\frac{1}{3}}{3}, \frac{1}{3}]$. Since we have
\begin{equation}
     g_2'(0) = 0, \; g_2'(\frac{1}{3}) = -\infty, \nonumber
\end{equation}
there exists the unique solution $y_0 \in (0, \frac{1}{3})$ to continuous equation $g_2'(y) = 0$, and we have $g_2'(y) \geq 0$ for $y \leq y_0$ and $g_2'(y) < 0$ for $y \in (y_0, \frac{1}{3}]$. Therefore, $g_2(y)$ first increases with $y \leq y_0$ and then decreases with $y \in (y_0, \frac{1}{3}]$. Since $g_2(0) = g_2(\frac{1}{3}) = 0$, we have $g_2(y) \geq 0$ for any $y \in [0, 1/3]$. Accordingly, the numerator of $g_1'(\alpha)$ and $g_1'(\alpha)$ are both non-negative, and  $g_1(\alpha)$ or the right-hand-side of \eqref{c_eq} thus increases with $\alpha$.

\section{Proof of Proposition~\ref{prop:TMC-4}}\label{section: TMC-Appendix-B}
According to Corollary~\ref{coro:TMC-1}, if $Q < 3N/(V_1-\bar{u})$ and \eqref{c_eq} holds, we have $x_2^* > 0$ for $p_1 = \bar{p}_1^*$. Therefore,  it is sufficient to prove $\pi_1^* > \bar{\pi}_1^*$ under $x_2^* > 0$ and $Q < 3N/(V_1-\bar{u})$ in the special case of the 5G operation with $p_1 = \bar{p}_1^*$.  We then compare $\pi_1^*$ with $\bar{\pi}_1^*$, depending on whether $x_1^* + x_2^*=1$ or not.
\begin{itemize}
    \item Case of $x_1^* + x_2^* = 1$: We have $x_1^* + x_2^* > \bar{x}_1^*$ due to $\bar{x}_1^* < 1$ under $Q < 3N/(V_1-\bar{u})$ in \eqref{x_5}. Therefore, the 5G operator's subscriber fraction $x_1^*+x_2^*$ increases at the same price $p_1=\bar{p}_1^*$, leading to a strictly greater profit. 
 \item  Case of $x_1^* + x_2^* < 1$: According to \eqref{e11} in Lemma~\ref{lemma:TMC-2}, users with $\theta \leq x_1^*$ join 5G network only and those with $\theta \in (x_1^*, x_1^* + x_2^*]$ join 5G+WiFi. Since $x_1^* + x_2^* < 1$ and $p_1=\bar{p}_1^*$, we rewrite the two equations in \eqref{e7} as:
\begin{equation}
    p_2 = \alpha x_2^* \frac{N}{Q} \big(x_1^* + x_2^* (1 - \alpha x_2^*)\big)x_1^*, \label{a25}
    \end{equation}
    \begin{equation}
    V_1 \!-\! \frac{N}{Q}(1 - \alpha x_2^*) \big(x_1^* + x_2^*(1 - \alpha x_2^*)\big) (x_1^* + x_2^*) \!  - \! \bar{p}_1^* \! -\! p_2\! =\! \bar{u}. \label{a26}
\end{equation}
We substitute \eqref{a25} into \eqref{a26}, and simplify \eqref{a26} as:
\begin{equation}
    V_1 -\frac{N}{Q} \big(x_1^*+x_2^*(1 - \alpha x_2^*)\big)^2 - \bar{p}_1^* = \bar{u}. \label{a23}
\end{equation}
Recall that before introducing the crowdsourced WiFi, we have the cutoff user with $\theta = \bar{x}_1^*$ to obtain the fixed reservation payoff and is indifferent for  joining the 5G network or not:
\begin{equation}
    \bar{u}_1(\theta = \bar{x}_1^*) = V_1 - \frac{N}{Q} (\bar{x}_1^*)^2 - \bar{p}_1^* = \bar{u}. \label{a24}
\end{equation}
By comparing \eqref{a23} and \eqref{a24}, we conclude 
\begin{align*}
    x_1^* + x_2^*(1 - \alpha x_2^*) = \bar{x}_1^*,
\end{align*}
which implies $x_1^* + x_2^* > \bar{x}_1^*$ due to $1 - \alpha x_2^* < 1$. Therefore, the 5G operator's subscriber fraction $x_1^*+x_2^*$ increases at the same price $p_1=\bar{p}_1^*$, leading to a strictly greater profit. 
\end{itemize} 

\section{Proof of Proposition~\ref{prop:TMC-5}}\label{section: TMC-Appendix-C}
If no user adds on the crowdsourced WiFi, i.e., $x_2^* = 0$, the 5G operator faces the same situation as before the introduction of the crowdsourced WiFi. Thus, his price remains unchanged.

If \eqref{by8} holds  and $Q \geq 3N/(V_1-\bar{u})$, we have $x_2^* > 0$ and
 the 5G operator's profit function $\pi_1(p_1, x_2^*)$ in \eqref{a42} first increases with $p_1 \leq V_1 -\bar{u}- \frac{N}{Q}(1-\alpha \bar{x}_2^2)^2$  and we thus have 
 \begin{equation}
     p_1^* \geq V_1 -\bar{u}- \frac{N}{Q}(1-\alpha \bar{x}_2^2)^2 > V_1 - \bar{u}- \frac{N}{Q}\label{pp1}
 \end{equation}
 due to $\bar{x}_2 > 0$.
 
 
 If $Q \geq \frac{3N}{V_1-\bar{u}}$, from \eqref{p_5} we have $\bar{p}_1^* = V_1 -\bar{u} - N/Q$ before the introduction of the crowdsourced WiFi. Together with \eqref{pp1}, we have $p_1^*> \bar{p}_1^*$. 
 
\section{Proof of Proposition~\ref{prop:TMC-7}}\label{section: TMC-Appendix-D}
 According to Proposition~\ref{prop:TMC-5}, we have the 5G operator's price $p_1^* > \bar{p}_1^*$ and $x_2^* > 0 $  if $Q \geq \frac{3N}{V_1-\bar{u}}$ and \eqref{by8} holds. We then prove $\max\{u_1^*(\theta), u_2^*(\theta)\}<\bar{u}_1^*(\theta)$ according to $x_1^* + x_2^* = 1$ or not.
\begin{itemize}
    \item Case of  $x_1^*+x_2^* = 1$: with full participation, to maximize $\pi_1$ in \eqref{a42}, we have $x_2^* = \bar{x}_2$ and
\begin{equation*}
    p_1^*(x_2^*) = V_1 - \bar{u} -\frac{N}{Q} (1 - \alpha (x_2^*)^2)^2.
\end{equation*}
By substituting this price into $u_1(\theta)$ in \eqref{e3}, we have users' payoff with $\theta \leq x_1^*$ after the introduction of  the crowdsourced WiFi as:
\begin{equation}
    u_1^*(\theta) = \frac{N}{Q}(1-\alpha(x_2^*)^2)(1-\alpha(x_2^*)^2 - \theta) + \bar{u}. \label{n1}
\end{equation}
We then substitute $\bar{p}_1^*$ in \eqref{p_5} and  $\bar{x}_1^*$ in \eqref{x_5} under $Q \geq \frac{3N}{V_1-\bar{u}}$ into benchmark payoff $\bar{u}_1(\theta)$ in \eqref{e2} before the introduction of the crwodsourced WiFi:
\begin{equation}
    \bar{u}_1^*(\theta) = \frac{N}{Q}(1-\theta) + \bar{u}.\label{n2}
\end{equation}
By comparing \eqref{n1} and \eqref{n2}, we have $u_1^*(\theta) < \bar{u}_1^*(\theta)$ for any $\theta \leq x_1^*$ due to $x_2^* > 0$. 
Similarly, users with congestion sensitivity $\theta \in (x_1^*, 1]$ choose 5G+WiFi service and we rewrite payoff $u_2(\theta)$ in \eqref{e4} as:
\begin{equation}
    u_2^*(\theta) = \frac{N}{Q}(1-\alpha(x_2^*)^2)(1-\theta - \alpha x_2^* (1-\theta)) + \bar{u}.\label{n3}
\end{equation}
Comparing \eqref{n2} and \eqref{n3}, we have $u_2^*(\theta) < \bar{u}_1^*(\theta)$ for any $\theta \in (x_1^*, 1]$ due to $x_2^* > 0$. We then conclude $\max\{u_1^*(\theta), u_2^*(\theta)\}<\bar{u}_1^*(\theta)$ for any $\theta$.

\item Case of  $x_1^*+x_2^* < 1$: It is sufficient to prove $u_1^*(\theta) < \bar{u}_1^*(\theta)$ for $\theta \leq x_1^*$ and $u_2^*(\theta) < \bar{u}_1^*(\theta)$ for $\theta \in (x_1^*, x_1^*+x_2^*]$ under $p_1^* > \bar{p}_1^*$ and $x_2^* > 0$ due to $\bar{u}_1^*(\theta) \geq \bar{u} > \max\{u_1^*(\theta), u_2^*(\theta)\}$ for $\theta \in (x_1^* + x_2^*, 1]$.
For $\theta \leq x_1^*$, we have users' payoff $u_1^*(\theta)$ in \eqref{e3} after the introduction of the crowdsourced WiFi as follows:
\begin{equation}
    u_1^*(\theta) 
    \!=\! \frac{N}{Q}(x_1^* + x_2^*(1-\alpha x_2^*))(x_1^* + x_2^*(1-\alpha x_2^*)-\theta) \!+\!\bar{u} \label{re2}
\end{equation}
 due to 
 \begin{equation*}
    x_1^* + x_2^*(1 - \alpha x_2^*) =  \sqrt{\frac{V_1-\bar{u}-p_1^*}{N/Q}}.
\end{equation*}
 By comparing \eqref{n2} and \eqref{re2}, we obtain $u_1^*(\theta) < \bar{u}_1^*(\theta)$ for $\theta \leq x_1^*$ due to $x_1^* + x_2^* < 1$.

In the following we will prove $u_1^*(\theta) < \bar{u}_1^*(\theta)$ for $\theta \in (x_1^*, x_1^* + x_2^*]$. For $\theta \in (x_1^*, x_1^* + x_2^*]$, we have users' payoff $u_2^*(\theta)$ in \eqref{e4} after the introduction of the crowdsourced WiFi as follows:
\begin{align}
    &u_2^*(\theta) \nonumber \\
    &=\!  V_1 \!-\! p_1^*\! -\! \frac{N}{Q} \sqrt{\frac{V_1-\bar{u}-p_1^*}{N/Q}} \left((1-\alpha x_2^*)\theta + \alpha x_2^* x_1^*\right) \label{re6} \\
    &= \!\frac{N}{Q}(x_1^* + x_2^*(1-\alpha x_2^*))(x_1^* + x_2^* -\theta)(1-\alpha x_2^*)\! +\! \bar{u}\label{re7},
\end{align}
where \eqref{re6} holds due to 
\begin{equation*}
    p_2^*  = \alpha x_2^* \frac{N}{Q} \left(x_1^* + x_2^*(1-\alpha x_2^*)\right)x_1^*
\end{equation*}
and \eqref{re7} due to 
\begin{equation*}
    x_1^* + x_2^*(1 - \alpha x_2^*) =  \sqrt{\frac{V_1-\bar{u}-p_1^*}{N/Q}}.
\end{equation*}
By comparing \eqref{n2} and \eqref{re7}, we obtain $u_2^*(\theta) < \bar{u}_1^*(\theta)$ for $\theta \in (x_1^*, x_1^*+x_2^*]$ due to $x_1^* + x_2^* < 1$. We then conclude $\max\{u_1^*(\theta), u_2^*(\theta)\}<\bar{u}_1^*(\theta)$ for any $\theta$.
 
\end{itemize}

\section{Proof of Proposition~\ref{prop:TMC-8}}\label{section: TMC-Appendix-E}
\eqref{alpha} is equal to 
\begin{align}
    &4\alpha \bigg(\sqrt{\frac{1-\sqrt{\frac{V_1-\bar{u}}{3N/Q}}}{\alpha}}\bigg)^3 - 3\alpha\bigg(\sqrt{\frac{1-\sqrt{\frac{V_1-\bar{u}}{3N/Q}}}{\alpha}}\bigg)^2 \nonumber \\
    &- 2\sqrt{\frac{1-\sqrt{\frac{V_1-\bar{u}}{3N/Q}}}{\alpha}} + 1 > 0. \label{by2}
\end{align}
Denote $h(x_2) := 4\alpha x_2^3 - 3 \alpha x_2^2 - 2 x_2 + 1$, it is easy to prove $h(x_2) \geq 0$ for $x_2 \in [0, \hat{x}_2]$ and $h(x_2) < 0$ for $x_2 \in (\hat{x}_2, 1]$, where $\hat{x}_2 \in [0, 1]$ as the unique solution to \eqref{hatx}.
Together with \eqref{by2}, we have $ \hat{x}_2 > \sqrt{\frac{1-\sqrt{\frac{V_1-\bar{u}}{3N/Q}}}{\alpha}}$.
Since in \eqref{a42} $\bar{x}_2 \geq \hat{x}_2$, we further have $\bar{x}_2 > \sqrt{\frac{1-\sqrt{\frac{V_1-\bar{u}}{3N/Q}}}{\alpha}}$, which is equal to
\begin{equation}
    V_1 - \bar{u}- \frac{N}{Q}(1-\alpha \bar{x}_2^2)^2 > \frac{2}{3}(V_1-\bar{u}). \label{by3}
\end{equation}

With $c$ in \eqref{by8} and Proposition~\ref{prop:TMC-3}, we have $x_2^* > 0$. Recall that the profit objective of the 5G operator $\pi_1(p_1, x_2^*)$ as a function of his price $p_1$ and users' subscription to 5G+WiFi $x_2^*$ in \eqref{a42},
we then have the 5G operator's profit function $\pi_1(p_1, x_2^*)$ in \eqref{a42} first increases with $p_1 \leq V_1 - \bar{u} - \frac{N}{Q}(1-\alpha (\bar{x}_2)^2)^2$ and thus
\begin{equation}
    p_1^* \geq V_1 - \bar{u} - \frac{N}{Q}(1-\alpha \bar{x}_2^2)^2. \label{by1}
\end{equation}
With \eqref{by3} and \eqref{by1}, we have $p_1^* > \frac{2}{3}(V_1-\bar{u})$.
Since we have $\bar{p}_1^* = \frac{2}{3}(V_1-\bar{u})$ from Proposition~\ref{prop:TMC-1}, we conclude $p_1^* > \bar{p}_1^*$ under conditions in Proposition~\ref{prop:TMC-8}.

\section{Proof of Proposition~\ref{prop:TMC-9}}\label{section: TMC-Appendix-F}
 If $Q < N/(V_1-\bar{u})$, $\alpha < 1 - \sqrt{\frac{V_1-\bar{u}}{N/Q}}$ and \eqref{by5} holds,  we have the profit objective of the 5G operator as a function of his price $p_1$ and users' subscription to 5G+WiFi $x_2^*$ is
\begin{align*}
\pi_1(p_1, x_2^*) &= 
  N p_1 \!\big( \!\sqrt{\frac{V_1-\bar{u}-p_1}{N/Q}} + \alpha (x_2^*)^2 \!\big), \nonumber \\
  &0 < \!p_1  < V_1 - \bar{u} -\frac{N}{Q}(x_2^*)^2(1- \alpha x_2^*)^2, 
 \label{by4}  
\end{align*}
and the crowdsourced WiFi's profit objective $\pi_2$ is
\begin{equation*}
    \pi_2(p_1, p_2) = 
  N (p_2 - c) x_2^*, \;\; p_2 \geq c, \; 
 \label{by6}  
\end{equation*}
where we have
\begin{align*}
    x_2^* = 
    \begin{cases}
    \bigg\{x_2: \alpha x_2 \frac{N}{Q} \sqrt{\frac{V_1-\bar{u}-p_1}{N/Q}}\big(\sqrt{\frac{V_1-\bar{u}-p_1}{N/Q}}\!\!\!\!&- x_2 +\alpha x_2^2\big) \!-\! p_2
      \\    
      = 0, x_2 > \frac{1 - \sqrt{1 - 3\alpha \sqrt{\frac{V_1-\bar{u}-p_1}{N/Q}}}}{3\alpha}.\bigg\}, &\!\!\!\text{if} \; p_1 < \frac{2}{3}(V_1-\bar{u}),\\
    0,  &\!\!\!\text{if} \; p_1 \geq \frac{2}{3}(V_1-\bar{u}).
    \end{cases}
\end{align*}
according to \eqref{by5} and Proposition~\ref{prop:TMC-3}. Note that if $p_1 \geq \frac{2}{3}(V_1-\bar{u})$, we have $\pi_1^* = \bar{\pi}_1^*$ with $p_1^* = \bar{p}_1^* = \frac{2}{3}(V_1-\bar{u})$ due to $x_2^* = 0$. If $p_1 < \frac{2}{3}(V_1-\bar{u})$, we have
\begin{align}
    &\pi_1(p_1, x_2^*) = 
  N p_1 \bigg( \sqrt{\frac{V_1-\bar{u}-p_1}{N/Q}} + \alpha (x_2^*)^2 \bigg) \nonumber \\
  &> N p_1 \bigg( \sqrt{\frac{V_1-\bar{u}-p_1}{N/Q}} + \frac{\big(1 - \sqrt{1 - 3\alpha \sqrt{\frac{V_1-\bar{u}-p_1}{N/Q}}}\big)^2}{9\alpha} \bigg) \label{by66}
\end{align}
due to $x_2^* > \frac{1 - \sqrt{1 - 3\alpha \sqrt{\frac{V_1-\bar{u}-p_1}{N/Q}}}}{3\alpha}$. Denote right-handed side of \eqref{by66} as
\begin{equation*}
    l(p_1) \!:=\! N p_1 \! \bigg( \! \sqrt{\frac{V_1-\bar{u}-p_1}{N/Q}} + \frac{\big(1 - \sqrt{1 - 3\alpha \sqrt{\frac{V_1-\bar{u}-p_1}{N/Q}}}\big)^2}{9\alpha} \! \bigg), 
\end{equation*}
By taking first derivative of $l(p_1)$ at $p_1 = \frac{2}{3}(V_1-\bar{u})$, we have
\begin{align*}
    l'(\frac{2}{3}(V_1-\bar{u})) = &-\frac{N}{9\alpha}
    \frac{\bigg(1 - \sqrt{1 - 3\alpha \sqrt{\frac{V_1-\bar{u}}{3N/Q}}}\bigg)^2}{\sqrt{1-3\alpha\sqrt{\frac{V_1-\bar{u}}{3N/Q}}}} < 0,
\end{align*}
which implies there exists $p_1' < \frac{2}{3}(V_1-\bar{u})$ leading to $l(p_1') > l(\frac{2}{3}(V_1-\bar{u}))$ since $l(p_1)$ is continuous. Since we have 
\begin{align*}
    &l(\frac{2}{3}(V_1-\bar{u})) \\
    &= N \frac{2}{3}(V_1-\bar{u}) \bigg( \sqrt{\frac{V_1-\bar{u}}{3N/Q}} + \frac{\big(1 - \sqrt{1 - 3\alpha \sqrt{\frac{V_1-\bar{u}}{3N/Q}}}\big)^2}{9\alpha} \bigg) \\
    &> N \frac{2}{3}(V_1-\bar{u}) \sqrt{\frac{V_1-\bar{u}}{3N/Q}} = \bar{\pi}_1^*.
\end{align*}
Together with \eqref{by66} and $l(p_1') > l(\frac{2}{3}(V_1-\bar{u}))$, we have
\begin{equation*}
    \pi_1(p_1', x_2^*(p_1')) > \bar{\pi}_1^*,
\end{equation*}
which implies $p_1^* < \frac{2}{3}(V_1-\bar{u})$. 
Since $\bar{p}_1^* = \frac{2}{3} (V_1-\bar{u})$ from Proposition~\ref{prop:TMC-1}, we conclude $p_1^* < \bar{p}_1^*$ under conditions in Proposition~\ref{prop:TMC-9}.

\section{Proof of Proposition~\ref{prop:TMC-10}}\label{section: TMC-Appendix-G}
It is sufficient to prove $\pi_1^* \geq \bar{\pi}_1^*$ in the special case of the 5G operation with  $p_1 = \bar{p}_1^*$. We divide our analysis into the following three cases, depending on $(x_1^*, x_2^*)$. 

\textit{Case 1.} If $x_2^* = 0$, telling that no user adds on the WiFi option at the equilibrium, the 5G operator faces the same situation as before the introduction of the crowdsourced WiFi. Thus, his profit remains unchanged.

\textit{Case 2.} If $x_2^* > 0$, $x_1^* + x_2^*(1  - \alpha x_2^*) \leq \bar{x}_1^*$, we have $x_1^* < \bar{x}_1^*$ due to $x_2^*(1  - \alpha x_2^*) > 0$ in the second condition above. According to Lemma~\ref{lemma:TMC-5}, users with $\theta > x_1^*$ have $\tilde{u}_2(\theta) > u_1(\theta)$. Therefore, we have $\tilde{u}_2(\bar{x}_1^*) > u_1(\bar{x}_1^*)$. Since
\begin{equation}
   u_1(\bar{x}_1^*)\! = \!V \! -\!\frac{N}{Q}(x_1^* + x_2^*( 1 - \alpha x_2^* ))\bar{x}_1^* - \bar{p}_1^* \!
   \geq \!V \!- \!\frac{N}{Q}(\bar{x}_1^*)^2 - \bar{p}_1^*, \label{a21E}
\end{equation}
where the inequality holds due to $x_1^* + x_2^*(1  - \alpha x_2^*) \leq \bar{x}_1^*$. According to Lemma~\ref{lemma:TMC-1}, before the introduction of the crowdsourced WiFi, we have users with $\theta = \bar{x}_1^*$ obtain the fixed reservation payoff from joining the 5G network, that is,  
\begin{equation}
 \bar{u}_1(\bar{x}_1^*) = V - \frac{N}{Q}(\bar{x}_1^*)^2 - \bar{p}_1^* = \bar{u}.   \label{a22E}
\end{equation}
Combining \eqref{a21E} and \eqref{a22E}, we have $u_1(\bar{x}_1^*) \geq \bar{u}_1(\bar{x}_1^*) = \bar{u}$. Since $\tilde{u}_2(\bar{x}_1^*) > u_1(\bar{x}_1^*)$, we have $\tilde{u}_2(\bar{x}_1^*) > \bar{u}$.
If $\bar{x}_1^* = 1$, we have $\tilde{u}_2(1) > \bar{u}$. Since $\tilde{u}_2(\theta)$ in \eqref{new_u2} is decreasing in $\theta$, we have $\tilde{u}_2(\theta) > \bar{u}$ for $\theta > x_1^*$ and thus have $x_1^* + x_2^* = \bar{x}_1^* = 1$. If $\bar{x}_1^* < 1$ and $x_1^* + x_2^* = 1$, we have $x_1^* + x_2^* > \bar{x}_1^*$. If $\bar{x}_1^* < 1$ and $x_1^* + x_2^* < 1$,  we further notice that $\tilde{u}_2(x_1^* + x_2^*) = \bar{u}$ with $x_1^* + x_2^* < 1$ according to \eqref{e34}. Since $\tilde{u}_2(\theta)$ in \eqref{new_u2} is decreasing in $\theta$ and $\tilde{u}_2(\bar{x}_1^*) > \bar{u}$, we have $x_1^* + x_2^* > \bar{x}_1^*$. 
Then we conclude $ x_1^* + x_2^* \geq \bar{x}_1^*$. This tells that the 5G operator's subscriber number non-decreases at the same price $p_1=\bar{p}_1^*$, leading to at least the same profit for the 5G operator. 
 
\textit{Case 3.} If $x_2^* > 0$ and $x_1^* + x_2^*(1  - \alpha x_2^*)  > \bar{x}_1^*$, we have $ x_1^* + x_2^* > \bar{x}_1^*$ due to $1  - \alpha x_2^* < 1$. Therefore, the 5G operator's subscriber number increases at the same price $p_1=\bar{p}_1^*$, leading to a greater profit.

\end{document}